\documentclass[a4paper,12pt]{article}
%\documentclass{amsart}

%packages
\usepackage{amsmath,amssymb,amsthm,amsfonts}
\usepackage{color}
\usepackage{color,graphicx}

\usepackage[bookmarks=false]{hyperref}

% Theorems definitions etc.
\theoremstyle{plain}
\newtheorem{prop}{Proposition}[section]
\newtheorem{theorem}[prop] {Theorem}

\theoremstyle{definition}
\newtheorem{lemma}[prop]{Lemma}

\newtheorem{corollary}[prop]{Corollary}
\newtheorem{conjecture}[prop]{Conjecture}

\newcounter{acount}
\newtheorem{assumption}[acount]{Assumption}

\theoremstyle{remark}
\newtheorem*{remark}{Remark}

% new commands: general
\newcommand{\N}{\mathbb{N}}
\newcommand{\R}{\mathbb{R}}
\newcommand{\Z}{\mathbb{Z}}

\newcommand{\dd}{\mathrm{d}} %integration d
\newcommand{\eps}{\epsilon}

\renewcommand{\eps}{\varepsilon}

\newcommand{\vect}[1]{\boldsymbol{#1}}

\DeclareMathOperator{\const}{const}
\DeclareMathOperator{\dist}{dist}

%for comments

%new commands: abbrev. for this paper 
\newcommand{\rmay}{R^\mathsf{May}} 
\newcommand{\rvir}{R^\mathsf{vir}}
\newcommand{\may}{\mathsf{May}}
\newcommand{\vir}{\mathsf{vir}}
\newcommand{\sat}{\mathrm{sat}}

\newcommand{\cl}{\mathrm{cl}}
 \newcommand{\conn}{\mathrm{conn}}

%opening
\title{Mayer and virial series at low temperature}
%\author{Sabine Jansen}
\author{Sabine Jansen\footnote{Weierstrass Institute 
for Applied Analysis and Stochastics, Leibniz Institute in Forschungsverbund Berlin e.V., Mohrenstr. 39, 
10117 Berlin, Germany. E-mail: \texttt{jansen@wias-berlin.de}. Phone: +49 30 20372 466.}}

\date{5 December 2011}

\begin{document}

\maketitle
\begin{abstract}
	We analyze the Mayer and virial series (pressure as a function of the activity resp. the  
		density) for a classical system of particles in continuous 
	configuration space at low temperature. Particles interact via a finite range 
	potential with an attractive tail.  
	We propose physical interpretations of the Mayer and virial series' radii of convergence, 
	valid independently of the question of phase transition: the Mayer radius
	corresponds to a fast increase  
	from very small to finite density, and the virial radius corresponds to 
	a cross-over from monatomic to polyatomic gas. Our results are consistent 
	with the Lee-Yang theorem for lattice gases and with
	 the continuum Widom-Rowlinson model.
\end{abstract}

%\tableofcontents

\section{Introduction}

The present work started from a seeming contradiction between 
results on cluster size distributions at low temperature and low density~\cite{jkm} and 
predictions from the Mayer activity expansions. It turned out that not only is there no contradiction, 
but moreover the interplay between the two different approaches considerably helps
 the physical interpretation of the classical expansions. 

The seeming contradiction is the following. Consider a classical system of particles, interacting via a stable potential with
 an attractive tail. As is well-known from the theory of Mayer expansions (see e.g., the classical textbook~\cite{ruelle-book}), 
at low density, the system behaves approximately like an ideal gas, suggesting that particles move more or less independently  
and are typically far away from each other. But~\cite{jkm} showed that when both the density and temperature are small, 
there can be regimes where particles form small compounds -- the system could behave, for example, like a diatomic gas. 
Such a behavior is, in fact, well established for quantum Coulomb systems~\cite{bm,cly,fefferman}. \\

In order to fit both pictures together, we investigate the temperature dependence of the 
Mayer and virial series. The temperature dependence confirms the intuitive relation between 
Mayer's series and Frenkel-Band theory of physical clusters, as exposed in~\cite[Chapter 5]{hill1}: 
the Mayer series $\beta p = \sum b_k z^k$ looks like the pressure of an ideal mixture of size $k$ components 
with respective activities $b_k z^k$. But unlike physical activities, the coefficients $b_k$ can be negative. 
The gas is therefore, at best, an \emph{exact} ideal mixture of \emph{fictitious} objects, ``\emph{mathematical}'' clusters. On the 
other hand, at low density, it is tempting to consider the system as an \emph{approximately} ideal mixture of 
``\emph{physical}'' clusters, groups of particles close in configuration space. 

Each physical cluster comes with a partition 
function over internal degrees of freedom. 
At low temperatures, it is natural to approximate the internal partition function as $\exp(-\beta E_k)$, with 
$E_k$ a ground state energy, and we expect
\begin{equation}
  \beta p \approx \sum_k z^k \exp(-\beta E_k). \label{eq:intro-app}
\end{equation}
We prove that at low temperatures, the Mayer series coefficients $b_k(\beta)$ indeed behave as 
$\exp(-\beta (E_k +o(1)))$ (Theorem~\ref{thm:maylt}), so that the approximation described above matches the exact Mayer series. 
As a consequence, we can easily understand the formation of compounds: if $\beta \to \infty$ 
at fixed chemical potential $\mu$, we have to maximize $(k\mu - E_k)$ in order to see which 
$k$ gives the dominant contribution. In particular, even when the Mayer series converges, 
at low temperatures the main contribution does not necessarily come from $k=1$. \\

With the approximate formula~\eqref{eq:intro-app} in mind, we prove several results on Mayer and 
virial series and low temperature statistical physics, stated 
in Sect.~\ref{sec:results}; the proofs are given in Sects~4--7.  The main results for the pressure-activity 
expansion are the following: Theorem~\ref{thm:maylt} relates the Mayer coefficients $b_k(\beta)$ to the 
ground state energies $E_k$, justifying Eq.~\eqref{eq:intro-app} as described above. Theorem~\ref{thm:mayer} shows that the radius of convergence of the pressure-activity expansion is approximately $\exp(\beta e_\infty)$ where $e_\infty = \lim (E_k/k)$, as expected from Eq.~\eqref{eq:intro-app}. 
 Theorem~\ref{thm:rhomu} shows that the value $\mu = e_\infty\approx \beta^{-1}\log R^\may(\beta)$  
 corresponds to a cross-over from an exponentially small density to a positive density, thus giving a soft physical 
interpretation to the radius of convergence. 

In the same spirit, Theorems~\ref{thm:cross-over} and~\ref{thm:virconv} give an interpretation of the
 radius of convergence of the pressure-density series. Here there are two possible scenarios: either 
the monatomic gas condenses right away to a solid, or there is an intermediate phase of a polyatomic gas. 
In the latter case the radius of convergence of the virial series is given by the cross-over from monatomic 
to polyatomic gas; in particular, the virial series ceases to converge before the Mayer series does. Props.~\ref{prop:virsin}
 and~\ref{prop:vircon} describe the low-temperature asymptotics of the coefficients of the virial series.

We should stress that the cross-overs corresponding to the radii of convergence of the Mayer and virial series do  not necessarily  
correspond to sharp phase transitions, and may very well be determined by singularities off the positive axis. Nevertheless, they reflect changes in low-temperature physical behavior. This is formally analogous 
to resonances in quantum mechanics, when Green's function singularities off the real axis do not qualify 
as eigenvalues, but can nonetheless affect the system's behavior.

\section{Setting}

We are interested in the statistical mechanics of a classical system of particles, in continuous configuration space, 
interacting via a pair potential $v(|x-y|)$. Thus let $v:[0,\infty) \to \R \cup \{\infty\}$ and 
\begin{equation*}
	U(x_1,\ldots,x_N):= \sum_{1 \leq i <j \leq N} v(|x_i-x_j|)
\end{equation*}
be the energy of a configuration of $N$ particles $x_1,\ldots,x_N\in \R^d$.  We  assume that the energy is \emph{stable}, i.e., there is 
a constant $B\geq 0$ such that 
\begin{equation} \label{eq:stability}
	\forall N\in \N,\ \forall (x_1,\ldots,x_N) \in (\R^d)^N:\ U(x_1,\ldots,x_N) \geq - B N.
\end{equation}
In addition, we assume that the pair potential has finite range, i.e., $v$ has compact support. 
For a given inverse temperature $\beta>0$ and $\Lambda \subset \R^d$, the canonical partition function is
\begin{equation*}
	Z_\Lambda(\beta,N):= \frac{1}{N!} \int_{\Lambda^N} e^{-\beta U(x_1,\ldots,x_N)} \dd x_1\cdots \dd x_N,
\end{equation*}
and the free energy per unit volume is 
\begin{equation*}
	f(\beta,\rho):= -  \lim \frac{1}{\beta |\Lambda|} \log Z_\Lambda(\beta,N). 
\end{equation*} 
The limit is along $N\to \infty$, $\Lambda = [0,L]^d$ with $L\to \infty$, $N/L^d \to \rho >0$. It is well-known that if the potential has no hard core ($r_\mathrm{hc} =0$), 
the limit exists and is finite for all $\rho>0$; if the potential has a hard core, 
then for  a suitable number $\rho_\mathrm{cp}>0$ (the \emph{close-packing density}), the 
limit is finite for $\rho<\rho_\mathrm{cp}$, and infinite for $\rho >\rho_\mathrm{cp}$. 
Moreover the free energy $f(\beta,\rho)$ is a convex function of the density $\rho$. 

The pressure at inverse temperature $\beta$ and chemical potential $\mu \in \R$ is
\begin{equation} \label{eq:legendre}
	p(\beta,\mu) := \sup_{0<\rho<\rho_\mathrm{cp}} \bigl(\rho \mu - f(\beta,\rho) \bigr).  
\end{equation}
We call $\rho(\beta,\mu)$ the maximizer in the previous relation, if it is unique. Because of convexity, the density $\rho(\beta,\mu)$ is an increasing function of the chemical potential $\mu$.

At fixed temperature, for sufficiently negative chemical potential, the pressure is an analytic 
function of the chemical potential and the activity $z$, with expansion 
\begin{equation} \label{eq:mayer}
	\beta p(\beta,\mu) = z + \sum_{n\geq 2} b_n(\beta) z^n,\qquad z = \exp(\beta \mu),
\end{equation}
the \emph{Mayer series}, and the density is given by 
\begin{equation*}
	\rho(\beta,\mu) = z + \sum_{n =2}^\infty n b_n(\beta) z^n. 
\end{equation*} 
Similarly, at low density, the free energy is strictly convex and analytic with expansion
\begin{equation} \label{eq:virial-fe}
	\beta f(\beta,\rho) = \rho (\log\rho - 1) -  \sum_{n \geq 2} d_n(\beta) \rho^n.
\end{equation}
Eq.~\eqref{eq:legendre} gives, for $\mu$ negative enough, 
\begin{equation} \label{eq:virial}
	\beta p(\beta,\mu)= \rho- \sum_{n\geq 2} (n-1) d_n(\beta) \rho^n, \quad \rho = \rho(\beta,\mu),  
\end{equation}
the \emph{virial series}. We would like to know how large $z$, or $\rho$, can be in those equations, and define 
\begin{align*}
	R^\may(\beta):= \sup\bigl\lbrace z>0 \mid \text{Eq.~\eqref{eq:mayer} is true 
		with absolutely convergent series} \bigr\rbrace,\\
	R^\vir(\beta):= \sup\bigl\lbrace \rho>0 \mid \text{Eq.~\eqref{eq:virial-fe} is true 
		with absolutely convergent series} \bigr\rbrace. 
\end{align*}
In principle, $\rmay(\beta)$ and $\rvir(\beta)$ can be smaller than the radius of convergence 
of the corresponding series: Eqs.~\eqref{eq:mayer} and~\eqref{eq:virial-fe} might cease to hold before the series start to diverge. We do not know of any concrete example in our setting, but  there is a simple type of situation where a similar phenomenon arises in mean-field or Landau theories: suppose that a free energy $F(m)$ is the convex envelope of some double-well potential, e.g., $W(m) = (m^2-1)^2$. Then $F(m) =0$ in $|m|<1$, but at $|m|>1$ it becomes equal to $W(m)>0$; in particular, $F(m)$ ceases to equal its (trivial) Taylor expansion around $0$ \emph{before} this expansion ceases to converge. 
 
For non-negative potentials, however, it is known~\cite{leb-pen,penrose} that the domains of convergence coincide with 
the domain of equality of Eqs.~\eqref{eq:mayer},~\eqref{eq:virial-fe}, and~\eqref{eq:virial}, so that in this case
  $\rmay(\beta)$ and $\rvir(\beta)$ are exactly equal to the radius of convergence.

Furthermore we define 
\begin{align}
	\mu_\mathrm{sat}(\beta)&:= \sup\{ \tilde \mu \in \R \mid p(\beta,\mu)\ \text{is analytic in }\mu<\tilde \mu \}  ,\label{eq:musat} \\
	\rho_\mathrm{sat}(\beta)&:= \sup\{ R \in (0,\rho_\mathrm{cp}) \mid f(\beta,\rho)\ \text{is analytic in } 0<\rho<R \}, \label{eq:rhosat}
\end{align}
the chemical potential and density at the onset of condensation, i.e., the quantities associated with saturated gas. In the absence of a phase transition -- for example, in one dimension --, $\mu_\sat(\beta) = \rho_\sat (\beta)= \infty$. 
Another quantity of interest is 
\begin{equation} \label{eq:rhomay}
  \rho^\may(\beta):= \sup\, \{ \rho(\beta,\mu) \mid \exp(\beta \mu)< \rmay(\beta) \}. 
\end{equation}
When there is no phase transition at $\mu^\may(\beta) = \beta^{-1}\log R^\may(\beta)$, then $\rho^\may(\beta) =\rho\bigl(\beta, \mu^\may(\beta) \bigr)$. When there is a phase transition,  the density may have a jump discontinuity and $\rho(\beta,\mu^\may(\beta))$ is no longer well-defined; Eq.~\eqref{eq:rhomay} states that in this case $\rho^\may(\beta)$ equals the left limit of $\rho(\beta,\mu)$ at $\mu^\may(\beta)$. 

\begin{figure}[htb] \label{figure}
\centering
\begin{picture}(0,0)%
\includegraphics{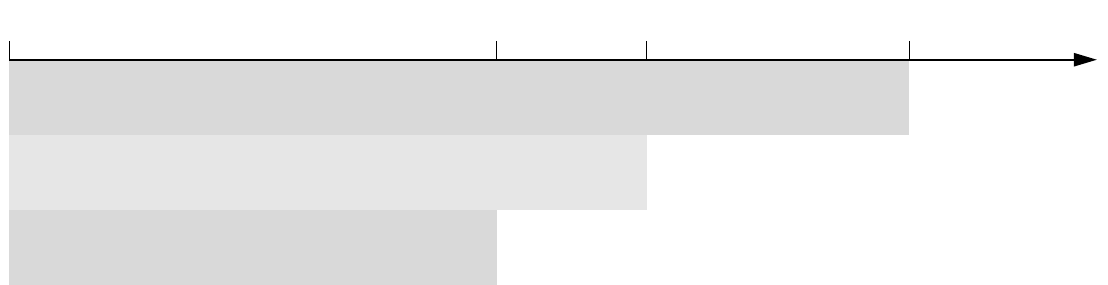}%
\end{picture}%
\setlength{\unitlength}{2368sp}%
\begingroup\makeatletter\ifx\SetFigFont\undefined%
\gdef\SetFigFont#1#2#3#4#5{%
 \reset@font\fontsize{#1}{#2pt}%
 \fontfamily{#3}\fontseries{#4}\fontshape{#5}%
 \selectfont}%
\fi\endgroup%
\begin{picture}(8850,2269)(1126,-4562)
\put(1126,-2461){\makebox(0,0)[lb]{\smash{{\SetFigFont{8}{9.6}{\rmdefault}{\mddefault}{\updefault}{\color[rgb]{0,0,0}0}%
}}}}
\put(8251,-2461){\makebox(0,0)[lb]{\smash{{\SetFigFont{8}{9.6}{\rmdefault}{\mddefault}{\updefault}{\color[rgb]{0,0,0}$\rho_{\rm sat}$}%
}}}}
\put(1351,-4336){\makebox(0,0)[lb]{\smash{{\SetFigFont{8}{9.6}{\rmdefault}{\mddefault}{\updefault}{\color[rgb]{0,0,0}absolute convergence of $\sum_n d_n \rho^n$}%
}}}}
\put(1351,-3736){\makebox(0,0)[lb]{\smash{{\SetFigFont{8}{9.6}{\rmdefault}{\mddefault}{\updefault}{\color[rgb]{0,0,0}absolute convergence of $\sum_n b_n z(\rho)^n$}%
}}}}
\put(1351,-3136){\makebox(0,0)[lb]{\smash{{\SetFigFont{8}{9.6}{\rmdefault}{\mddefault}{\updefault}{\color[rgb]{0,0,0}analyticity of $f(\beta,\rho)$}%
}}}}
\put(9976,-2836){\makebox(0,0)[lb]{\smash{{\SetFigFont{8}{9.6}{\rmdefault}{\mddefault}{\updefault}{\color[rgb]{0,0,0}$\rho$}%
}}}}
\put(4951,-2461){\makebox(0,0)[lb]{\smash{{\SetFigFont{8}{9.6}{\rmdefault}{\mddefault}{\updefault}{\color[rgb]{0,0,0}$R^{\rm vir}$}%
}}}}
\put(6076,-2461){\makebox(0,0)[lb]{\smash{{\SetFigFont{8}{9.6}{\rmdefault}{\mddefault}{\updefault}{\color[rgb]{0,0,0}$\rho^{\rm May}$}%
}}}}
\end{picture}%
\caption{\small Density axis, radii of convergence of the virial expansion ($R^{\rm vir}$) and of the cluster expansion ($\rho^{\rm May}$), and domain of analyticity of the free energy.}
\end{figure}

We have the general bounds, illustrated in Figure~\ref{figure},
\begin{equation*}
	\beta^{-1} \log \rmay(\beta) \leq \mu_\mathrm{sat}(\beta),\qquad 
	\max\Bigl(\rvir(\beta), \rho^\may(\beta)\Bigr) \leq \rho_\mathrm{sat}(\beta). 
\end{equation*}
With these notations, we can ask:
\begin{quote}
	\textbf{Question:} When are the previous  inequalities strict? 
	When they are strict, is it nevertheless possible to give physical meaning to $\rmay$ and $\rvir$, 
even though in this case $\rmay$ and $\rvir$ do not correspond to phase transitions (i.e., points of non-analyticity) ? 
\end{quote}
The main goal of this article is to show that the answer to the second question should be yes; moreover, in the presence of a phase transition for attractive potentials, the inequalites should be approximate equalities, in a sense specified in the Corollary~\ref{cor:where} and the Conjectures~\ref{conj:mayer} 
and~\ref{conj:virial} below. \\

We conclude this section with a description of the convergence criterion for the Mayer series that we shall use. 
Let  
\begin{equation} \label{eq:groundstate}
  E_N:= \inf_{x_1,\ldots,x_N\in (\R^d)^N} U(x_1,\ldots,x_N),\quad E_1 = 0,
\end{equation} 
be the ground state energy for $N$ particles (without any volume constraint), and 
\begin{equation}\label{eq:einf}
  e_\infty:= \inf_{N\in \N} \frac{E_N}{N} = \lim_{N\to \infty} \frac{E_N}{N} \leq 0 
\end{equation}
(note that $(E_N)$ is subadditive). 
The stability assumption on the pair potential ensures that $e_\infty >-\infty$, and Eq.~\eqref{eq:stability}
holds with $B =- e_\infty$ as optimal constant. 
We shall make repeated use of the following Theorem, which is a direct consequence of
~\cite[Theorem 2.1]{pog-ue}, see also~\cite{bf} for integrable potentials (without hard core).

\begin{theorem}[Mayer series estimates \cite{pog-ue}] 
	Let $v(|x-y|)$  be a stable pair interaction potential and $r_\mathrm{hc}\geq 0$ the radius of the hard core. Set 
	\begin{equation*}
		|||v|||:= |B(0,r_\mathrm{hc})| + \int_{\R^d\backslash B(0,r_\mathrm{hc})}\bigl|v( |x|)\bigr| \dd x
	\end{equation*}
	where $B(0,r_\mathrm{hc})$ is the ball of radius $r_\mathrm{hc}$ centered at $0$. Then 
	\begin{equation} \label{eq:maycrit} 
		R^\may(\beta) \geq \frac{e^{\beta e_\infty}}{\beta e |||v|||} 
	\end{equation}
	and for every $0 \leq z \leq \exp(\beta e_\infty)/(\beta e |||v|||)$,
	\begin{equation} \label{eq:remainder}
		\sum_{n\geq 2} n |b_n(\beta)| z ^{n-1} \leq  (e - 1)  e^{- \beta e_\infty}. 
	\end{equation}
	\end{theorem}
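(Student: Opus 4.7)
The plan is to apply Theorem 2.1 of Poghosyan--Ueltschi~\cite{pog-ue}, a Koteck\'y--Preiss-type convergence criterion for cluster expansions, specialized to a translation-invariant continuum gas with stable pair potential and stability constant $B = -e_\infty$. In its constant-decoration form the criterion reads: whenever $z > 0$ and $a \geq 0$ satisfy
$$ z\,e^{a}\int_{\R^d}\bigl|e^{-\beta v(|y|)}-1\bigr|\,\dd y\ \leq\ a\,e^{\beta e_\infty}, $$
the Mayer series converges absolutely and $\sum_{n\geq 2} n\,|b_n(\beta)|\,z^{n-1} \leq (e^a - 1)\,e^{-\beta e_\infty}$. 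Both conclusions of the theorem will then follow from one explicit estimate of the Mayer-$f$ integral combined with a one-variable optimization in $a$.

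First I would estimate the Mayer-$f$ integral by splitting at the hard-core radius. Inside $B(0,r_\mathrm{hc})$ the potential is $+\infty$, so $|e^{-\beta v(|y|)}-1| = 1$ and this region contributes exactly $|B(0,r_\mathrm{hc})|$. Outside the hard core, $v$ is finite and compactly supported, and the pointwise bound $|e^{-\beta v(|y|)}-1| \leq \beta\,|v(|y|)|$ applies once the exponential blow-up from the attractive part has been absorbed into the stability factor $e^{\beta e_\infty}$ already present on the right-hand side of the criterion. Summing the two pieces gives the uniform estimate $\int_{\R^d}|e^{-\beta v(|y|)}-1|\,\dd y \leq \beta\,|||v|||$.

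Plugging this bound into the criterion reduces the sufficient condition to $z \leq a\,e^{-a}\,e^{\beta e_\infty}/(\beta\,|||v|||)$, and the elementary maximization $\max_{a\geq 0}(a e^{-a}) = e^{-1}$, attained at $a = 1$, yields precisely~\eqref{eq:maycrit}. Feeding $a = 1$ into the tail estimate reproduces~\eqref{eq:remainder}. The main technical ingredient -- and really the only nonelementary step -- is the rearrangement that produces the factor $e^{\beta e_\infty}$ on the right-hand side of the PU criterion: for an attractive potential at low temperature, $|e^{-\beta v}-1|$ can be exponentially larger than $\beta|v|$, and it is only through the stability redistribution, built into~\cite[Theorem 2.1]{pog-ue}, that the crude linearization becomes legitimate. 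Once this rearrangement is granted, the remaining proof is the straightforward optimization described above.
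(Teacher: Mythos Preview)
The paper does not supply a proof of this theorem; it is quoted as a direct consequence of~\cite[Theorem~2.1]{pog-ue} (with a pointer to~\cite{bf} for the hard-core-free case) and used as a black box thereafter. Your overall plan---invoke the Poghosyan--Ueltschi convergence criterion and optimize the auxiliary constant $a$, attaining $a=1$---is therefore exactly what the paper has in mind, and the optimization step is correct.

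However, the intermediate reduction you write down has a genuine gap. The inequality
\[
\int_{\R^d}\bigl|e^{-\beta v(|y|)}-1\bigr|\,\dd y\ \leq\ \beta\,|||v|||
\]
is false on both sides of the hard-core split: inside $B(0,r_{\mathrm{hc}})$ the left side contributes $|B(0,r_{\mathrm{hc}})|$, not $\beta|B(0,r_{\mathrm{hc}})|$, and outside the hard core, wherever $v<0$ one has $|e^{-\beta v}-1|=e^{\beta|v|}-1>\beta|v|$, so the pointwise linearization fails precisely on the attractive tail. Your appeal to a ``stability redistribution'' that would absorb this into the factor $e^{\beta e_\infty}$ on the right is not how the mechanism works; no rearrangement of a criterion of the shape $z e^a\int|f|\leq a\,e^{\beta e_\infty}$ can convert $\int|f|$ into $\beta|||v|||$. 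The version of the criterion that actually yields~\eqref{eq:maycrit} is the Brydges--Federbush type tree-graph bound incorporated in~\cite{pog-ue}: it controls the Ursell functions not through $|e^{-\beta v}-1|$ but directly through $\beta|v|$ on soft bonds (and the indicator of overlap on hard-core bonds), together with a single stability factor $e^{-\beta e_\infty}$ per particle. In that formulation the quantity $\beta|||v|||$ is the integral appearing in the criterion from the outset, not an upper bound for $\int|f|$. Once you quote that form, your optimization in $a$ goes through verbatim and gives~\eqref{eq:maycrit} and~\eqref{eq:remainder}.
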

As an immediate consequence, we note that 
\begin{equation} \label{eq:musat-lb}
	\liminf_{\beta \to \infty} \mu_\sat (\beta) \geq \liminf_{\beta \to \infty} \beta^{-1} \log \rmay(\beta)  \geq e_\infty.
\end{equation}
Therefore every chemical potential $\mu<e_\infty$, as $\beta \to \infty$, eventually falls into the gas phase. 

\section{Results and conjectures} \label{sec:results}

Our standard assumptions on the potential are the following: 

\begin{assumption}[Minimal assumptions] \label{ass:pairpot} 
	$v:[0,\infty)\to \R\cup\{\infty\}$ satisfies the following assumptions: 
	\begin{itemize}
		\item The energy is stable in the sense of Eq.~\eqref{eq:stability}.	
		\item $v$ is everywhere finite except possibly for a hard core: there is a $r_\mathrm{hc}\geq 0$ such that $v(r)=\infty$ for $r<r_\mathrm{hc}$ and $v(r)<\infty$ 
		for $r>r_\mathrm{hc}$. 
		\item $v$ has compact support, $b:= \sup \{r>0\mid v(r) \neq 0\}<\infty$. 
		\item $v$ is  continuous in $[r_\mathrm{hc},\infty)$, i.e., either 
			$v(r_\mathrm{hc}) = \infty$ and $v(r) \to \infty$ as $r\searrow r_\mathrm{hc}$, 
			or $v(r_\mathrm{hc}) <\infty$ and $v(r) \to v(r_\mathrm{hc})$ as 
			$r\searrow r_\mathrm{hc}$. 			
		\item $v$ has an attractive tail: for suitable $\delta>0$ and all $r \in (b-\delta,b)$, 
			$v(r)<0$. 
	\end{itemize}
\end{assumption}
 Note that we allow for $v(r_\mathrm{hc})<\infty$, which is relevant for  Radin's soft disk potential~\cite{radin}. 
The continuity of the potential is assumed in order to simplify statements on the low-temperature 
asymptotics. In particular, for sufficiently large volumes,
\begin{equation*}
	\lim_{\beta \to \infty} \frac{1}{\beta} \log Z_{\Lambda}(\beta,N) = - E_N.
\end{equation*}
A similar statement holds of course without continuity assumptions, provided that 
we replace the infimum in the definition~\eqref{eq:groundstate} by an essential infimum; 
the continuity allows us to avoid this measure-theoretic complication. \\

Assumption~\ref{ass:pairpot} will be enough when working in the low density gas phase. For results that hold all the way up into a finite density region, we will make additional assumptions. 
We refer to every minimizer $(x_1,\ldots,x_N) \in (\R^d)^N$ of $U(x_1,\ldots,x_N)$ as an $N$-particle \emph{ground state}. Note that the attractive tail favors configurations where particles stick together. 

\begin{assumption}[Ground state geometry and H{\"o}lder continuity] \label{ass:uniform}
	For suitable $a>0$, $r_0 >0$, and every $N\in \N$, there is an
	$N$-particle ground state $(x_1,\ldots,x_N) \in (\R^d)^N$ such that 
	\begin{itemize}
		\item the interparticle distance is bounded below by $r_0$:  
			for all $i \neq j$, $|x_i-x_j| \geq r_0$;
		\item the ground state fits into a cube of volume $N a^d$:
			 $x_1,\ldots,x_N \in [0,N^{1/d} a]^d$.
	\end{itemize}
	Moreover $v(r)$ is uniformly H{\"o}lder continuous in $[r_0,\infty)$. 
\end{assumption}
The simplest example, in dimension two, of a potential satisfying Assumptions~\ref{ass:pairpot} and~\ref{ass:uniform} is Radin's soft disk potential~\cite{radin}, which involves a hard core and a suitable attractive part.  More general potential classes, again in dimension two, are given in~\cite{theil}. 

Assumption~\ref{ass:uniform} is  enough to ensure that various limits 
 can be interchanged. In particular, if 
\begin{equation*}
	e(\rho):= \lim_{\beta \to \infty} f(\beta,\rho)
\end{equation*}
is the ground state energy per unit volume at density $\rho$, then 
\begin{equation*}
	e_\infty = \min_{0<\rho< \rho_\mathrm{cp}} \frac{e(\rho)}{\rho}. 
\end{equation*}
Moreover $e(\rho)/\rho$ has a minimizer $\rho^*\leq 1/a^d$, i.e., the ground state has a finite preferred density. In~\cite{radin,theil}, $\rho^*$ is the density of particles in a simple hexagonal lattice.  \\

Our first result is about the low-temperature behavior of the Mayer coefficients 
and should be contrasted with the alternating sign property for 
non-negative potentials~\cite[Chapter 4]{ruelle-book}. Recall the ground state energies $E_N$ from Eq.~\eqref{eq:groundstate}. 
\begin{theorem}[Mayer coefficients at low temperature] \label{thm:maylt}
	Suppose that $v$ satisfies Assumption~\ref{ass:pairpot}.  
	Then, for every fixed $k$, as $\beta \to \infty$,  $b_k(\beta)$ is eventually positive, 
	and 
	\begin{equation}
		\lim_{\beta \to \infty} \beta^{-1} \log b_k(\beta) = - E_k.
	\end{equation}
\end{theorem}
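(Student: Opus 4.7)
The plan is to express $b_k(\beta)$ as an Ursell integral and then run a Laplace-type analysis around the $k$-particle ground state. Starting from the cluster-integral representation (fixing $x_1=0$),
$$
k!\, b_k(\beta) = \int_{(\R^d)^{k-1}} \phi^T(0,x_2,\ldots,x_k)\,\dd x_2\cdots\dd x_k,
$$
I use two equivalent forms of the Ursell function $\phi^T$. The first is the standard sum $\phi^T(x) = \sum_G \prod_{\{i,j\}\in G}\bigl(e^{-\beta v(|x_i-x_j|)}-1\bigr)$ over connected graphs $G$ on $[k]$; since $v$ has support in $[0,b]$, this vanishes unless the interaction graph of $x$ (pairs within distance $b$) is connected, because any connected $G$ must contain a cross-component edge and hence a zero factor. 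The effective integration domain is therefore contained in $\{x_i\in B(0,(k-1)b)\}$, a region of $\beta$-independent volume $V_k$. The second form, obtained by Möbius inversion of the set-partition identity $e^{-\beta U(x)}=\sum_\pi \prod_{A\in\pi} \phi^T(x_A)$, is the cumulant representation
$$
\phi^T(x_1,\ldots,x_k) = \sum_{\pi}(-1)^{|\pi|-1}(|\pi|-1)! \prod_{A \in \pi} e^{-\beta U(x_A)}.
$$

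For the upper bound I would use the cumulant form combined with iterated subadditivity of the ground-state energies: since $U(x_A)\geq E_{|A|}$ and $\sum_{A\in\pi} E_{|A|}\geq E_k$ for every partition $\pi$ of $[k]$, each summand is bounded in absolute value by $e^{-\beta E_k}$. Summing the finitely many partition terms gives $|\phi^T(x)|\leq C_k e^{-\beta E_k}$ pointwise on the cluster region, and integrating yields $|b_k(\beta)| \leq V_k C_k e^{-\beta E_k}/k!$, hence $\limsup_{\beta\to\infty}\beta^{-1}\log|b_k(\beta)| \leq -E_k$.

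For the matching lower bound and eventual positivity, I would localize near an \emph{irreducible} ground state $x^*$, i.e.\ one for which the strict gap
$$
\gamma := \min_{\pi \neq \{[k]\}} \left(\sum_{A \in \pi} U(x^*_A) - E_k\right) > 0
$$
holds. At such $x^*$ only the single-block partition $\pi=\{[k]\}$ realises the leading exponential rate $e^{-\beta E_k}$, while every other partition is damped by an additional factor $e^{-\beta\gamma}$. Continuity of $v$ provides a fixed small neighbourhood $N_\eta$ of $x^*$ on which $U(x)\leq E_k+C\eta$ and $\sum_A U(x_A)\geq E_k+\gamma/2$ for every $\pi\neq\{[k]\}$; the cumulant formula then gives
$$
\phi^T(x) \geq e^{-\beta(E_k+C\eta)} - C_k\, e^{-\beta(E_k+\gamma/2)} \geq \tfrac{1}{2}\,e^{-\beta(E_k+C\eta)} > 0
$$
for $\beta$ large and $C\eta<\gamma/2$. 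Integrating over $N_\eta$ produces $b_k(\beta)\geq c_\eta e^{-\beta(E_k+C\eta)}$, which is strictly positive for $\beta$ large and, after sending first $\beta\to\infty$ and then $\eta\to 0$, gives $\liminf \beta^{-1}\log b_k(\beta) \geq -E_k$.

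The main obstacle is justifying the existence of an irreducible ground state, i.e.\ that one can indeed take $\gamma>0$. This is where the attractive-tail hypothesis is essential: if a putative ground state decomposed into far-apart subclusters, a rigid translation of one component brings some cross-pair into the attractive zone $(b-\delta,b)$ without violating the hard-core constraint, producing a strict energy decrease and contradicting minimality. Edge cases in which subclusters touch exactly on the support boundary need an additional perturbation argument; once $\gamma>0$ is secured, the two-sided bound $\lim \beta^{-1}\log b_k(\beta)=-E_k$ and the eventual positivity of $b_k(\beta)$ follow from the Laplace estimates above.
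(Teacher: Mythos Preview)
Your cumulant route is a legitimate alternative to the paper's positive/negative-edge bookkeeping, and your upper bound is clean and correct: ordinary subadditivity $\sum_{A\in\pi}E_{|A|}\ge E_k$ indeed gives $|\phi^T(x)|\le C_k e^{-\beta E_k}$ uniformly on the bounded connected region, whence $\limsup\beta^{-1}\log|b_k(\beta)|\le -E_k$.

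The lower bound, however, has a genuine gap. From positivity of $\phi^T$ on a small neighbourhood $N_\eta$ of a ground state you conclude ``integrating over $N_\eta$ produces $b_k(\beta)\ge c_\eta e^{-\beta(E_k+C\eta)}$''. But $k!\,b_k(\beta)$ is the integral of $\phi^T$ over the \emph{entire} connected region, and outside $N_\eta$ your own pointwise bound only says $|\phi^T(x)|\le C_k e^{-\beta E_k}$. The complement can therefore contribute a negative amount of order $e^{-\beta E_k}$, which is \emph{larger} than the positive contribution $c_\eta e^{-\beta(E_k+C\eta)}$ coming from $N_\eta$. Neither eventual positivity nor the liminf bound follows. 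A second, related problem is your justification of $\gamma>0$: vanishing cross-interaction for some partition $\{A,A^c\}$ does not mean the pieces are geometrically separated (positive and negative cross-pair energies can cancel), so the ``translate one component into the attractive tail'' move is not available in general.

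The fix---and this is exactly what the paper does---is to replace the local configuration-level gap $\gamma$ by the global energy-level gap coming from \emph{strict} subadditivity: for potentials with an attractive tail there is $\eps>0$ with $E_{k_1}+\cdots+E_{k_r}\ge E_k+(r-1)\eps$ whenever $r\ge 2$ and $\sum k_i=k$. In your cumulant representation this gives, for every non-trivial $\pi$ and every $x$,
\[
\prod_{A\in\pi} e^{-\beta U(x_A)} \le e^{-\beta\sum_{A}E_{|A|}} \le e^{-\beta(E_k+\eps)},
\]
so $\phi^T(x)=e^{-\beta U(x)}+O\bigl(e^{-\beta(E_k+\eps)}\bigr)$ \emph{uniformly} on the connected region. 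Integrating yields $b_k(\beta)=Z_k^{\cl}(\beta)+O\bigl(e^{-\beta(E_k+\eps)}\bigr)$ with $Z_k^{\cl}$ the connected-configuration partition function; since ground states are connected, a standard Laplace argument gives $Z_k^{\cl}(\beta)=e^{-\beta(E_k+o(1))}$, and both positivity and the limit follow. Once you invoke strict subadditivity in this way, your argument and the paper's Lemma~4.1--4.2 route are essentially the same proof in two different bookkeeping languages; the cumulant formulation is arguably tidier, but it does not bypass the need for the strict energy gap.
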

Thus we may think of the Mayer series as 
\begin{equation} \label{eq:approx}
	\beta p \approx \sum_k z^k \exp(- \beta E_k).
\end{equation}
The subsequent results
 are best understood with the approximate formula~\eqref{eq:approx} in mind. 
We should stress that the approximation~\eqref{eq:approx} can be derived without using Mayer expansions, see~\cite{cly} for a quantum Coulomb systems result. Direct proofs are, in fact, much more instructive from a physical point of view; therefore Theorem~\ref{thm:maylt} 
should be seen as a verification of the consistency of the Mayer series with the approximation~\eqref{eq:approx}.

The next theorem builds upon a low temperature, low density result from~\cite{jkm}
which we briefly recall. Suppose that $v$ satisfies Assumptions~\ref{ass:pairpot} and~\ref{ass:uniform}. Then, for suitable $\beta_0,\rho_0, C_0>0$ and all  
 $\beta \geq \beta_0$ and  $\rho<\rho_0$, 
\begin{equation} \label{eq:free-energy}
	\left| f(\beta,\rho) - \rho \inf_{k \in \N} \frac{E_k+ \beta^{-1} \log \rho}{k} \right|
		\leq C_0 \rho \beta^{-1} \log \beta. 
\end{equation}
%The density $\rho_0$ can be chosen of the order of the preferred ground state density, 
%$1/a^d$, with $a$ as in Assumption~\ref{ass:uniform}, see Appendix~\ref{app:free-energy}. 
As explained in Appendix~\ref{app:free-energy}, $\rho_0$ should be thought of as 
the preferred ground state density (an upper bound is $\rho_0 \leq 1/a^d$ with $a$ 
as in Assumption~\ref{ass:uniform}). The inverse temperature $\beta_0$ is essentially 
determined by the condition $\exp(-\beta \nu^*) \leq 1/(a+R)^d$, where 
%
%We will also need the auxiliary quantity 
\begin{equation} \label{eq:nustar}
	\nu^*:= \inf_{k \in \N} (E_k - k e_\infty) \geq 0.
\end{equation}
For potentials with an attractive tail, we have $\nu^*>0$ \cite{jkm}. 

\begin{theorem}[Density increase around $\mu =e_\infty$] \label{thm:rhomu} 
	Suppose that $v$ satisfies Assumptions~\ref{ass:pairpot} and~\ref{ass:uniform}, 
	and that $x\mapsto v(|x|)$ is integrable in $|x|>r_\mathrm{hc}$.
	Let $C_0, \beta_0,\rho_0>0$ be such that Eq.~\eqref{eq:free-energy} 
	holds for all $\beta \geq \beta_0$ and $\rho<\rho_0$. 
	Then:
	\begin{itemize}
		\item For every $C>C_0$ and suitable $\beta_C \geq \beta_0>0$:  
			\begin{equation*}
			\forall \beta \geq \beta_C\quad  \forall 	\mu \geq e_\infty +  C \beta^{-1} \log \beta:\quad
					\rho(\beta,\mu) \geq \frac{C - C_0}{C + C_0}\,  \rho_0.		
			\end{equation*}
		\item For every $C>1$, all $n\in \N$ and suitable $\beta(n,C)$:
			\begin{equation*}
				\forall \beta \geq \beta(n,C)\quad \forall \mu \leq e_\infty - C \beta^{-1} \log \beta:\quad  \rho(\beta,\mu) \leq \beta^{-n}. 
			\end{equation*}
	\end{itemize}
\end{theorem}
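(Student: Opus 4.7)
The plan treats the two bullets separately. For the first, the high-$\mu$ bullet, I would use the free-energy approximation Eq.~\eqref{eq:free-energy} together with the variational formula Eq.~\eqref{eq:legendre}: compare the value of $\rho\mu - f(\beta,\rho)$ at $\rho = \rho_0$ with its supremum over $\rho \leq \lambda\rho_0$, where $\lambda := (C-C_0)/(C+C_0)$, and conclude that the maximizer cannot fall below $\lambda\rho_0$. For the second, low-$\mu$, bullet I would use the absolutely convergent Mayer series for $\rho(\beta,\mu)$, splitting it into a finite head of size $N=N(n,C)$ controlled via Theorem~\ref{thm:maylt} and a tail controlled via the uniform estimate Eq.~\eqref{eq:remainder}.

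For the high-$\mu$ bullet, set $g(\rho) := \inf_{k\in\N}(E_k+\beta^{-1}\log\rho)/k$, so that Eq.~\eqref{eq:free-energy} reads $|f(\beta,\rho) - \rho g(\rho)| \leq C_0\rho\beta^{-1}\log\beta$. Writing $E_k = ke_\infty + \nu_k$ with $\nu_k\geq\nu^*>0$ (attractive tail) and using $E_k/k\to e_\infty$, a short case analysis shows that $g(\rho)=e_\infty$ throughout $[e^{-\beta\nu^*},1)$. For $\beta$ large, $\rho_0$ lies in this range, so
\[
\rho_0 \mu - f(\beta,\rho_0) \geq \rho_0 (\mu - e_\infty - C_0\beta^{-1}\log\beta) \geq \rho_0(C-C_0)\beta^{-1}\log\beta.
\]
Conversely, for $\rho \in [e^{-\beta\nu^*}, \lambda\rho_0]$ one gets $\rho\mu - f(\beta,\rho) \leq \rho(\mu - e_\infty + C_0\beta^{-1}\log\beta) \leq \lambda\rho_0(C+C_0)\beta^{-1}\log\beta = \rho_0(C-C_0)\beta^{-1}\log\beta$, while for $\rho \in (0, e^{-\beta\nu^*})$ one uses $g(\rho) \geq e_\infty + \beta^{-1}\log\rho$ (which follows from $E_k/k\geq e_\infty$ and $\log\rho/k\geq\log\rho$) to bound $\rho\mu - f(\beta,\rho)$ by an ideal-gas expression whose maximum is at most $e^{\beta\mu - 1}/\beta$, exponentially small since $e_\infty<0$. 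Hence the supremum of $\rho\mu - f(\beta,\rho)$ over $\rho \leq \lambda\rho_0$ does not exceed its value at $\rho_0$, and the maximizer must satisfy $\rho(\beta,\mu) \geq \lambda\rho_0$.

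For the low-$\mu$ bullet, $z = e^{\beta\mu} \leq e^{\beta e_\infty}\beta^{-C}$ lies well inside $\rmay(\beta) \geq e^{\beta e_\infty}/(\beta e |||v|||)$ once $C > 1$ and $\beta$ is large. Given $n$, choose $N = N(n,C)$ with $(C-1)N \geq n$. From Eq.~\eqref{eq:remainder} at $R = e^{\beta e_\infty}/(\beta e|||v|||)$ one extracts the Cauchy-type bound $k|b_k(\beta)| \leq (e-1)e^{-\beta e_\infty} R^{-(k-1)}$; summing the resulting geometric series with $z/R \leq e|||v|||\beta^{1-C}$ yields $\sum_{k>N}k|b_k(\beta)|z^k \leq 2(e-1)(e|||v|||)^N\beta^{-C - N(C-1)} \leq \tfrac13\beta^{-n}$ for $\beta$ sufficiently large. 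For $k = 2,\ldots,N$, Theorem~\ref{thm:maylt} gives $|b_k(\beta)| \leq e^{-\beta(E_k-\epsilon)}$ for any $\epsilon > 0$ and $\beta$ large depending on $k$; choosing $\epsilon < \nu^*$ and using $E_k - k\mu \geq \nu^* + kC\beta^{-1}\log\beta$ gives $k|b_k(\beta)|z^k \leq k\, e^{-\beta(\nu^* - \epsilon)}\beta^{-kC}$, which is super-polynomially small. Finally $z$ itself is super-polynomially small because $e_\infty < 0$. Summing the three contributions yields $\rho(\beta,\mu) \leq \beta^{-n}$ for $\beta \geq \beta(n,C)$.

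The main obstacle is the scale compatibility in the low-$\mu$ bullet: the tail bound requires $N$ large (depending on $n,C$), while Theorem~\ref{thm:maylt} is only pointwise in $k$, so $\beta$ must beat the worst of the individual thresholds for $k \leq N$. Since $N$ is finite once $n$ and $C$ are fixed, this is consistent and produces the claimed $\beta(n,C)$. In the high-$\mu$ bullet, the only delicate step is the identification $g(\rho) = e_\infty$ on the middle range $[e^{-\beta\nu^*},1)$, which uses $\nu^*>0$ in an essential way; without it, the crossover picture would blur.
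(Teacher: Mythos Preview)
Your overall strategy matches the paper's proof closely: the first bullet is handled via the free-energy approximation~\eqref{eq:free-energy} and the Legendre relation~\eqref{eq:legendre}, the second via the Mayer series split into a finite head (controlled by Theorem~\ref{thm:maylt}) and a tail (controlled by~\eqref{eq:remainder}). The low-$\mu$ argument is essentially identical to the paper's and is correct.

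There is, however, a genuine slip in your high-$\mu$ argument. In the middle range $\rho\in[e^{-\beta\nu^*},\lambda\rho_0]$ you write
\[
\rho(\mu - e_\infty + C_0\beta^{-1}\log\beta)\ \le\ \lambda\rho_0\,(C+C_0)\,\beta^{-1}\log\beta,
\]
which requires $\mu - e_\infty \le C\beta^{-1}\log\beta$, i.e.\ the \emph{opposite} of the hypothesis $\mu\ge e_\infty + C\beta^{-1}\log\beta$. The same issue recurs in your low-$\rho$ range, where the ideal-gas bound $e^{\beta\mu-1}/\beta$ is only small when $\mu<0$, which again fails for large $\mu$. As written, your chain of inequalities is only valid at the single value $\mu = e_\infty + C\beta^{-1}\log\beta$.

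The repair is immediate, and you have two options. Either reduce to the boundary value $\mu = e_\infty + C\beta^{-1}\log\beta$ by the monotonicity of $\rho(\beta,\mu)$ in $\mu$ (stated right after Eq.~\eqref{eq:legendre}); then all your estimates go through verbatim. Or, as the paper does, keep $\mu$ symbolic and compare the two bounds directly: with $x:=\mu-e_\infty$ and $t:=\beta^{-1}\log\beta$, the inequality $\lambda\rho_0(x+C_0 t)\le \rho_0(x-C_0 t)$ is equivalent to $(1-\lambda)x\ge(1+\lambda)C_0 t$, i.e.\ $x\ge Ct$, which is exactly the hypothesis. The paper's write-up takes this second route, solving for $\mu$ in terms of $\rho$ and then invoking $\mu\ge e_\infty + C\beta^{-1}\log\beta$ to force $\rho\ge\lambda\rho_0$; your ``compare the value at $\rho_0$ with the sup over $[0,\lambda\rho_0]$'' packaging is a legitimate and slightly cleaner reorganisation of the same computation, once this slip is fixed.
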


In particular, for every fixed $\mu>e_\infty$, as $\beta \to \infty$, 
the density is bounded away from zero,
while for $\mu<e_\infty$, it vanishes exponentially fast (Eq.~\eqref{eq:rhodec} anticipates on Theorem~\ref{thm:cross-over}): 
\begin{alignat}{2}
	\notag \mu>e_\infty&:&\  \liminf_{\beta \to \infty} \rho(\beta,\mu) &\geq \rho_0 >0. \\
	\label{eq:rhodec} \mu<e_\infty&:& \rho(\beta,\mu) &= O( e^{-\beta \nu^*}).
\end{alignat}

\begin{remark}[Non-negative potentials]
	When $v\geq 0$, a similar change in the density behavior occurs around 
	$\mu =0$, as the following two examples illustrate.  
	For an ideal gas in continuum space, $\beta p =z$, $\rho =z$, $e_\infty=0$. 
	For a lattice gas with no interaction except the hard-core 
	on-site repulsion,
	\begin{equation*}
		\beta p(\beta,\mu) = \log(1+z),\quad  \rho(\beta,\mu) = \frac{z}{1+z},\quad 
		e_\infty = 0. 
	\end{equation*}
	As $\beta \to \infty$, if $\mu >0$ is fixed,  the density diverges (for the ideal gas) 
	or approaches the maximum density (for the lattice gas).
	For both the continuum and lattice gas, at fixed $\mu <0$, 
	the density goes to $0$ exponentially fast, but in contrast with the 
	attractive potential case Eq.~\eqref{eq:rhodec} there is no positive lower bound on the 
	rate of exponential decay, $\nu^*=0$. 
\end{remark}

A first consequence is an indication where the low temperature /  low density solid-gas transition is located, if such a phase transition takes place. 
\begin{corollary}[Where to look for a solid-gas transition] \label{cor:where}
	Under the assumptions of Theorem~\ref{thm:rhomu},
	if $\rho_\sat(\beta) \to 0$ as $\beta \to \infty$, then  
	\begin{equation*}
		\mu_\sat(\beta) = e_\infty + O(\beta^{-1} \log \beta) 
	\end{equation*}
	as $\beta \to \infty$. 
\end{corollary}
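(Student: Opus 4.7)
The plan is to prove two matching bounds, both of order $O(\beta^{-1}\log\beta)$, for $\mu_\sat(\beta)-e_\infty$.

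For the lower bound $\mu_\sat(\beta)\geq e_\infty - O(\beta^{-1}\log\beta)$, I would simply invoke the general inequality $\mu_\sat(\beta)\geq \beta^{-1}\log R^\may(\beta)$ together with the Penrose--Ueltschi estimate~\eqref{eq:maycrit}, giving
\begin{equation*}
 \mu_\sat(\beta)\;\geq\; \beta^{-1}\log R^\may(\beta)\;\geq\; e_\infty - \beta^{-1}\log\bigl(\beta e\,|||v|||\bigr).
\end{equation*}
This half does not need the hypothesis $\rho_\sat(\beta)\to 0$.

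For the upper bound, fix any $C>C_0$ and set $\mu_C(\beta):=e_\infty+C\beta^{-1}\log\beta$. I would argue by contradiction: if along some sequence $\beta_n\to\infty$ one has $\mu_\sat(\beta_n)>\mu_C(\beta_n)$, then $p(\beta_n,\,\cdot\,)$ is analytic at $\mu_C(\beta_n)$, so the density $\rho(\beta_n,\mu_C(\beta_n))$ is well-defined. The first bullet of Theorem~\ref{thm:rhomu} then forces
\begin{equation*}
 \rho\bigl(\beta_n,\mu_C(\beta_n)\bigr)\;\geq\; \frac{C-C_0}{C+C_0}\,\rho_0
\end{equation*}
for $n$ large. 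On the other hand I claim $\rho(\beta,\mu)\leq \rho_\sat(\beta)$ whenever $\mu<\mu_\sat(\beta)$; combined with the hypothesis $\rho_\sat(\beta_n)\to 0$ this contradicts the previous inequality, proving $\mu_\sat(\beta)\leq e_\infty+C\beta^{-1}\log\beta$ eventually. Since $C>C_0$ was arbitrary, the upper bound follows.

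The one step that requires a little care, and the only real obstacle, is the auxiliary claim $\rho(\beta,\mu)\leq \rho_\sat(\beta)$ for $\mu<\mu_\sat(\beta)$. I would establish it from the Legendre duality~\eqref{eq:legendre} between $p$ and $f$: convexity of $p$ in $\mu$ makes $\mu\mapsto \rho(\beta,\mu)$ nondecreasing wherever it is defined, and if $\rho(\beta,\mu_0)>\rho_\sat(\beta)$ for some $\mu_0<\mu_\sat(\beta)$ then taking the Legendre transform back would force $f(\beta,\,\cdot\,)$ to be analytic on a neighborhood of that density, contradicting the definition~\eqref{eq:rhosat} of $\rho_\sat(\beta)$. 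Once this monotonicity/duality bookkeeping is done, the two halves combine to give $\mu_\sat(\beta)=e_\infty+O(\beta^{-1}\log\beta)$.
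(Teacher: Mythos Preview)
Your proof is correct and follows the same route as the paper: the lower bound via $\mu_\sat(\beta)\geq\beta^{-1}\log R^\may(\beta)$ and Eq.~\eqref{eq:maycrit}, and the upper bound by contradiction from the first bullet of Theorem~\ref{thm:rhomu}. You are in fact more careful than the paper, which simply asserts that $\rho_\sat(\beta)$ is bounded below without isolating the intermediate step $\rho(\beta,\mu)\leq\rho_\sat(\beta)$ for $\mu<\mu_\sat(\beta)$; your Legendre-duality justification of that step is the right idea (one minor slip: the bound~\eqref{eq:maycrit} is due to Poghosyan--Ueltschi, not Penrose--Ueltschi).
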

The lower bound  in the corollary follows from Eq.~\eqref{eq:maycrit}, noting that $\mu_\sat(\beta)\geq\beta^{-1}\log R^\may(\beta)$. For the upper bound, suppose by contradiction that $\mu_\sat(\beta) - e_\infty \gg \beta^{-1}\log \beta$ as $\beta \to \infty$. Then Theorem~\ref{thm:rhomu} tells us that $\rho_\sat(\beta)$ is bounded from below by some positive constant times $\rho_0$, contradicting the assumption $\rho_\sat(\beta) \to 0$. Thus $\mu_\sat(\beta) - e_\infty = O(\beta^{-1} \log \beta)$.

\begin{remark}[Lee-Yang theorem] For a lattice gas on $\Z^d$ with at most 
	one particle per lattice site and 
	 attractive pair interactions $v(x-y) \leq 0$, 
	the Lee-Yang theorem~\cite[Theorem 5.1.3]{ruelle-book} says that if there is a phase transition, then it must be at a chemical potential $\mu$ that satisfies
	\begin{equation*}
		\exp\Bigl(\beta \mu - \frac{1}{2} \sum_{x \neq 0} \beta v(x) \Bigr) =1, 
	\end{equation*}
	i.e.,  $\mu = (\sum_{x \neq 0} v(x) ) /2$. The right-hand side 
	of the latter equality is readily identified with $e_\infty$, the ground state energy per particle for the lattice gas. Thus Corollary~\ref{cor:where} compares well with the Lee-Yang theorem. 
\end{remark}

\begin{remark}[Widom-Rowlinson model] 
	It is instructive to look at a continuum space model for which the existence of a phase transition is known,
      the \emph{Widom-Rowlinson model}~\cite{wr}, see the review~\cite{samaj-wr}.
	Consider particles interacting via the energy 
	\begin{equation*}
		U_\Lambda(x_1,\ldots,x_N) = \bigl| \Lambda 
			\cap \cup_{i=1}^N B(x_i,1)\bigr| - N \bigl| B(0,1)\bigr|
	\end{equation*}
	wher $B(x,1)$ is the ball of radius $1$ centered at $x$. 
	The interaction is not a sum of pair interactions, but it qualifies nevertheless as an attractive, stable,  finite-range interaction. The ground state energy per particle is $e_\infty = - |B(0,1)|$.
	An equivalent formulation is in terms of a two-species model with hard core repulsion between particles of different type: 
	\begin{multline*}
		\sum_{N=0}^\infty\frac{z^N}{N!} \int_{\Lambda^N} e^{- \beta U_{\Lambda}(\vect{x})} \dd \vect{x} \\ 
		= e^{-z_2|\Lambda|} 
		\sum_{N_1,N_2=0}^\infty \frac{z_1^{N_1}}{N_1!} \frac{z_2^{N_2}} {N_2!} 
			\int_{\Lambda^{N_1}}  \int_{\Lambda^{N_2}}  
			\mathbf{1}\bigl(\dist(\vect{x},\vect{y}) \geq 1\bigr)\dd \vect{x}\dd \vect{y}, 
	\end{multline*}
  	provided 
	\begin{equation*}
		\beta = z_2,\quad z =z_1 e^{- z_2 |B(0,1)|}.  
	\end{equation*}
	It is known~\cite{ruelle-wr,cck} that for sufficiently high, equal activities $z_1=z_2$, the system has a phase transition.
    In the one-species picture, a phase transition happens at low temperature and activity $z = \beta \exp( - \beta |B(0,1)|)$, or chemical potential 
	\begin{equation*}
		\mu = - |B(0,1)| + \beta^{-1} \log \beta = e_\infty + \beta^{-1}\log \beta.  
	\end{equation*}
	Again, this matches Corollary~\ref{cor:where}. 
\end{remark} 

A second consequence of Theorem~\ref{thm:rhomu} is that, even when there is no 
phase transition -- for example, in one dimension --, there is nevertheless 
a change in physical behavior around $\mu = e_\infty$: 
consider the family of curves 
$\mu\mapsto \rho(\beta,\mu)$ around $\mu = e_\infty$. 
At $\beta = \infty$, it has a jump of size $\geq \rho_0$. At $\beta$ large but finite, 
there could be either a jump, or the curves
resemble the occupation numbers of fermions around the Fermi energy. 
Hence there is either a phase transition, or a fast increase from small to large density.

We would like to propose this as a a physical interpretation to the domain of convergence 
of the Mayer series, for attractive potentials, based on the following conjecture: 
\begin{conjecture}[Mayer series' radius of convergence] \label{conj:mayer}
	Suppose that $v$ satisfies Assumptions~\ref{ass:pairpot} and~\ref{ass:uniform}. Then
	\begin{equation} \label{eq:conj-mayer}
		 \lim_{\beta \to \infty} \beta^{-1} \log \rmay(\beta) = e_\infty.
	\end{equation}
\end{conjecture}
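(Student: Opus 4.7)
The lower bound $\liminf_{\beta\to\infty}\beta^{-1}\log R^{\may}(\beta) \geq e_\infty$ is immediate from the Penrose--Ueltschi estimate~\eqref{eq:maycrit}. The content of the conjecture lies in the matching upper bound. Since $R^{\may}(\beta)$ is no larger than the radius of convergence of the formal series $\sum b_n(\beta) z^n$, the Cauchy--Hadamard formula reduces the problem to exhibiting, for each $\eps>0$ and every $\beta$ large enough, an integer $k=k(\beta)$ with
\begin{equation*}
	|b_k(\beta)|^{1/k} \geq \exp\bigl(-\beta(e_\infty+\eps)\bigr).
\end{equation*}
Theorem~\ref{thm:maylt} already yields $|b_k(\beta)|^{1/k} = \exp(-\beta(E_k/k+o(1)))$ for \emph{fixed} $k$ as $\beta\to\infty$, and $E_k/k\to e_\infty$ by~\eqref{eq:einf}; what is missing is a $k$-uniform version of Theorem~\ref{thm:maylt} strong enough to let one take $k=k(\beta)\to\infty$ jointly with $\beta$.

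To obtain such a bound, I would revisit the proof of Theorem~\ref{thm:maylt} and try to produce an explicit estimate of the form
\begin{equation*}
	b_k(\beta) \geq \frac{(c\,\delta^d)^k}{k!}\,\exp\bigl(-\beta E_k - C k^2 \beta \delta^\alpha\bigr),
\end{equation*}
valid for small $\delta$ and large $\beta$, by isolating in the cluster-graph representation of $b_k$ the positive contribution from configurations near an $N=k$ ground state. Under Assumption~\ref{ass:uniform} such a ground state has pairwise distances at least $r_0$; since $v$ is H\"older continuous of some exponent $\alpha>0$ on $[r_0,\infty)$, perturbing each particle by $\delta$ changes the energy by at most $C k^2 \delta^\alpha$, so that the canonical integral of $e^{-\beta U}$ over the $\delta$-neighbourhood of the ground state is bounded below by the displayed expression. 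The remaining contributions, from other connected graphs and from configurations outside the neighbourhood, would have to be shown subdominant; the absolutely convergent Mayer bound~\eqref{eq:remainder} can serve as an \emph{a priori} input damping the effect of particles straying from the nucleus. Given such a lower bound, one chooses $\delta=\delta(\beta,k)$ so that $\beta k^2\delta^\alpha = o(\beta k)$ while $\delta$ stays polynomial in $\beta^{-1}$, and then lets $k=k(\beta)\to\infty$ slowly enough that $E_{k(\beta)}/k(\beta)\to e_\infty$; the entropic prefactor $(c\delta^d/k!)^{1/k}$ then contributes only $\exp(-o(\beta))$, and the target estimate follows.

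The main obstacle is the $k$-uniform control of the cluster-graph sum. The number of connected graphs on $k$ labelled vertices grows like $\exp(O(k^2))$, so a term-by-term estimate is hopeless; the cancellations among graphs that underlie Theorem~\ref{thm:maylt} must be organised in a way that survives $k\to\infty$. My best guess is to use a polymer expansion whose polymers are \emph{physical} clusters in the sense of~\cite{jkm} -- connected components of the graph of ``close'' particles -- so that within each polymer the ground-state contribution dominates, while inter-polymer interactions are controlled by the standard Mayer--Penrose--Ueltschi machinery. Reconciling this cluster decomposition with the rather sparse information on ground-state geometry provided by Assumption~\ref{ass:uniform}, and in particular coping with the possible non-uniqueness and degeneracy of $k$-particle ground states as $k\to\infty$, is, I believe, the essential reason the statement is left as a conjecture.
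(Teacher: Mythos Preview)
This statement is a \emph{conjecture}, and the paper does not prove it. Your proposal is not a proof either, and you honestly say so at the end; what you have written is a sensible discussion of why the upper bound is hard and a plausible line of attack, which is essentially all one can do here.

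A few remarks are nonetheless in order. First, the paper does prove a closely related result, Theorem~\ref{thm:mayer}, for the \emph{finite-volume} radius $R^\may_\Lambda(\beta)$. The mechanism there is different from your Cauchy--Hadamard route: Penrose's inequality
\[
	R^\may_\Lambda(\beta) \leq \Bigl(\frac{k\,e^{-\beta e_\infty}}{(k-1)\,|b_{k,\Lambda}(\beta)|}\Bigr)^{1/(k-1)}
\]
holds for \emph{each fixed} $k$, so one may send $|\Lambda|\to\infty$, then $\beta\to\infty$ using Theorem~\ref{thm:maylt} at that fixed $k$, and only afterwards let $k\to\infty$. No $k$-uniform control of $b_k(\beta)$ is needed. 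The reason this does not settle the conjecture is that Penrose's bound is a finite-volume statement (it comes from locating zeros of a polynomial), and one only knows $\liminf_{|\Lambda|\to\infty} R^\may_\Lambda(\beta)\leq R^\may(\beta)$, which is the wrong direction.

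Second, your proposed route via Cauchy--Hadamard genuinely requires information on $b_k(\beta)$ for $k=k(\beta)\to\infty$, and your identification of the obstacle---the $e^{O(k^2)}$ combinatorics of connected graphs---is exactly the point. Your sketched lower bound on the near-ground-state contribution is fine, but the claim that ``the remaining contributions\dots would have to be shown subdominant'' with the aid of~\eqref{eq:remainder} is where the argument is incomplete: \eqref{eq:remainder} bounds $\sum_n n|b_n|z^{n-1}$, not the individual $b_k$, and gives no sign information, so it cannot by itself rule out that the negative graph contributions cancel your positive ground-state piece down to something much smaller than $e^{-\beta E_k}$ when $k$ grows with $\beta$. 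This is precisely the gap the paper leaves open.
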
 
Note that for pair potentials whose finite part is integrable, 
we have the lower bound 
Eq.~\eqref{eq:musat-lb} on the liminf.
Hence the only part that is open in the previous conjecture is an upper bound on the limsup.

In fact, a rigorous statement is available for the radius of convergence of the finite volume Mayer series. It is proven by combining Theorem~\ref{thm:maylt} with the bounds from~\cite{penrose}. First recall that the pressure $\beta p_\Lambda(\beta,z)$, 
defined via the logarithm of the grand-canonical partition function in a finite box $\Lambda = [0,L]^d$, has a Mayer expansion similar to Eq.~\eqref{eq:mayer}, with volume-dependent radius of convergence $\rmay_\Lambda(\beta)$. Note that 
 $\liminf_{|\Lambda| \to \infty} \rmay_\Lambda(\beta) \leq \rmay(\beta)$, 
 see~\cite[Eq. (4.2)]{penrose}. 

\begin{theorem}\label{thm:mayer}
	Let the pair interaction satisfy Assumptions~\ref {ass:pairpot} and~\ref{ass:uniform}. 
	Then, if we let first $|\Lambda|\to \infty$ along cubes, and then $\beta \to \infty$, 
	\begin{equation*}
		\lim_{\beta \to \infty} \limsup_{|\Lambda|\to \infty}  \beta^{-1} 
				\log \rmay_\Lambda(\beta) 
		= \lim_{\beta \to \infty} \liminf_{|\Lambda|\to \infty}  \beta^{-1} 
				\log \rmay_\Lambda(\beta) =  e_\infty.
	\end{equation*}
\end{theorem}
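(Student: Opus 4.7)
The plan is to sandwich the iterated limits between $e_\infty$ from above and below. Writing $A_\Lambda(\beta) := \beta^{-1}\log \rmay_\Lambda(\beta)$, I aim to establish
\begin{equation*}
  \liminf_{\beta\to\infty}\liminf_{|\Lambda|\to\infty} A_\Lambda(\beta) \;\geq\; e_\infty \;\geq\; \limsup_{\beta\to\infty}\limsup_{|\Lambda|\to\infty} A_\Lambda(\beta).
\end{equation*}
Since $\liminf\leq\limsup$, this pinches the four iterated quantities between them to equal $e_\infty$, giving both identities in the statement at once.

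The lower bound is the easy half. The Pogorzelski--Ueltschi criterion~\eqref{eq:maycrit} depends on the interaction only through the stability constant $B=-e_\infty$ and the volume-independent quantity $|||v|||$; its proof goes through verbatim in any finite cube, yielding $\rmay_\Lambda(\beta)\geq e^{\beta e_\infty}/(\beta e |||v|||)$ uniformly in $\Lambda$. Taking logarithms, $A_\Lambda(\beta) \geq e_\infty - \beta^{-1}\log(\beta e |||v|||)$, and the right-hand side tends to $e_\infty$ as $\beta\to\infty$ independently of $\Lambda$, so both iterated $\liminf$'s dominate $e_\infty$.

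The upper bound is the substantive half, and is where Theorem~\ref{thm:maylt} and the estimates from~\cite{penrose} combine. For each fixed $k\geq 2$, the Penrose bounds supply two ingredients: (i) the termwise convergence $b_k^\Lambda(\beta)\to b_k(\beta)$ as $|\Lambda|\to\infty$, and (ii) uniform-in-$\Lambda$ control on the full pressure series $\sum_j b_j^\Lambda(\beta) z^j$ on the disk $|z|<\rmay(\beta)$. Together these produce, via a Cauchy-type contour estimate, an upper bound of the form $\rmay_\Lambda(\beta)\leq \bigl(C_k(\beta)/b_k^\Lambda(\beta)\bigr)^{1/k}$ valid for $|\Lambda|$ large, with $C_k(\beta)$ subexponential in $\beta$. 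Theorem~\ref{thm:maylt} then guarantees that for $\beta$ large $b_k(\beta)>0$ and $\beta^{-1}\log b_k(\beta)\to -E_k$; by the termwise convergence the same positivity and asymptotic transfer to $b_k^\Lambda(\beta)$ for $|\Lambda|$ large. Passing to $\limsup_{|\Lambda|\to\infty}$ and then $\limsup_{\beta\to\infty}$ yields $\limsup_\beta\limsup_\Lambda A_\Lambda(\beta)\leq E_k/k$ for every $k$; the infimum over $k\in\N$ equals $e_\infty$ by definition~\eqref{eq:einf}, closing the sandwich.

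The main obstacle is precisely this upper bound. Hadamard's formula $\rmay_\Lambda(\beta)^{-1}=\limsup_j |b_j^\Lambda(\beta)|^{1/j}$ does \emph{not} give $\rmay_\Lambda(\beta)^{-1}\geq|b_k^\Lambda(\beta)|^{1/k}$ for a single coefficient, so the fixed-$k$ asymptotics of Theorem~\ref{thm:maylt} have to be converted into an actual upper bound on $\rmay_\Lambda(\beta)$ using the global analyticity of $\beta p_\Lambda(z)$ on $|z|<\rmay(\beta)$ — this is exactly what the tail estimates from~\cite{penrose} deliver. Once this link is established, the rest of the argument is elementary bookkeeping of nested $\liminf$'s and $\limsup$'s, plus taking $\inf$ over $k$.
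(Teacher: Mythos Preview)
Your strategy is exactly the paper's: lower bound from the Poghosyan--Ueltschi criterion~\eqref{eq:maycrit} (which holds uniformly in $\Lambda$), upper bound from Penrose's inequality together with Theorem~\ref{thm:maylt}, then let $k\to\infty$. Two inaccuracies in your account of the Penrose input are worth flagging, though they do not affect the conclusion. First, the bound from~\cite{penrose} that the paper invokes is the explicit inequality
\[
	\rmay_\Lambda(\beta)\ \leq\ \Bigl(\frac{k\,e^{-\beta e_\infty}}{(k-1)\,|b_{k,\Lambda}(\beta)|}\Bigr)^{1/(k-1)},
\]
with exponent $1/(k-1)$ and a prefactor $e^{-\beta e_\infty}$ that is \emph{exponentially large} in $\beta$ (since $e_\infty<0$), not subexponential; combined with $b_{k,\Lambda}\to b_k$ and Theorem~\ref{thm:maylt} this gives $\limsup_\beta\limsup_\Lambda A_\Lambda(\beta)\leq (E_k-e_\infty)/(k-1)$, which tends to $e_\infty$ as $k\to\infty$. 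Second, the ``Cauchy-type contour'' derivation you sketch does not actually yield such a bound, because one has no uniform control of $|\beta p_\Lambda(z)|$ as $|z|\uparrow \rmay_\Lambda(\beta)$; Penrose's argument is different, and the paper simply quotes the inequality rather than rederiving it. With those corrections your write-up coincides with the paper's proof.
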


We are now heading towards similar interpretations for the virial expansion. 
First, however, we need to understand better the gas phase $\mu <e_\infty$. The reason is 
that inside the gas phase, there might be ``chemical'' transitions~\cite{hill2}, for example, 
from monatomic to diatomic gas. When this happens, the radius of convergence of the virial series 
is determined by that cross-over, and the virial series ceases to converge 
\emph{before} any sharp phase transition is observed, see Theorem~\ref{thm:virconv}. 

The next theorem is a grand-canonical version 
of results from~\cite{jkm} and should be compared to the atomic or molecular limit 
for quantum Coulomb systems~\cite{bm,cly,fefferman}. (See also 
 a result for the classical one-dimensional 
two-component plasma~\cite{lenard}.)  Recall the quantity $\nu^*>0$ defined in  Eq.~\eqref{eq:nustar}.

\begin{theorem}[Possible cross-overs inside the gas phase] \label{thm:cross-over}
	Suppose that $v$ satisfies Assumption~\ref{ass:pairpot} and that $v$ 
	is integrable in $|x|>r_\mathrm{hc}$. Then for every fixed 
	$\mu < e_\infty$,  
	\begin{equation} \label{eq:cross-density}
		\lim_{\beta \to \infty} \beta^{-1} \log \rho(\beta,\mu) = - \inf_{k\geq 1}(E_k - k\mu) < - \nu^* <0.
	\end{equation}
	If in addition $(E_k -k \mu)_{k\in \N}$ has a unique minimizer $k(\mu) \in \N$, then 
	as $\beta \to \infty$, 
	\begin{equation}  \label{eq:cross-pressure}
		\beta p(\beta,\mu) = \frac{\rho(\beta,\mu)}{k(\mu)}\, (1 + o(1)).
	\end{equation}
\end{theorem}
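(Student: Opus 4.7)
The starting observation is that $\mu<e_\infty$ keeps the activity $z=e^{\beta\mu}$ safely inside $\rmay(\beta)$ for large $\beta$: indeed, Eq.~\eqref{eq:maycrit} gives $\rmay(\beta)\geq z':=e^{\beta e_\infty}/(\beta e |||v|||)$, so $z\leq z'$ as soon as $e^{\beta(e_\infty-\mu)}\geq \beta e |||v|||$. Hence the Mayer series
\begin{equation*}
  \beta p(\beta,\mu)=\sum_{k\geq 1} b_k(\beta) z^k, \qquad \rho(\beta,\mu)=\sum_{k\geq 1} k\, b_k(\beta) z^k
\end{equation*}
converge absolutely and represent the thermodynamic quantities. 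The plan is to split each series at a threshold $K$ chosen large but independent of $\beta$, to analyze the head $k<K$ via the termwise asymptotics of Theorem~\ref{thm:maylt}, and to dispose of the tail by the \emph{uniform} estimate~\eqref{eq:remainder}.

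For the tail, factor $z^k=z\cdot (z')^{k-1}(z/z')^{k-1}$ and use that $z/z'\leq 1$ and $(z/z')^{k-1}\leq (z/z')^{K-1}$ for $k\geq K$. Then~\eqref{eq:remainder} yields
\begin{equation*}
  \sum_{k\geq K} k\, |b_k(\beta)|\, z^k \leq z\,(z/z')^{K-1}(e-1)e^{-\beta e_\infty} = (e-1)(\beta e|||v|||)^{K-1}\, e^{-\beta K(e_\infty-\mu)},
\end{equation*}
so the tail has exponential rate $-K(e_\infty-\mu)+o(1)$. Because $E_k-k\mu\geq k(e_\infty-\mu)\to\infty$ the infimum $\inf_k(E_k-k\mu)$ is finite and attained at some finite $k_*$; choosing $K$ so large that $K(e_\infty-\mu)>\inf_k(E_k-k\mu)$ makes the tail negligible compared with the head. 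By Theorem~\ref{thm:maylt}, each $b_k(\beta)$ with $k<K$ is eventually positive and satisfies $\beta^{-1}\log b_k(\beta)\to -E_k$, so the head is, up to multiplicative $e^{o(\beta)}$ errors, a finite sum of terms of order $e^{-\beta(E_k-k\mu)}$. Under the uniqueness hypothesis, the $k_*$-term strictly dominates the other finitely many head terms and the tail, so
\begin{equation*}
  \rho(\beta,\mu)=k_* b_{k_*}(\beta) z^{k_*}\bigl(1+o(1)\bigr), \qquad \beta p(\beta,\mu)=b_{k_*}(\beta) z^{k_*}\bigl(1+o(1)\bigr),
\end{equation*}
which yields both~\eqref{eq:cross-density} and~\eqref{eq:cross-pressure}. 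Without the uniqueness hypothesis, the same splitting still delivers the logarithmic limit in~\eqref{eq:cross-density}, once $K$ is large enough to contain every finite minimizer. The strict inequality $-\inf_k(E_k-k\mu)<-\nu^*$ follows from the decomposition $E_k-k\mu=(E_k-ke_\infty)+k(e_\infty-\mu)$, whose two terms are bounded below by $\nu^*$ and by $e_\infty-\mu>0$ respectively.

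The main obstacle I anticipate is precisely the tail bound, since the pointwise estimate of Theorem~\ref{thm:maylt} says nothing when $k$ grows with $\beta$. The Pogorzelski--Ueltschi uniform inequality~\eqref{eq:remainder} is the crucial input; the polynomial prefactor $(\beta e|||v|||)^{K-1}$ produced by the rescaling is harmless because $K$ is a finite constant, chosen once to beat the finite number $\inf_k(E_k-k\mu)$.
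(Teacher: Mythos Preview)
Your proof is correct and follows essentially the same approach as the paper: split the Mayer series for $\rho$ (resp.\ $\beta p$) at a fixed cut-off $K$, control the head $k\le K$ termwise via Theorem~\ref{thm:maylt}, and bound the tail uniformly by factoring out $(z/z')^{K-1}$ and applying Eq.~\eqref{eq:remainder} at the radius $z'=e^{\beta e_\infty}/(\beta e|||v|||)$. The choice of $K$ with $K(e_\infty-\mu)>\inf_k(E_k-k\mu)$, the identification of the dominant term $k_*\,b_{k_*}(\beta)z^{k_*}$ under uniqueness, and the strict-inequality argument via $E_k-k\mu=(E_k-ke_\infty)+k(e_\infty-\mu)\ge \nu^*+(e_\infty-\mu)$ all match the paper's reasoning.
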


The interpretation is that the gas is, approximately,
 an ideal gas of molecules consisting of $k(\mu)$ particles each, with effective activity 
$z^k \exp(- \beta E_k)$, see also Eq.~\eqref{eq:approx}. An illustration with coexistence curves in the density--temperature plane is given in~\cite[Figure 4]{jkm}. 

\begin{remark}
By now we have two auxiliary variational problems: in Eq.~\eqref{eq:cross-density}, 
we minimize $E_k- k \mu$ with respect to  $k$ at fixed $\mu$; in Eq.~\eqref{eq:free-energy}, we minimize $(E_k + \beta^{-1} \log \rho)/k$ with respect to $k$ at fixed $\beta$ and $\rho$. In Appendix~\ref{app:variational}, we show 
 that these two minimization problems are equivalent and discuss their properties.
\end{remark}

Theorem~\ref{thm:cross-over} covers two different scenarios, depending on the value of 
\begin{equation} \label{eq:mu1}
	\mu_1 := \inf_{k \geq 2} \frac{E_k}{k-1}.
\end{equation}
The quantity $\mu_1$ separates a region dominated by monomers from 
a region dominated by larger groups of particles: for fixed $\mu < \mu_1$, 
$E_k - k \mu$ has the unique minimizer $k(\mu) =1$, and for $\mu>\mu_1$, 
every minimizer is $\geq 2$, see Lemma~\ref{lem:phases}. 
As a consequence, for sufficiently negative chemical potentials, we observe a monatomic gas ($k(\mu) =1$). If $\mu_1=e_\infty$, this is all we see in the gas phase. If $\mu_1<e_\infty$,
as we increase the chemical potential,  
we observe a transition from monatomic to polyatomic gas before the gas condenses. 

The existence of such a transition becomes very natural when we look at a concrete example, taken from~\cite[Sect. 6]{ckms}. \label{two-well}
 Consider a pair potential with a hard core and two potential wells, a deep well at small distances, and a shallow well at larger distances, 
separated by a repulsive ($v>0$) part at intermediate distances. The deep well favors small groups of particles (pairs, triangles or tetraeders, depending on the dimension), arranged at larger distances 
determined by the shallow well. We may think of a solid made of molecules instead of atoms. It is natural, then, that the solid forms after atoms gather in molecules. A rigorous statement with a proof of $\mu_1<e_\infty$, 
 in dimension one, can be found in~\cite{ckms}.

The previous example suggests a relationship between the geometry of ground states  
and the existence of a cross-over inside the gas phase. An interesting open question is, therefore, whether the conditions from~\cite{theil,yfs} ensuring a crystalline ground state with hexagonal lattice (one particle per unit cell) also imply $\mu_1= e_\infty$. A much weaker result is the following: 

\begin{prop}[Sufficient criterion for the absence of polyatomic gas] \label{prop:criterion}
	Let $v$ be a stable pair interaction with attractive tail.  
	\begin{enumerate}
		\item If for all $m,n\in \N$,  
		\begin{equation} \label{eq:gluing}
				E_{m+n+1} \leq E_{m+1}+ E_{n+1}
		\end{equation}
		then $\mu_1 = e_\infty$. 
		\item Suppose that $v(r)$ has a hard core $r_\mathrm{hc}>0$ and $v(r) \leq 0$ for $r\geq r_\mathrm{hc}$.  Then, in dimension $d=1$, the inequality~\eqref{eq:gluing} 
		holds for all $m,n$, and we have $\mu_1 = e_\infty$. 
	\end{enumerate}
\end{prop}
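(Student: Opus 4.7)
I would recognize the hypothesis $E_{m+n+1} \leq E_{m+1} + E_{n+1}$ as the subadditivity of the shifted sequence $F_n := E_{n+1}$, $n \geq 1$. The substitution $k = n+1$ in the definition of $\mu_1$ gives $\mu_1 = \inf_{n \geq 1} F_n/n$, and Fekete's lemma applied to the subadditive sequence $F$ then yields
\[
 \mu_1 \;=\; \inf_{n \geq 1} \frac{F_n}{n} \;=\; \lim_{n \to \infty} \frac{E_{n+1}}{n} \;=\; \lim_{n \to \infty} \frac{E_{n+1}}{n+1}\cdot\frac{n+1}{n} \;=\; e_\infty.
\]
The reverse inequality $\mu_1 \leq e_\infty$ is in fact automatic (since $E_k \leq 0$ by subadditivity of $(E_k)$, so $E_k/(k-1) \leq E_k/k$), so what Fekete really supplies is the nontrivial bound $\mu_1 \geq e_\infty$.

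\textbf{Plan for (2).} In dimension one I would establish~\eqref{eq:gluing} by a rigid gluing of near-ground states. Fix $m, n \in \N$ and $\varepsilon > 0$. Choose ordered configurations $x_0 < \cdots < x_m$ and $y_0 < \cdots < y_n$ with $U(x_0,\ldots,x_m) \leq E_{m+1} + \varepsilon$ and $U(y_0,\ldots,y_n) \leq E_{n+1} + \varepsilon$; working with $\varepsilon$-approximate minimizers avoids having to prove existence of true ground states. Translate the second configuration so that $y_0 = x_m$ and form the merged $(m{+}n{+}1)$-particle configuration $z = (x_0,\ldots,x_{m-1}, x_m = y_0, y_1,\ldots, y_n)$. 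Every unordered pair in $z$ lies either entirely in $\{x_0,\ldots,x_m\}$, entirely in $\{y_0,\ldots,y_n\}$, or joins some $x_i$ with $i < m$ to some $y_j$ with $j \geq 1$, and the shared particle $x_m = y_0$ participates in left-internal and right-internal pairs without being double-counted. Hence
\[
 U(z) \;=\; U(x_0,\ldots,x_m) + U(y_0,\ldots,y_n) + \sum_{i=0}^{m-1}\sum_{j=1}^n v\bigl(y_j - x_i\bigr).
\]
The hard core forces $y_1 - y_0 \geq r_\mathrm{hc}$ and $x_m - x_{m-1} \geq r_\mathrm{hc}$, so each cross distance obeys $y_j - x_i \geq y_1 - x_{m-1} = (y_1 - y_0) + (x_m - x_{m-1}) \geq 2 r_\mathrm{hc}$. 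Since $v(r) \leq 0$ for $r \geq r_\mathrm{hc}$, every cross term is non-positive, giving $E_{m+n+1} \leq U(z) \leq E_{m+1} + E_{n+1} + 2\varepsilon$. Letting $\varepsilon \to 0$ yields~\eqref{eq:gluing}, and Part (1) then concludes that $\mu_1 = e_\infty$.

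\textbf{Main obstacle.} Neither step is technically deep: Part (1) is a one-line application of Fekete's lemma after the reindexing $F_n = E_{n+1}$, and Part (2) reduces to a clean geometric gluing on the line. The one point requiring attention in Part (2) is the pair-counting bookkeeping, i.e.\ verifying both that no pair in $U(z)$ is double-counted across the two blocks and that the hard-core gap really propagates from the single nearest-neighbor cross pair $(x_{m-1}, y_1)$ to all cross pairs. The genuine structural obstacle is that the argument depends on the linear ordering of particles and therefore breaks in $d \geq 2$, which is why whether crystalline ground-state conditions such as those of~\cite{theil,yfs} imply $\mu_1 = e_\infty$ remains open.
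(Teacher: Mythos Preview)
Your proof is correct and follows essentially the same approach as the paper: Part~(1) is the same Fekete/subadditivity argument on $F_n = E_{n+1}$, and Part~(2) is the same gluing construction along the line. Your use of $\varepsilon$-approximate minimizers (rather than assuming ground states exist) and the explicit $2r_\mathrm{hc}$ cross-distance bound are slightly more careful than the paper's version but change nothing substantive.
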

Eq.~\eqref{eq:gluing} should be read with a ``gluing'' operation in mind: instead of juxtaposing $m$ and $n$-particle configurations in space, as is done in order to derive the subadditivity 
$E_{m+n} \leq E_m + E_n$, we glue two configurations with $m+1$ and $n+1$ particles in one point. 

\begin{remark} We owe to G. Friesecke the following remark: in statement 2. of Prop.~\ref{prop:criterion}, we may replace the assumption that $v$ has a hard core by a statement of the type ``$v$ is sufficiently repulsive near the origin'', formulated for example through inequalities for derivatives of the potential. Precise statements are, already in dimension $1$, surprisingly involved. 
\end{remark}

After this excursion into ground states, let us come back to the virial expansion 
and compare the density of saturated gas $\rho_\mathrm{sat}$ of Eq.~\eqref{eq:rhosat} 
with the virial radius of convergence $\rvir$ and with $\rho^\may$ defined in Eq.~\eqref{eq:rhomay}. 
Recall $\nu^*>0$ from Eq.~\eqref{eq:nustar} and let $\nu_1 := - \mu_1$ with $\mu_1$ 
as in Eq.~\eqref{eq:mu1}. The quantity $\nu_1$ is a canonical version of the grand-canonical threshold $\mu_1$. We note that in general $\nu_1 \geq \nu^*$, and $\nu_1>\nu^*$ if and only if $\mu_1 <e_\infty$, 
i.e., if and only if there is a monatomic-polyatomic cross-over inside the gas phase
(see Lemma~\ref{lem:thresholds}). 

\begin{theorem}[Comparison of $\rvir,\rho^\may,\rho_\sat$ and $\nu_1$]
\label{thm:virconv}
	Suppose that $v$ satisfies Assumptions~\ref{ass:pairpot} and~\ref{ass:uniform}. 
	Then
	\begin{align}
		\liminf_{\beta \to \infty} \beta^{-1} \log \rho_\sat(\beta) 
		& \geq \liminf_{\beta \to \infty} \beta^{-1} \log \rho^\may(\beta) 
		\geq - \nu^*, \label{eq:vir-a}\\
	 	 \liminf_{\beta \to \infty} \beta^{-1} \log \rvir(\beta) &\geq - \nu_1. \label{eq:vir-b}
	\end{align} 
	If in addition $\mu_1 <e_\infty$ and $E_k/(k-1)$ has a unique minimizer, 
	then  
	\begin{equation} \label{eq:vir-c}
		\lim_{\beta \to \infty} \beta^{-1} \log \rvir(\beta) = - \nu_1<-\nu^*,
	\end{equation}
	and as $\beta \to \infty$, $R^\mathrm{vir}(\beta) \ll \rho^\may(\beta) \leq  \rho_\sat(\beta)$.
\end{theorem}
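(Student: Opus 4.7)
The statement contains three pieces: the chain \eqref{eq:vir-a}, the general lower bound \eqref{eq:vir-b}, and the matching identity \eqref{eq:vir-c} under the unique-minimizer hypothesis on $k_0$, the minimizer of $E_k/(k-1)$. For \eqref{eq:vir-a}, the inequality $\rho^\may \leq \rho_\sat$ is immediate from $\mu^\may \leq \mu_\sat$ and the monotonicity of $\rho(\beta,\cdot)$. For the liminf bound on $\rho^\may$, I combine \eqref{eq:musat-lb} (which gives $\mu^\may(\beta) > e_\infty - \varepsilon$ eventually, for any fixed $\varepsilon > 0$) with Theorem~\ref{thm:cross-over}: $\rho^\may(\beta) \geq \rho(\beta, e_\infty - 2\varepsilon)$ and $\beta^{-1}\log \rho(\beta, e_\infty - 2\varepsilon) \to -\inf_k(E_k - k(e_\infty - 2\varepsilon)) \geq -\nu^* - 2\varepsilon k^*$ with $k^*$ the minimizer of $E_k - k e_\infty$; letting $\varepsilon \downarrow 0$ yields \eqref{eq:vir-a}.

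For \eqref{eq:vir-b}, I would exhibit $R^\vir(\beta)$ as at least the radius of analytic invertibility of the Mayer density series $\rho(z) = z + \sum_{n\geq 2} n b_n(\beta) z^n$. The key estimate is $\sum_{n\geq 2} n|b_n(\beta)|\,|z|^n = o(|z|)$ uniformly on the disk $|z|\leq e^{\beta(\mu_1-\varepsilon)}$. I would prove it by splitting at a cutoff $N$: for $n\leq N$, Theorem~\ref{thm:maylt} gives $|b_n(\beta)| = e^{-\beta E_n + o(\beta)}$, and the inequality $n\mu_1 - E_n \leq \mu_1$ (immediate from $\mu_1 \leq E_n/(n-1)$) bounds the $n$-th term by $n e^{\beta\mu_1 - (n-1)\beta\varepsilon + o(\beta)}$, so the partial sum is $o(|z|)$; for $n > N$, the Pogorzelski--Ueltschi bound \eqref{eq:remainder} yields a geometric tail of ratio $\beta e|||v|||\,e^{\beta(\mu_1-\varepsilon-e_\infty)} \to 0$ (using $\mu_1 \leq e_\infty$), and a sufficiently large cutoff makes the remainder $o(|z|)$ as well. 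A Rouch\'e / implicit-function argument then gives an analytic inverse $z(\rho)$ on $|\rho|\leq \tfrac12 e^{\beta(\mu_1-\varepsilon)}$, hence $R^\vir(\beta) \geq \tfrac12 e^{\beta(\mu_1-\varepsilon)}$; letting $\varepsilon\downarrow 0$ yields \eqref{eq:vir-b}.

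For the matching upper bound in \eqref{eq:vir-c}, I rely on a large-$n$ lower bound on the virial coefficients, namely a subsequence $n_j \to \infty$ with $\limsup_\beta \beta^{-1} n_j^{-1}\log |d_{n_j}(\beta)| \geq -\mu_1$, which forces $\limsup_\beta \beta^{-1}\log R^\vir(\beta) \leq \mu_1 = -\nu_1$. The source, under the uniqueness assumption on $k_0$, is the sum for $d_n$ over 2-connected graphs: specific configurations extending the rigid $k_0$-cluster ground state saturate the rate $\mu_1 = E_{k_0}/(k_0-1)$, as presumably established in Propositions~\ref{prop:virsin} and~\ref{prop:vircon}. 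Combined with \eqref{eq:vir-a}, this yields $R^\vir(\beta) \ll \rho^\may(\beta) \leq \rho_\sat(\beta)$ since $\nu_1 > \nu^*$ whenever $\mu_1 < e_\infty$.

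The main obstacle is precisely this large-$n$ lower bound on $|d_n|$. Theorem~\ref{thm:maylt} provides only pointwise (fixed $n$) asymptotics, while the radius of convergence probes the joint $(n,\beta)\to(\infty,\infty)$ limit; a naive use of the pointwise asymptotics $|d_n|\approx e^{-\beta E_n}$ together with $E_n/n\to e_\infty$ would incorrectly suggest $R^\vir\sim e^{\beta e_\infty}$, which is too large. The argument must genuinely identify the dominant $k_0$-cluster contributions to $d_n$ in the low-temperature, large-$n$ regime.
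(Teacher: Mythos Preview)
Your treatment of \eqref{eq:vir-a} and \eqref{eq:vir-b} is essentially the paper's argument. For \eqref{eq:vir-a} the paper also combines the monotonicity $\rho_\sat\ge\rho^\may\ge\rho(\beta,\mu)$ with Theorem~\ref{thm:cross-over} and then optimizes over $\mu<e_\infty$. For \eqref{eq:vir-b} the paper proves the same key estimate $\sum_{k\ge2}k|b_k(\beta)|\,|z|^{k-1}\to0$ on $|z|=e^{\beta\mu}$, $\mu<\mu_1$, with the same split at a finite cutoff (Theorem~\ref{thm:maylt} for small $k$, \eqref{eq:remainder} for the tail); it then bounds the virial coefficients via the Cauchy--Lagrange formula $c_n=\frac{1}{2\pi i n}\oint \frac{dz}{z\rho(z)^{n-1}}$ rather than Rouch\'e, but this is a cosmetic difference.

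For \eqref{eq:vir-c} there is a real gap. You correctly diagnose the obstacle---pointwise-in-$n$ asymptotics of $d_n(\beta)$ say nothing about the radius of convergence---but your proposal does not overcome it: Propositions~\ref{prop:virsin} and~\ref{prop:vircon} give only fixed-$n$ limits as $\beta\to\infty$ (and \ref{prop:vircon} only for $k_0=2$), so they cannot produce a subsequence $n_j\to\infty$ with the lower bound on $|d_{n_j}(\beta)|$ you need. No such coefficient bound is established anywhere in the paper.

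The paper takes a completely different route for the upper bound. It locates, via a Banach-space implicit function theorem (Proposition~\ref{prop:rhozero}), a zero $z_0(\beta)$ of $\rho'(z)=1+\sum_{k\ge2}k^2 b_k(\beta)z^{k-1}$ close to a root of $1+k_0^2\,b_{k_0}(\beta)z^{k_0-1}=0$, hence with $|z_0(\beta)|=e^{\beta(\mu_1+o(1))}<\rmay(\beta)$, and checks that $\rho(z_0(\beta))\neq0$. It then uses the identity $P'(\rho(z))=\rho(z)/\bigl(z\rho'(z)\bigr)$, valid wherever both sides are analytic, to conclude that the virial series $P$ cannot be analytic on all of $\rho(D)$ for any disk $D\ni z_0$: at $z_0$ the right-hand side blows up. This forces a singularity of $P$ at some $\rho(z)$ with $|z|\le e^{\beta(\mu_1+\delta)}$, and bounding $|\rho(z)|$ by $\sum_k k|b_k(\beta)||z|^k$ yields $\limsup_\beta\beta^{-1}\log\rvir(\beta)\le\sup_k\bigl(k\mu_1-E_k\bigr)=-\nu_1$. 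The argument is purely complex-analytic and bypasses large-$n$ coefficient estimates altogether.
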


Eq.~\eqref{eq:vir-c} tells us that if there is a monatomic-polyatomic cross-over inside the gas phase, then 
that cross-over determines the radius of convergence of the virial expansion, and the virial expansion ceases to converge before there is any phase transition. 

This result should hold without the technical assumption that $E_k/(k-1)$ has a unique 
minimizer. In fact, it is natural to think that Eq.~\eqref{eq:vir-c} extends to the case $\nu_1 = \nu^*$, 
so that the radius of convergence of the virial expansion, 
for attractive potentials, is always determined by the first cross-over -- either from monatomic to polyatomic gas, or directly from small density, monatomic gas, to large density; the latter cross-over possibly being a phase transition (in $d\geq 2$).  
We also have a conjecture on the behavior of  $\rho^\may$ and $\rho_\mathrm{sat}$ 
analogous to Conjecture~\ref{conj:mayer} and Corollary~\ref{cor:where}.

\begin{conjecture} \label{conj:virial}
	For interactions with an attractive tail, 
	\begin{equation*}
		\lim_{\beta \to \infty} \beta^{-1}\log \rho^\may(\beta) = - \nu^*, \quad 
		\lim_{\beta \to \infty} \beta^{-1} \log \rvir(\beta) = - \nu_1 \leq - \nu^*.
	\end{equation*}
	If in addition there is a low-density, low-temperature phase transition, i.e., if $\rho_\sat(\beta) \to 0$ as $\beta \to \infty$, then 
	\begin{equation*}
		\lim_{\beta \to \infty} \beta^{-1} \log \rho_\sat(\beta) = - \nu^*.
	\end{equation*}
\end{conjecture}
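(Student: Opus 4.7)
The conjecture consists of three pieces: (i) an upper bound $\limsup \beta^{-1}\log\rho^\may(\beta) \le -\nu^*$ matching the liminf in Eq.~\eqref{eq:vir-a}; (ii) the equality $\lim\beta^{-1}\log\rvir(\beta) = -\nu_1$ without the uniqueness hypothesis used in Theorem~\ref{thm:virconv}; (iii) an upper bound $\limsup\beta^{-1}\log\rho_\sat(\beta)\le -\nu^*$ under the hypothesis $\rho_\sat(\beta)\to 0$. I would tackle them in that order, leaning on Theorems~\ref{thm:rhomu}, \ref{thm:cross-over} and~\ref{thm:virconv}, together with a quantitative form of Conjecture~\ref{conj:mayer}.

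For (i), the plan is to rewrite $\rho^\may(\beta)$ as the left limit of $\rho(\beta,\cdot)$ at $\mu^\may(\beta)=\beta^{-1}\log R^\may(\beta)$. Assuming Conjecture~\ref{conj:mayer} gives $\mu^\may(\beta)=e_\infty+o(1)$, so for any fixed $\eps>0$ the supremum in Eq.~\eqref{eq:rhomay} is eventually over $\mu<e_\infty+\eps$. One then upgrades Theorem~\ref{thm:cross-over} to a \emph{uniform} statement: for every $\eps>0$ there is $\beta_\eps$ such that for all $\beta\ge\beta_\eps$ and all $\mu\le e_\infty-\eps$, $\beta^{-1}\log\rho(\beta,\mu)\le -\inf_k(E_k-k\mu)+\eps$. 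Letting $\mu\nearrow e_\infty$ and using that the concave function $\mu\mapsto \inf_k(E_k-k\mu)$ is continuous with limit $\inf_k(E_k-ke_\infty)=\nu^*$ yields the desired bound on the strip $\mu\le e_\infty-\eps$; the remaining thin strip $e_\infty-\eps<\mu<\mu^\may(\beta)$ is controlled by monotonicity of $\rho(\beta,\cdot)$ together with a quantitative rate $\mu^\may(\beta)-e_\infty\to 0$.

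For (ii), I would remove the uniqueness assumption by a tie-breaking perturbation or by direct case analysis. Since $E_k/(k-1)\to e_\infty$ as $k\to\infty$, in the scenario $\mu_1<e_\infty$ the infimum in Eq.~\eqref{eq:mu1} is attained on a finite set $K^*\subset\N$. Mimicking the proof of Theorem~\ref{thm:virconv}, each $k^*\in K^*$ contributes a term of order $z^{k^*}\exp(-\beta E_{k^*})$ to the pressure and a term of the same exponential order to the density, so the virial inversion $z\mapsto\rho$ breaks down at the same $\rvir$ regardless of how many minimizers there are. Alternatively, add to the Hamiltonian a small $\beta$-independent $k$-body penalty that breaks the tie, apply Theorem~\ref{thm:virconv} to the perturbed system, and pass to the limit; one must check that neither $\nu_1$ nor $\rvir$ jumps. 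Part (iii) is then a short corollary of (i) and Corollary~\ref{cor:where}: if $\rho_\sat(\beta)\to 0$, then $\mu_\sat(\beta)=e_\infty+O(\beta^{-1}\log\beta)$, so $\rho_\sat(\beta)$ is a left limit of $\rho(\beta,\cdot)$ at a chemical potential converging to $e_\infty$, and the uniform form of Theorem~\ref{thm:cross-over} caps it by $\exp(-\beta\nu^*+o(\beta))$.

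The main obstacle, and the reason this remains a conjecture, is precisely the \emph{uniformity of Theorem~\ref{thm:cross-over} as $\mu\nearrow e_\infty$}: the existing proof produces a dominant cluster size $k(\mu)$ that can diverge in that limit, and controlling the contributions of large clusters to the grand-canonical partition function \emph{beyond} the Mayer radius---where the cluster expansion no longer converges---requires genuinely new input, probably a refined analysis of the canonical free energy~\eqref{eq:free-energy} slightly above the preferred density $\rho^*$. The prerequisite upper bound $\mu^\may(\beta)\le e_\infty+o(1)$ (Conjecture~\ref{conj:mayer}) appears to sit at the same level of difficulty: one would like to exhibit a singularity of $\beta p(\beta,\mu)$, or at least a polynomial lower bound on some $b_n(\beta)$, at activities of order $\exp(\beta e_\infty)$, and neither tool seems currently in reach.
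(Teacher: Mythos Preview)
The statement you are attempting is Conjecture~\ref{conj:virial}; the paper leaves it open and offers no proof. What the paper does prove are the one-sided bounds of Theorem~\ref{thm:virconv}---the liminf inequalities~\eqref{eq:vir-a} and~\eqref{eq:vir-b}---and the full limit~\eqref{eq:vir-c} for $\rvir$ only under the extra hypothesis that $E_k/(k-1)$ has a unique minimizer. There is therefore nothing in the paper to compare your argument against.

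That said, your decomposition into (i)--(iii) and your diagnosis of the obstacles are accurate and line up with the paper's partial results. For (i) and (iii) you correctly note that one needs both the upper-bound half of Conjecture~\ref{conj:mayer} (only the finite-volume analogue, Theorem~\ref{thm:mayer}, is actually proved) and a version of Theorem~\ref{thm:cross-over} that remains valid uniformly as $\mu\nearrow e_\infty$; the existing proof of Theorem~\ref{thm:cross-over} gives no such uniformity, since the dominant cluster size $k(\mu)$ may diverge. For (ii) your two suggestions---direct case analysis over the finite set of minimizers, or a tie-breaking perturbation---are plausible, but the perturbation route has a real gap: you would need continuity of $\rvir(\beta)$ in the potential, and nothing in the paper provides that. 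One further caution on (iii): Corollary~\ref{cor:where} places $\mu_\sat(\beta)$ at $e_\infty+O(\beta^{-1}\log\beta)$, which may lie \emph{above} $e_\infty$, so even a uniform Theorem~\ref{thm:cross-over} for $\mu<e_\infty$ would not by itself cap $\rho_\sat(\beta)$; one also needs control of $\rho(\beta,\mu)$ in the thin strip just above $e_\infty$, which is precisely the regime your final paragraph flags as out of reach. In short, your proposal is a reasonable roadmap that honestly identifies its own gaps, and those gaps are the reason the statement is a conjecture.
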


Let us recall that the line $\rho = \exp(-\beta \nu^*)$ has the following physical interpretation, proven in~\cite{jkm}: at densities that are very small but higher than $\exp(-\beta \nu^*)$, particles tend to gather in very large clusters (i.e., groups of particles close in space), even though the system is dilute. At densities smaller than $\exp(-\beta \nu^*)$, particles 
stay for themselves or form small groups -- this is the gas phase discussed above. 
%
%\begin{example}[Ising model] 
%	Let us check for the $d\geq 2$-dimensional Ising model, with nearest neighbor coupling 
%	$J>0$, that $\exp(-\beta \nu^*)$ is indeed close to the phase transition line. 
%	At low temperature and external field $h=0$, and `-'' boundary conditions, 
%	the Ising model can be thought of as a dilute ideal gas of contours, and cluster expansions 
%	show that the magnetization is  
%	\begin{equation*}
%		m_-(\beta) = -1 + \exp(- 2d J \beta) (1+ o(e^{-\eps \beta})) 
%	\end{equation*}  
%	for some suitable $\eps >0$. The associated lattice gas has 
%	pair interaction $-J$ between nearest neighbors and ground state energies 
%	$e_\infty = - d J$,  
%\end{example}

Finally, we have partial results on the low-temperature asymptotics of the virial coefficients
$d_k(\beta)$ from Eq.~\eqref{eq:virial-fe}, to be compared with  Theorem~\ref{thm:maylt}. 

\begin{prop}[Virial coefficients in the absence of polyatomic gas] \label{prop:virsin}
	Let $v$ satisfy Assumption~\ref{ass:pairpot}. 
	Suppose that Eq.~\eqref{eq:gluing} holds for all $m,n\in\N$. Then 
	$\mu_1 = e_\infty$ and for all $k \geq 2$,  
	\begin{equation*}
		\limsup_{\beta \to \infty} \beta^{-1} \log d_k(\beta) \leq - E_k.
	\end{equation*}
	If in addtion the inequality~\eqref{eq:gluing} is strict for all $m,n\in \N$, the previous inequality for the limsup becomes an equality for the limit. 
\end{prop}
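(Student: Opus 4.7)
The claim $\mu_1 = e_\infty$ under \eqref{eq:gluing} is already contained in Proposition~\ref{prop:criterion}, so only the low-temperature behavior of $d_k(\beta)$ remains. The plan is to combine an algebraic expression of $d_n$ in $b_2,\dots,b_n$ with Theorem~\ref{thm:maylt} and with iterated applications of \eqref{eq:gluing}; no direct analysis of 2-connected graph integrals is needed.

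\emph{Polynomial expression.} Treating the Mayer series as formal power series in $z$, I would invert $\rho(z) = z + \sum_{k\ge 2}kb_k z^k$ via Lagrange inversion and substitute into $\beta p = z + \sum_{k\ge 2}b_k z^k$; comparison with $\beta p = \rho - \sum_{n\ge 2}(n-1)d_n\rho^n$ exhibits $d_n$ as a finite polynomial in $b_2,\dots,b_n$. Assigning $z$ and $\rho$ weight $1$ and $b_k$ weight $1-k$ turns all three series into weight-homogeneous expressions of degree $1$, so that
\begin{equation*}
d_n(\beta) = b_n(\beta) + \sum_{\substack{j\ge 2,\ k_1,\dots,k_j\ge 2 \\ k_1+\cdots+k_j = n+j-1}} c_{k_1,\dots,k_j}\, \prod_{i=1}^j b_{k_i}(\beta),
\end{equation*}
with integer coefficients $c_{k_1,\dots,k_j}$ depending only on the tuple (as a sanity check, $d_2 = b_2$ and $d_3 = b_3 - 2 b_2^2$).

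\emph{Low-temperature bound via iterated gluing.} Theorem~\ref{thm:maylt} gives $b_k(\beta) = \exp(-\beta(E_k + \varepsilon_k(\beta)))$ with $\varepsilon_k(\beta)\to 0$ and $b_k(\beta)>0$ for large $\beta$; the maximum of $|\varepsilon_k|$ over the finitely many $k\le n$ is still $o(1)$. Setting $m_i := k_i - 1\ge 1$, the weight constraint reads $\sum_i m_i = n-1$, and iterating \eqref{eq:gluing} yields
\begin{equation*}
E_n = E_{\sum_i m_i + 1} \le \sum_{i=1}^j E_{m_i+1} = \sum_{i=1}^j E_{k_i}.
\end{equation*}
Hence $\bigl|\prod_i b_{k_i}(\beta)\bigr|\le \exp(-\beta(E_n + o(1)))$ for every admissible tuple, and summing the finitely many monomials gives $|d_n(\beta)|\le \exp(-\beta(E_n + o(1)))$, whence $\limsup_{\beta\to\infty}\beta^{-1}\log d_n(\beta)\le -E_n$.

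\emph{Equality under strict gluing.} If \eqref{eq:gluing} is strict for all $m,n\in\N$, then the iterated inequality $\sum_i E_{k_i} > E_n$ is strict whenever $j\ge 2$, so every non-leading monomial decays strictly faster than $b_n\sim e^{-\beta E_n}$. Together with the positivity of $b_n$ for large $\beta$, this forces $d_n(\beta) = b_n(\beta)(1+o(1))$ and hence $\lim_{\beta\to\infty}\beta^{-1}\log d_n(\beta) = -E_n$. The only potentially delicate point of the argument is the weight-homogeneity bookkeeping underlying the constraint $\sum k_i = n+j-1$, which is precisely what makes the iterated gluing inequality align with $E_n$; all the analytic substance is absorbed into Theorem~\ref{thm:maylt}, so I do not expect any further obstacle.
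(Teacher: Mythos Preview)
Your argument is correct and essentially identical to the paper's: the paper quotes the explicit formula~\eqref{eq:dnbn} expressing $d_n$ as a polynomial in $b_2,\dots,b_n$ indexed by $(m_2,\dots,m_n)$ with $\sum_j (j-1)m_j = n-1$, then proves your iterated-gluing inequality $E_n\le\sum_j m_j E_j$ as a separate lemma (Lemma~\ref{lem:eksin}) and combines it with Theorem~\ref{thm:maylt} exactly as you do. One harmless slip: the coefficients $c_{k_1,\dots,k_j}$ need not be integers (e.g.\ the $b_2^3$-coefficient in $d_4$ is $20/3$), but only their $\beta$-independence and finiteness matter for your bound.
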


\begin{prop}[Virial coefficients in the presence of a monatomic-diatomic transition] \label{prop:vircon}
	Suppose that $v$ satisfies Assumption~\ref{ass:pairpot}, 
	and in addition $\mu_1<e_\infty$ and $E_k/(k-1)$ has the unique minimizer $k=2$. Then 
	for every $k \geq 2$, as $\beta \to \infty$, $d_k(\beta)$ eventually has the sign 
	of $(-1)^{k}$,  and
	\begin{equation*}
		\lim_{\beta \to \infty} 
                          \beta^{-1} \log\Bigl( (-1)^{k} d_k(\beta) \Bigr)= - (k-1) E_2 > - E_k. 
	\end{equation*}
	In particular, $|d_k(\beta)| / b_k(\beta) \to \infty$ as $\beta \to \infty$.  	
\end{prop}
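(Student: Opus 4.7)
The plan is to use the Lagrange inversion formula to write $d_k(\beta)$ as a finite universal polynomial in $b_2(\beta),\ldots,b_k(\beta)$, identify the dominant monomial using the hypothesis that $E_m/(m-1)$ has unique minimizer $m=2$, and read off both the sign and the exponential rate from the leading term via Theorem~\ref{thm:maylt}. Concretely, I would set $\psi(z) := 1 + \sum_{m\geq 2} m\,b_m(\beta)\,z^{m-1}$ so that $\rho = z\psi(z)$. Applying Lagrange--B\"urmann inversion to $H(z):=z+\sum_{m\geq 2}b_m(\beta)z^m$, for which $H'=\psi$, gives
\begin{equation*}
(k-1)\,d_k(\beta) \;=\; -\frac{1}{k}\,[z^{k-1}]\,\psi(z)^{-(k-1)}.
\end{equation*}
Expanding $\psi^{-(k-1)} = \sum_{j\geq 0}(-1)^j \binom{k-2+j}{j}(\psi-1)^j$ and collecting powers of $z$, each contributing monomial has the form $\prod_i m_i\,b_{m_i}(\beta)$ with $m_i\in\{2,\ldots,k\}$ and $\sum_i(m_i-1)=k-1$; in particular the right-hand side is a finite, $\beta$-independent polynomial in $b_2(\beta),\ldots,b_k(\beta)$.

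Next I would isolate the dominant monomial. By Theorem~\ref{thm:maylt}, $b_m(\beta)=\exp(-\beta(E_m+o(1)))$ for each fixed $m$, so a monomial $\prod_i b_{m_i}(\beta)$ contributes at the exponential rate $-\beta\sum_i E_{m_i}$. The uniqueness hypothesis gives a gap $\delta := \min_{3\leq m\leq k}\bigl(E_m-(m-1)E_2\bigr)>0$, and combined with $E_m\geq (m-1)E_2$ and the constraint $\sum_i(m_i-1)=k-1$ this yields
\begin{equation*}
\sum_i E_{m_i} \;\geq\; E_2\sum_i (m_i-1) \;=\; (k-1)E_2,
\end{equation*}
with equality if and only if every $m_i=2$. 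This forces $j=k-1$ and singles out the unique optimal monomial $b_2(\beta)^{k-1}$, while every other monomial is exponentially suppressed by at least a factor $e^{-\beta\delta}$.

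Finally I would compute the coefficient of $b_2^{k-1}$: the $j=k-1$ binomial factor is $(-1)^{k-1}\binom{2k-3}{k-1}$, and reaching order $z^{k-1}$ inside $(\psi-1)^{k-1}$ requires selecting the lowest-order piece $2b_2 z$ from each of the $k-1$ factors, contributing an extra $2^{k-1}$. Putting everything together,
\begin{equation*}
d_k(\beta) \;=\; \frac{(-1)^k}{k(k-1)}\binom{2k-3}{k-1}\,2^{k-1}\,b_2(\beta)^{k-1} \;+\; O\!\left(e^{-\beta((k-1)E_2 + \delta/2)}\right).
\end{equation*}
Since the combinatorial prefactor is a strictly positive rational and $b_2(\beta)>0$ eventually (Theorem~\ref{thm:maylt}), $d_k(\beta)$ carries the sign $(-1)^k$ for all sufficiently large $\beta$, and $\beta^{-1}\log\bigl((-1)^k d_k(\beta)\bigr)\to -(k-1)E_2$. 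The last assertion $|d_k(\beta)|/b_k(\beta)\to\infty$ is then immediate from $\beta^{-1}\log b_k(\beta)\to -E_k$ and the strict gap $E_k>(k-1)E_2$.

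The main obstacle is mostly bookkeeping rather than analytic: I need to verify carefully that $[z^{k-1}]\psi^{-(k-1)}$ depends only on $b_2,\ldots,b_k$, so that Theorem~\ref{thm:maylt}'s pointwise-in-$m$ asymptotics suffice for the finitely many monomials at hand, to quantify the ``$o(1)$'' in Theorem~\ref{thm:maylt} well enough (for fixed $m$) that the remainder does not swamp the leading $b_2^{k-1}$ term, and to confirm that the leading combinatorial coefficient, which is a positive multiple of the Catalan number $\binom{2k-3}{k-1}/k$, does not vanish. No new analytic estimates beyond Theorem~\ref{thm:maylt} are required.
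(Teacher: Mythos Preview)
Your proposal is correct and follows essentially the same route as the paper. The paper quotes the classical identity~\eqref{eq:dnbn} from Hill expressing $d_n(\beta)$ as a signed sum $\sum_{\vect m}(-1)^{(\sum m_j)-1}a(\vect m)\prod_j b_j(\beta)^{m_j}$ over $(m_2,\ldots,m_n)$ with $\sum_j(j-1)m_j=n-1$ and $a(\vect m)>0$, then uses exactly your energy inequality (stated there as Lemma~\ref{lem:vircon}) together with Theorem~\ref{thm:maylt} to isolate the term $m_2=n-1$, $m_3=\cdots=m_n=0$. Your Lagrange--B\"urmann computation simply re-derives~\eqref{eq:dnbn}: the monomials you obtain, indexed by tuples $(m_1,\ldots,m_j)$ with $\sum_i(m_i-1)=k-1$, are precisely those of~\eqref{eq:dnbn} after collecting by multiplicities, and your sign $(-1)^{j-1}$ from the binomial expansion (after the extra minus in front) agrees with $(-1)^{(\sum m_j)-1}$. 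As a small bonus you extract the explicit leading constant $\tfrac{2^{k-1}}{k(k-1)}\binom{2k-3}{k-1}$, which coincides with the paper's $a(k-1,0,\ldots,0)=\tfrac{(2k-3)!}{k!(k-1)!}2^{k-1}$.
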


For the two-well example from p.~\pageref{two-well}, we expect $\mu_1<e_\infty$ 
and $E_k/(k-1)$ should have the unique minimizer $p=d+1$, with $d$ the dimension of the configuration space. 
In one dimension, this is proven~\cite[Sect. 4]{ckms}, and gives an example to which Prop.~\ref{prop:vircon} applies. For higher dimensions, we note that the natural generalization 
of Prop.~\ref{prop:vircon} when $\mu_1 = E_p/(p-1)$ for a unique $p \geq 3$ 
 is 
\begin{equation*}
	d_{1+r(p-1)+q}(\beta) =  (-1)^{r} d_{q+1}(\beta) \exp\bigl(- \beta r (E_p +o(1)\bigr),	
\end{equation*}
$r\in \N_0, \ q= 0,1,\ldots,p-2$. We leave the proof, or disproof,  as an open problem, and do not exclude surprises -- it is not impossible that additional conditions, in the spirit of  Eq.~\eqref{eq:gluing}, are needed. 

In any case, we see that the behavior of the virial coefficients is more complex than that 
of the Mayer coefficients. In the absence of a cross-over, at low temperature, each virial coefficients is eventually positive -- as envisioned by Mayer and Mayer~\cite[Chapter 14(f)]{mayerbook}. Note, however, that in~\cite{mayerbook} the authors argue that there is a temperature below which \emph{all} virial coefficients become positive; this statement is much stronger than Prop.~\ref{prop:virsin}. 
When there is a cross-over, in contrast, the sign of the coefficients varies in a periodic way. 

\section{Mayer coefficients at low temperature} \label{sec:maycoeff}

Here we prove Theorem~\ref{thm:maylt}. 
We use the usual short-hand $v_{ij} = v(|x_i-x_j|)$, and $f_{ij}$ as in 
$$ \exp(- \beta v(|x_i - x_j|)) = \exp(- \beta v_{ij}) = 1 + f_{ij}.$$ 
We recall the expression of the Mayer coefficient: it is known that 
\begin{equation} \label{eq:bk} 
	b_k(\beta)= \frac{1}{k!} \sum_{\gamma\ \text{conn.}} \int_{(\R^d)^{k-1}}  
	\prod_{(ij) \in \gamma}  f_{ij}(\vect{x}) \dd x_2\cdots \dd x_k,\quad x_1:=0.
\end{equation}	
The sum is over connected, undirected graphs  $\gamma = (V,E)$ with vertices $1,\ldots,k$, and $\prod_{(ij) \in \gamma}$ is the product over edges $\{i,j\} \in E$, $i<j$ 
(no self-edges $(ii)$). 

Let us start with a look at the $\beta \to \infty$ behavior for an individual graph.  
Observing that 
\begin{equation*}
	f_{ij}(\vect{x}) = \begin{cases}
				(1+o(1)) \exp( - \beta v_{ij}(\vect{x})),&\quad v_{ij}(\vect{x})<0, \\
				- 1 + o(1),& \quad v_{ij}(\vect{x})> 0,
			\end{cases}
\end{equation*}
we get
\begin{equation*}
	\prod_{(ij)\in \gamma} \bigl| f_{ij}(\vect{x}) \bigr| 
		= (1+o(1)) \exp\Bigl( - \beta \sum_{(ij) \in \gamma} v_{ij}(\vect{x}) 
			\mathbf{1}(v_{ij}(\vect{x})<0) \Bigr). 
\end{equation*}
In the exponent, only negative interactions appear. As a result, we may end up with 
energies much smaller than the ground state energy, seemingly contradicting Theorem~\ref{thm:maylt}. The reason is, of course, that there are cancellations between different graphs. In order to get a hold on them, it is convenient to do separate book-keepings for 
``positive'' and ``negative'' edges. Given $\vect{x} = (x_1,\ldots,x_k)$, we define 
\begin{align*}
	\mathcal{E}^+(\vect{x}) & :=\bigl\lbrace \{i,j\} \mid 1\leq i<j\leq k,\ 
		v_{ij}(\vect{x}) >0   \bigr \rbrace \\
	\mathcal{E}^-(\vect{x}) & :=\bigl\lbrace \{i,j\} \mid 1\leq i<j\leq k,\ 
		v_{ij}(\vect{x}) < 0  \bigr \rbrace. 
\end{align*}
and let $\gamma^\pm(\vect{x})$ be the graphs with 
vertices $1,\ldots,k$ and edge sets  $\mathcal{E}^\pm(\vect{x})$.

The next simplifying observation is that if the interaction has a finite range $R>0$, 
$f_{ij}(\vect{x})$ vanishes as soon as $|x_i-x_j|>R$. Therefore we define, for $\vect{x} = (x_1,\ldots,x_k)$, 
\begin{equation*}
	\mathcal{E}(\vect{x})  :=\bigl\lbrace \{i,j\} \mid 1\leq i<j\leq k,\ 
		|x_i-x_j| \leq R  \bigr \rbrace, 
\end{equation*}
and let $\gamma(\vect{x})$ be the graph with vertices $1,\ldots,k$ and edge set $\mathcal{E}(\vect{x})$. We call a configuration $\vect{x}$ \emph{connected} if the graph $\gamma(\vect{x})$ is connected, and write $\mathbf{1}_\conn(\vect{x})$ for the corresponding characteristic function. 
With these notations, for every configuration $\vect{x}$ and every graph $\gamma$, 
\begin{equation*}
  \prod_{(ij) \in \gamma} f_{ij}(\vect{x}) \neq 0 \ \Rightarrow\ \mathcal{E}(\gamma) \subset \mathcal{E}(\vect{x}), 
\end{equation*}
and if $\gamma$ is connected, so is $\vect{x}$. 

We are going to compare the Mayer coefficient with a partition function for connected configurations, 
\begin{equation} \label{eq:zk}
 	Z_k^\cl(\beta) := \frac{1}{k!}  \int_{(\R^d)^{k-1}} e^{-\beta U(0,x_2,\ldots,x_k)} \mathbf{1}_\text{conn}(0,x_2,\ldots,x_k) \dd x_2\cdots \dd  x_k.
\end{equation}

\begin{lemma}[Cluster partition function vs. Mayer coefficient] \label{lem:zkbk}
	\begin{multline} \label{eq:zkbk}
	 Z_k^\cl(\beta) -  b_k (\beta) \\ 
		=  \frac{1}{k!}  
		 \sum_{\gamma\ \text{not conn.}} \int_{(\R^d)^{k-1}} 
			\prod_{(ij) \in \gamma}
	f_{ij}(\vect{x})\mathbf{1}_\conn(\vect{x}) \dd x_2\cdots 
	\dd x_k,\ x_1 =0
	\end{multline}
	where the sum extends over graphs $\gamma$ with vertices $\{1,\ldots, k\}$ that are \emph{not} connected.
\end{lemma}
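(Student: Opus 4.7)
The plan is a direct manipulation of the two integrals: expand $e^{-\beta U}$ in the definition of $Z_k^\cl$, and insert the characteristic function $\mathbf{1}_\conn$ into the definition of $b_k$ without changing its value. Subtracting the two gives precisely the sum over non-connected graphs with the $\mathbf{1}_\conn$ factor.

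First, starting from the definition \eqref{eq:zk}, I write
\begin{equation*}
  e^{-\beta U(x_1,\ldots,x_k)} = \prod_{1\leq i<j \leq k} (1 + f_{ij}(\vect{x})) = \sum_\gamma \prod_{(ij)\in \gamma} f_{ij}(\vect{x}),
\end{equation*}
where the sum runs over \emph{all} graphs $\gamma$ (connected or not) on the vertex set $\{1,\ldots,k\}$. Substituting back and interchanging sum and integral gives
\begin{equation*}
  Z_k^\cl(\beta) = \frac{1}{k!} \sum_\gamma \int_{(\R^d)^{k-1}} \prod_{(ij)\in \gamma} f_{ij}(\vect{x}) \, \mathbf{1}_\conn(\vect{x}) \dd x_2 \cdots \dd x_k, \quad x_1 = 0.
\end{equation*}

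The second step is to rewrite $b_k(\beta)$ so that $\mathbf{1}_\conn$ appears there as well. By finite range of the potential, if $|x_i - x_j| > R$ then $f_{ij}(\vect{x}) = 0$; hence whenever $\prod_{(ij)\in\gamma} f_{ij}(\vect{x}) \neq 0$ we have $\mathcal{E}(\gamma) \subset \mathcal{E}(\vect{x})$, so $\gamma(\vect{x})$ contains $\gamma$ as a spanning subgraph. For a \emph{connected} graph $\gamma$ this forces $\gamma(\vect{x})$ to be connected, i.e.\ $\mathbf{1}_\conn(\vect{x}) = 1$. Therefore inserting $\mathbf{1}_\conn$ into \eqref{eq:bk} changes nothing:
\begin{equation*}
  b_k(\beta) = \frac{1}{k!} \sum_{\gamma\ \text{conn.}} \int_{(\R^d)^{k-1}} \prod_{(ij)\in \gamma} f_{ij}(\vect{x}) \, \mathbf{1}_\conn(\vect{x}) \dd x_2 \cdots \dd x_k.
\end{equation*}

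Subtracting the two expressions yields \eqref{eq:zkbk} immediately, since the contributions from connected graphs cancel and only the non-connected graphs remain. There is no real obstacle here; the only point that requires a moment of care is the observation that the $\mathbf{1}_\conn$ factor comes for free in $b_k$, which rests squarely on the finite range assumption in Assumption~\ref{ass:pairpot}. The interchange of summation and integration is justified routinely (the sum over graphs is finite for fixed $k$, and each $|f_{ij}|$ is bounded and compactly supported in the relative coordinate).
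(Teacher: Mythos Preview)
Your proof is correct and follows essentially the same approach as the paper: expand the Boltzmann factor in $Z_k^\cl$ as a sum over all graphs, use the finite-range observation that $\prod_{(ij)\in\gamma} f_{ij}(\vect{x})$ vanishes for connected $\gamma$ unless $\vect{x}$ is connected, and subtract. The paper organizes the subtraction slightly differently (it splits the expansion of $Z_k^\cl$ into connected and non-connected parts and then identifies the connected part with $b_k$), but the content is identical.
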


\begin{proof}
	In the integral for $Z_k^\cl(\beta)$, write as usual $\exp(-\beta v_{ij})= 1+ f_{ij}$ 
	and expand. This gives a sum over graphs. The graphs that are not connected 
 	correspond to the right-hand side of Eq.~\eqref{eq:zkbk}. The connected graphs 
	yield an integral similar to Eq.~\eqref{eq:bk}, except that there is the additional 
	characteristic function $\mathbf{1}_\conn(\vect{x})$. Noting that 
	$\prod_{(ij)\in \gamma}f_{ij}(\vect{x})$ 
	vanishes if $\gamma$ is connected and $\vect{x}$ is not 
	connected, we can drop the characteristic function without changing the value of 
	the integral, and obtain Eq.~\eqref{eq:zkbk}. 
\end{proof}

%
%Because of the attractive tail of the potential, ground states always have a connected graph of negative edges:
%\begin{lemma}
%	Let $\vect{x} = (x_1,\ldots, x_k) \in (\R^d)^k$ be a $k$-particle ground state. 
%	Then $\gamma^-(\vect{x})$ is connected. 
%\end{lemma}
%
%\begin{proof}
%	Suppose it is not: then we can write $\{1,\ldots,k\} = V \dot \cup W$ with 
%	$v(|x_i-x_j|) \geq 0$ for all $i \in V$, $j \in W$. Let $\vect{x}_V:= (x_i)_{i\in V}$, 
%	$\vect{x}_W:= (x_j)_{j\in W}$. 
%	It follows that 
%	$$ U(\vect{x}) \geq U(\vect{x}_V) + U(\vect{x}_W).$$
%	Thus we can define a new configuration by first pulling apart $\vect{x}_V$ and $\vect{x}_W$ 
%	so that they do not interact at all, and then arranging them side by side so that 
%	at least one strictly negative interaction is gained. The resulting configuration will 
%	have strictly lower energy, contradicting that $\vect{x}$ is a ground state. 
%\end{proof}

For $\gamma$ a graph with vertex set $\{1,\ldots,k\}$,  and 
$\vect{x}= (x_1,\ldots,x_k) \in (\R^d)^k$ a configuration, write
$\gamma^-(\vect{x}) \cap \gamma$ for the graph with vertices $1,\ldots,k$ 
and edge set $\mathcal{E}(\gamma)\cap \mathcal{E}^-(\vect{x})$. 
Thus $\gamma^-(\vect{x})\cap \gamma$ is the subgraph of $\gamma$ consisting 
of the negative edges. 

\begin{lemma} \label{lem:negsum}
	Let $k\in \N$ and $\gamma^-$ a graph with vertices $1,\ldots,k$ 
	with connected components of size $k_1,\ldots,k_r$, $r\in \N$, $\sum_1^r k_i = k$. 
	Then 
	\begin{equation*}
		\left| \sum_{\gamma:\ \gamma^-(\vect{x})\cap \gamma = \gamma^-} 
		 \prod_{(ij) \in \gamma} f_{ij}(\vect{x}) \right| \leq 
			C_k \exp\bigl( - \beta (E_{k_1}+ \cdots + E_{k_r} ) \bigr).
	\end{equation*}
	for some suitable $C_k>0$ which does not depend on $\beta$ or $r,k_1,\ldots,k_r$. 
	A similar estimate holds, for $r\geq 2$, if the sum is further restricted to graphs $\gamma$ with $\gamma^-(\vect{x})\cap \gamma = \gamma^-$ 
	that are not connected.
\end{lemma}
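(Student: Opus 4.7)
The plan is to use the sign structure of the Mayer function $f_{ij}=e^{-\beta v_{ij}}-1$: on negative edges $|f_{ij}|\le e^{-\beta v_{ij}}$, while on positive edges $1+f_{ij}=e^{-\beta v_{ij}}\in(0,1]$, so that signed summation over positive edges produces a clean exponential factor. Combined with the per-component stability bound $U(\vect x^{(s)})\ge E_{k_s}$ on each connected component of $\gamma^-$, this should yield the first inequality with $C_k=1$ and the not-connected version with a combinatorial correction $C_k=2^{\binom{k}{2}}$.

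For the first statement, I would first factor: any $\gamma$ with $\gamma^-(\vect x)\cap\gamma=\gamma^-$ is of the form $\gamma=\gamma^-\sqcup S$ with $S\subset\mathcal E^+(\vect x)$ (edges with $v_{ij}(\vect x)=0$ make the product vanish and may be ignored), so
\[
\sum_{\gamma}\prod_{(ij)\in\gamma}f_{ij}=\prod_{(ij)\in\gamma^-}f_{ij}\cdot\prod_{(ij)\in\mathcal E^+(\vect x)}\bigl(1+f_{ij}\bigr)=\prod_{(ij)\in\gamma^-}f_{ij}\cdot\prod_{(ij)\in\mathcal E^+(\vect x)}e^{-\beta v_{ij}}.
\]
Taking absolute values gives the bound $\exp\bigl(-\beta\bigl[\sum_{(ij)\in\gamma^-}v_{ij}+\sum_{(ij)\in\mathcal E^+(\vect x)}v_{ij}\bigr]\bigr)$. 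Let $V_1,\dots,V_r$ be the connected components of $\gamma^-$, so every edge of $\gamma^-$ lies inside one $V_s$. Splitting the positive edges into intra- and inter-component parts and applying $U(\vect x^{(s)})\ge E_{k_s}$ on each component,
\[
\sum_{(ij)\in\gamma^-}v_{ij}+\sum_{(ij)\in\mathcal E^+(\vect x)}v_{ij}\ge\sum_{s=1}^r U(\vect x^{(s)})\ge\sum_{s=1}^r E_{k_s},
\]
because the negative intra-component edges omitted from $\gamma^-$ contribute $\le 0$ (so including them in $U(\vect x^{(s)})$ only decreases the lower bound) while the inter-component positive edges contribute $\ge 0$. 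This gives the first part with $C_k=1$.

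For the not-connected restriction (with $r\ge 2$), the key observation is that intra-component positive edges live inside an already-connected component of $\gamma^-$ and cannot alter the connectivity of $\gamma^-\cup S$; the condition ``$\gamma^-\cup S$ is not connected'' depends only on $S_{\text{inter}}:=S\cap\mathcal E^+_{\text{inter}}$. The sum over $S$ therefore decouples as
\[
\sum_{\substack{S\subset\mathcal E^+(\vect x)\\ \gamma^-\cup S\ \text{not conn.}}}\prod_{(ij)\in S}f_{ij}=\prod_{(ij)\in\mathcal E^+_{\text{intra}}}e^{-\beta v_{ij}}\cdot\sum_{\substack{S_{\text{inter}}\subset\mathcal E^+_{\text{inter}}\\ \gamma^-\cup S_{\text{inter}}\ \text{not conn.}}}\prod_{(ij)\in S_{\text{inter}}}f_{ij}.
\]
The intra-component factor, combined with $\prod_{\gamma^-}|f_{ij}|\le\prod_{\gamma^-}e^{-\beta v_{ij}}$, still realizes the per-component energy bound $U(\vect x^{(s)})\ge E_{k_s}$ of the previous paragraph, since the intra-component positive edges are precisely what complete the interactions inside $V_s$. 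The inter-component sum is bounded crudely by the triangle inequality by $\prod_{\mathcal E^+_{\text{inter}}}(1+|f_{ij}|)\le 2^{\binom{k}{2}}$, using $|f_{ij}|\le 1$ on positive edges. This yields the second statement with $C_k=2^{\binom{k}{2}}$. The main obstacle is precisely this not-connected case, where the clean global factorization over all positive edges breaks; it is resolved by observing that the restriction constrains only inter-component edges, so intra-component positive edges can still be summed exactly and used to saturate the per-component stability bound, while the inter-component part is absorbed into a $k$-dependent combinatorial constant.
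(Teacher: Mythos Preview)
Your proof is correct and follows essentially the same approach as the paper: factor out the fixed negative edges, sum freely over intra-component positive edges to produce the Boltzmann factors needed for the per-component stability bound $U(\vect x^{(s)})\geq E_{k_s}$, and absorb the constrained inter-component positive edges into a $k$-dependent combinatorial constant. Your treatment is in fact slightly cleaner than the paper's in that you obtain the per-component bound for the unrestricted sum at general $r$ (with $C_k=1$), whereas the paper writes out only the case $r=1$ explicitly before passing to the restricted $r\geq 2$ case.
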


\begin{remark}
	The lemma is also true for a sum further restricted to graphs that are connected. 
	It becomes wrong, in general, for doubly connected graphs. 
\end{remark}

\begin{proof}
	Consider first the case $r=1$, i.e., $\gamma^-$ connected. Then 
	\begin{align*}
		 \sum_{\gamma:\ \gamma^-(\vect{x})\cap \gamma = \gamma^-}  
			\prod_{(ij) \in \gamma} f_{ij}(\vect{x}) 
		& = \left( \prod_{(ij) \in \gamma^-} f_{ij}(\vect{x})\right) \sum_{\mathcal{E} \subset \mathcal{E}^+(\vect{x})} 
				\prod_{(ij) \in \mathcal{E}} f_{ij}(\vect{x}) \\
		& = \left( \prod_{(ij) \in \gamma^-} f_{ij}(\vect{x})\right)\left( 
	\prod_{(ij) \in \mathcal{E}^ +(\vect{x})}  e^{-\beta v_{ij}(\vect{x})} \right).
	\end{align*}
	Noting that  for a negative edge, $0 \leq f_{ij} \leq \exp(- \beta v_{ij})$, it follows that  
	\begin{align*}
		0 \leq \sum_{\gamma:\ \gamma^-(\vect{x})\cap \gamma = \gamma^-} 
			 \prod_{(ij) \in \gamma} f_{ij}(\vect{x}) 
		 & \leq \exp\Bigl(- \beta \Bigl[ \sum_{(ij) \in \gamma^+(\vect{x})} v_{ij}(\vect{x}) + \sum_{(ij) \in \gamma^-} v_{ij}(\vect{x}) \Bigr] \Bigl) \\
		& \leq \exp\bigl(  - \beta U(x_1,\ldots,x_k)\bigr) \leq \exp(-\beta E_k). 
	\end{align*}
	Next, suppose that $r\geq 2$ and that we restrict the sum to graphs $\gamma$ 
	that have $\gamma^-(\vect{x}) \cap \gamma = \gamma^-$ and 
	are disconnected. Such a graph can be constructed from $\gamma^-$ in two steps: first, 
	add 
	positive intra-component edges, i.e., 
	edges $(ij)$ that have $v_{ij} \geq 0$ and connect two labels $i,j$ belonging 
	to the same connected component of $\gamma^-$. There is no restriction on the choices 
	of such edges. Second, add positive inter-component edges. There is a restriction 
	on how many edges we may add, since the resulting graph is required 
	to be disconnected. Let $\mathcal{F}$ be the collection of allowed inter-component edge sets. 	The sum to be estimated takes the form 
	\begin{equation} \label{eq:fsum}
		\sum_{\mathcal{E} \in \mathcal{F}} \left(\prod_{(ij) \in \mathcal{E}} f_{ij}(\vect{x}) \right) 
		\prod_{q=1}^r \left[\Bigl( \prod_{(ij)\in \gamma_q^-} f_{ij}(\vect{x})\Bigr) \Bigl(\prod_{(ij) \in \gamma_q^+(\vect{x})} e^{-\beta v_{ij}(\vect{x})} \Bigr) \right]
	\end{equation}
	Here $\gamma_1^-,\ldots,\gamma_r^-$ are the connected components of $\gamma^-$, 
	and $\gamma_1^+(\vect{x}),\ldots,\gamma_r^+(\vect{x})$ have as edge set the 
	positive intra-component edges.  Noting that $- 1 \leq f_{ij} \leq 0$ for every 
	positive edge, we deduce that~\eqref{eq:fsum} has absolute value bounded by 
	\begin{equation*}
		|\mathcal{F}|\, \exp \bigl( - \beta (E_{k_1}+ \cdots + E_{k_r}) \bigr).
	\end{equation*}
	Since $|\mathcal{F}|$ can be bounded by some $k$-dependent constant, independent 
	of $\gamma^-$, this concludes the proof.
\end{proof}

\begin{proof}[Proof of Theorem~\ref{thm:maylt}]
	Let $\vect{x} = (x_1,\ldots,x_k)$ be an arbitrary configuration and 
	$\gamma$ a graph with vertices $1,\ldots,k$ that is not connected. Then 
	 $\gamma^-(\vect{x}) \cap \gamma$ is not connected either. Therefore 
	\begin{equation*}
		\sum_{\gamma\ \text{not conn.}} \prod_{(ij) \in \gamma} f_{ij}(\vect{x}) 
		= \sum_{\gamma^-\ \text{not conn.}} \sum_{\stackrel{\gamma\ \text{not conn.}:}
		{\gamma^-(\vect{x}) \cap \gamma = \gamma^-}} 
		\prod_{(ij) \in \gamma} f_{ij}(\vect{x}).
	\end{equation*}
	 Lemma~\ref{lem:negsum} then yields a bound on the absolute value of the form 
	\begin{equation} \label{eq:ckek}
		C_k \exp\bigl(- \beta (E_{k_1} + \cdots + E_{k_r}) \bigr) \leq 
			C_k \exp( - \beta (r-1) \eps) \exp( - \beta E_k).
	\end{equation}
	Here we have used that for potentials with an attractive tail, 
	for suitable $\eps>0$ and all $k,q \in\N$, 
	$E_{k+q} \leq E_k + E_q - \eps$ (see the appendix in~\cite{jkm}). 
	Since the set of connected 
	configurations $(0,x_2,\ldots,x_k)$ has a finite Lebesgue volume, 
	 the integral on the right-hand side of Eq.~\eqref{eq:zkbk} 
	has an upper bound similar to the right-hand side of Eq.~\eqref{eq:ckek}.
	On the other hand, because of the attractive tail of the potential, 
	ground states are always connected. The continuity of the potential therefore yields 
	\begin{equation*}
		\lim_{\beta \to \infty} \beta^{-1} \log Z_k^\cl(\beta) = - E_k, 
	\end{equation*}
	and we conclude from Lemma~\ref{lem:zkbk} that for every fixed $k$, as $\beta \to \infty$, 
	\begin{equation*}
		b_k(\beta) =(1+ O(e^{-\eps \beta})) Z_k^\cl(\beta) = \exp\bigl( - \beta (E_k+o(1))\bigr). \qedhere
	\end{equation*}
\end{proof}

\section{Virial coefficients and absence of polyatomic gas} \label{sec:vircoeff}

The virial series coefficients $d_n(\beta)$ have an expression as  a sum of integrals, indexed by graphs, similar to Eq.~\eqref{eq:bk}; the sole difference is that the sum is over doubly connected graphs only (see, e.g., \cite{hill2}). 
%It is tempting, therefore, to start with the analogue of Eq.~\eqref{eq:bk} and try to adapt the analysis from the previous section. A naive guess is
Thus we may write 
\begin{equation*}
	b_n(\beta) = d_n(\beta)+ \text{a sum over graphs that are not doubly connected}.
\end{equation*}
By  Lemma~\ref{lem:zkbk}, the main contribution to 
$b_n(\beta)$ comes from integrals around the ground state. In dimension $2$, the ground state could resemble, for example, a hexagonal lattice and should be doubly connected; thus one 
might think that the main contribution to $b_n(\beta)$ comes from doubly connected graphs 
and therefore $b_n(\beta) = (1+o(1))d_n(\beta)$. 

It turns out that this naive guess leads to the right answer under the assumptions of Prop.~\ref{prop:virsin}, but fails in the setting of Prop.~\ref{prop:vircon}. In the latter case, 
both $d_n(\beta)$ and the sum over graphs that are not doubly connected are much larger, in absolute value, than $b_n(\beta)$. 

The reason for this complex behavior is that the cancellations between different graphs become rather subtle to handle; see also the remark after Lemma~\ref{lem:negsum}.  Therefore we work instead with an expression of the virial coefficients 
in terms of Mayer coefficients. It is known ~\cite[Chapter 5, Eq. (25.30)]{hill1} that
\begin{equation} \label{eq:dnbn}
	d_n(\beta) = \sideset{}{'}\sum_{m_2,\ldots,m_n} (-1)^{(\sum_2^n m_j)-1} 
			a(m_2,\ldots,m_n) b_2(\beta)^{m_2} \cdots b_n(\beta)^{m_n},
\end{equation}
with a sum over $(m_2,\ldots,m_n) \in \N_0^{n-1}$ such that 
\begin{equation} \label{eq:mcond}
	\sum_{j=2}^n (j-1)m_j = n-1,
\end{equation}
and
\begin{equation} \label{eq:aformula}
	a(m_2,\ldots,m_n) = \frac{(n-2 + \sum_2^n m_j)!}{n!} \prod_{j=2}^n 
		\frac{j^{m_j}}{m_j!} >0 . 
\end{equation}
One may check that the contribution of the vector with $m_n =1$ and $m_2 =\cdots = m_{n-1} =  0$  
is equal to $b_n(\beta)$. 

The form of Eq.~\eqref{eq:dnbn}  becomes very natural when the virial series is  derived directly with the help of a cluster expansion in the canonical ensemble, as recently done
in~\cite{pulv-tsag}, see Appendix~\ref{app:clexp}. 

\begin{lemma} \label{lem:eksin}
	Suppose Eq.~\eqref{eq:gluing} holds for all $m,n\in \N$. Then, 
	for all $(m_2,\ldots,m_n) \in \N_0^{n-1}$ satisfying Eq.~\eqref{eq:mcond}, 
	\begin{equation} \label{eq:ekin}
		E_n \leq \sum_{j=2}^n m_j E_j.
	\end{equation} 		
	If the inequality~\eqref{eq:gluing} is strict for all $m,n$, then the previous inequality is 
	strict too. 
\end{lemma}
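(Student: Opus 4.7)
The plan is to argue by induction on the total number of parts
\[
 r:= \sum_{j=2}^n m_j,
\]
reading Eq.~\eqref{eq:gluing} as a ``gluing'' rule: two clusters of sizes $j,k\geq 2$ can be combined into one cluster of size $j+k-1$, with energy at most $E_j+E_k$. The constraint~\eqref{eq:mcond} forces
\[
  r \leq n-1\quad \text{and}\quad r\geq 1,
\]
with $r=1$ if and only if $m_n=1$ and all other $m_j=0$, in which case the inequality~\eqref{eq:ekin} is the trivial equality $E_n\leq E_n$. This is my base case.

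For the inductive step, I would assume $r\geq 2$ and pick two indices $j,k\in\{2,\dots,n\}$ (not necessarily distinct) such that $m_j\geq 1$, $m_k\geq 1$, with $m_j\geq 2$ if I happen to choose $j=k$. Define a new multiplicity vector $(m_2',\dots,m_n')$ by decreasing $m_j$ and $m_k$ by one and increasing $m_{j+k-1}$ by one. The key bookkeeping checks are that (i) $j+k-1\leq n$, which follows since $(j-1)+(k-1)\leq \sum_i(i-1)m_i=n-1$; (ii) the constraint~\eqref{eq:mcond} is preserved, since the change in $\sum_i(i-1)m_i$ equals $-(j-1)-(k-1)+(j+k-2)=0$; and (iii) the new total $r'=r-1$, so the induction hypothesis applies. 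Using~\eqref{eq:gluing} in the form $E_{j+k-1}\leq E_j+E_k$, I get
\[
  \sum_{i=2}^n m_i' E_i = \Bigl(\sum_{i=2}^n m_i E_i\Bigr) - E_j - E_k + E_{j+k-1}
  \leq \sum_{i=2}^n m_i E_i,
\]
and combining this with $E_n\leq \sum_i m_i' E_i$ from the induction hypothesis yields~\eqref{eq:ekin}.

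For the strict version of the statement, the same induction works: if~\eqref{eq:gluing} is strict for every $m,n$, then each reduction step strictly decreases $\sum m_i E_i$, so after at least one step (which occurs whenever $r\geq 2$, i.e., for every decomposition different from the trivial $m_n=1$) the inequality~\eqref{eq:ekin} is strict. I expect no serious obstacle; the only subtlety is purely combinatorial, namely making sure that a pair $(j,k)$ with $m_j\geq 1$, $m_k\geq 1$ (and $m_j\geq 2$ when $j=k$) always exists when $r\geq 2$, which is immediate by pigeonhole on the $r$ parts, and that $j+k-1$ stays inside the index range $\{2,\dots,n\}$, which I already verified above.
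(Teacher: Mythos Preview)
Your proof is correct and follows essentially the same approach as the paper: induction on $r=\sum_j m_j$, with the base case $r=1$ trivial and each step using the gluing inequality~\eqref{eq:gluing} exactly once. The only cosmetic difference is that the paper peels off a single block of size $k$ from the decomposition (applying~\eqref{eq:gluing} to split $E_n\le E_{n-k+1}+E_k$ and then invoking the induction hypothesis at the smaller value $n-k+1$), whereas you merge two blocks $j,k$ into one of size $j+k-1$ (keeping $n$ fixed and showing $\sum m_i'E_i\le\sum m_iE_i$); both reductions decrease $r$ by one and are equivalent.
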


\begin{proof}
	We proceed by induction over $r= \sum_{2}^n m_j$. For $r=1$, 
	the inequality is a trivial equality. For $r=2$, the inequality~\eqref{eq:ekin}
	is the same as~\eqref{eq:gluing}, and true by assumption. 
	For the induction step, suppose that the statement is true, at $r$, 
	for all $n \in \N$. Let $(m_2,\ldots,m_n)$ satisfy Eq.~\eqref{eq:mcond} 
	and such that $\sum_2^n m_j = r+1\geq 2$. Write 
	$m_j = m'_j+ \delta_{jk}$
	for some $k\in \{2,\ldots,n\}$. Then 
	\begin{align*}
		E_n &= E_{1+ (\sum_2^n (j-1) m'_j) + (k-1)} 
			\leq E_ {1+ \sum_2^n (j-1) m'_j} + E_k\quad 
		 \text{by Eq.~\eqref{eq:gluing}}\\
			&\leq \sum_{j=2}^n m'_j E_j + E_k=   \sum_{j=2}^n m_j E_j, 
	\end{align*}
	which proves the claim. The procedure for strict inequalities is exactly the same. 
\end{proof}	

Prop.~\ref{prop:virsin} is an immediate consequence of Eq.~\eqref{eq:dnbn}, 
Lemma~\ref{lem:eksin} and Theorem~\ref{thm:maylt}. 
\begin{proof}[Proof of Prop.~\eqref{prop:virsin}]
	By Lemma~\ref{lem:eksin} and Theorem~\ref{thm:maylt}, all terms 
	in the sum~\eqref{eq:dnbn} are of order at most $\exp(- \beta E_n( 1+o(1))$. 
	If the inequality~\eqref{eq:gluing} is strict, the dominant contribution comes 
	from $b_n(\beta)$ ($r=1$), which is equal to $\exp(-\beta E_n(1+o(1)))$, again 
	by Theorem~\ref{thm:maylt}. 
\end{proof}

\begin{lemma} \label{lem:vircon}
	Suppose that $\mu_1 = E_2 <e_\infty$ and $E_k/(k-1)>E_2$ for all $k \geq 3$. 
	Then, for all $(m_2,\ldots,m_n) \in \N_0^{n-1}$ satisfying Eq.~\eqref{eq:mcond}, 
	\begin{equation*}
		(n-1) E_2 \leq \sum_{j=2}^n m_j E_j 
	\end{equation*} 
	with equality if and only if $m_2 = n-1$, $m_3=\cdots = m_n =0$.  
\end{lemma}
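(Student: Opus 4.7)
The plan is to rewrite the claimed inequality so that the hypothesis on $E_k/(k-1)$ enters transparently. Using the constraint $\sum_{j=2}^n (j-1) m_j = n-1$, I would multiply by $E_2$ to rewrite the right-hand side $(n-1) E_2$ as $\sum_{j=2}^n (j-1) m_j E_2$. Then the desired inequality
\begin{equation*}
	(n-1) E_2 \leq \sum_{j=2}^n m_j E_j
\end{equation*}
is equivalent to
\begin{equation*}
	\sum_{j=2}^n m_j \bigl( E_j - (j-1) E_2 \bigr) \geq 0.
\end{equation*}

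Now I would read off the sign of each summand from the hypotheses. For $j=2$ the factor $E_2 - (2-1) E_2 = 0$, so that term contributes nothing regardless of $m_2$. For $j \geq 3$, the assumption $E_j/(j-1) > E_2$ (which is part of the standing hypothesis that $E_2$ is the unique minimizer of $E_k/(k-1)$) gives $E_j - (j-1) E_2 > 0$. Hence every term is nonnegative, and each $j \geq 3$ term is strictly positive whenever $m_j \geq 1$. This immediately yields the inequality and shows that equality holds if and only if $m_j = 0$ for all $j \geq 3$.

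Finally I would observe that if $m_3 = \cdots = m_n = 0$, then the constraint $\sum_{j=2}^n (j-1) m_j = n-1$ collapses to $m_2 = n-1$, which pins down the unique minimizing tuple. There is no real obstacle: the statement is essentially a one-line weighted rearrangement, and the only thing to check carefully is that the constraint has been used in exactly the right place to convert $(n-1) E_2$ into the diagonal sum.
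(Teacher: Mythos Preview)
Your proof is correct and is essentially the same argument as the paper's: both use $E_j \geq (j-1)E_2$ with strict inequality for $j\geq 3$, sum over the tuple, and invoke the constraint $\sum_{j}(j-1)m_j = n-1$. The only cosmetic difference is that the paper phrases it via a multiset $k_1,\ldots,k_r$ (with $m_j$ the multiplicity of $j$) before converting to the $m_j$ notation, whereas you work directly with the $m_j$'s throughout.
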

Thus the main contribution to $b_n(\beta) - d_n(\beta)$ comes from graphs
 whose doubly connected components all have size~$2$. 

\begin{proof}
	By assumption, 
	 $E_k > (k-1) E_2$ for all $k \geq 3$, from which we obtain 
	\begin{equation*}
		E_{k_1} + \cdots + E_{k_r} \geq \sum_1^r (k_i - 1) E_2 = (n-1) E_2	
	\end{equation*}
	for all $k_1,\ldots,k_r \geq 2$, 
	with equality if and only if all $k_i$'s are equal to~2. 
	Writing $m_j$ for the number of $i$'s such that $k_i = j$, we obtain the desired statement. 
\end{proof}

\begin{proof}[Proof of Prop.~\ref{prop:vircon}]
	By Lemma~\ref{lem:vircon} and Theorem~\ref{thm:maylt}, in Eq.~\eqref{eq:dnbn} 
	all terms are negligible except the one for $m_2= n-1$, $m_3=\cdots = m_n =0$. 
\end{proof}

We conclude with the proof of the sufficient criterion for the absence of polyatomic gas. 

\begin{proof}[Proof of Prop.~\ref{prop:criterion}]
1. Eq.~\eqref{eq:gluing} says that $n\mapsto E_{n+1}$ is subadditive, thus 
	\begin{equation*}
		\mu_1 = \inf_{n\in \N} \frac{E_{n+1}}{n} = \lim_{n\to \infty} \frac{E_{n+1}}{n} = e_\infty.
	\end{equation*}
		
	2. Suppose $d=1$ and $v(r) \leq 0$ for all $r \geq r_\mathrm{hc}$ and $v(r) =\infty$ for $r< r_\mathrm{hc}$. 
	 Let $m,n\in \N$ 
	and $\vect{x}$, $\vect{y}$ be $m+1$ and $n+1$-particle ground states.
	Without loss of generality we may assume that $x_1\leq \cdots \leq x_{m+1}$, 
	$y_1\leq \cdots\leq y_{n+1}$, and $y_1 = x_{m+1}$.
	We construct a $m+n+1$-configuration $\vect{z}$ by gluing the two ground states: 
	set $z_1:= x_1$,..., $z_{m+1} := x_{m+1}$ and $z_{m+j}:= y_j$ for $j=1,\ldots,n+1$.  
	Then  
	\begin{equation} \label{eq:doublesum}
		E_{m+n+1} \leq U(\vect{z}) = E_{m+1} + E_{n+1} 
			+ \sum_{j=1}^m \sum_{k=1}^n v(|x_j - y_{k+1}|). 
	\end{equation}
	Because of the hard core, all particles have mutual distance $\geq r_\mathrm{hc}$ and negative interactions.
	Hence the double sum in Eq.~\eqref{eq:doublesum} is smaller or equal to zero, 
	and we deduce $E_{m+n+1} \leq E_{m+1} + E_{n+1}$. 
	 Since $m$ and $n$ were arbitrary, 
	applying the sufficient criterion from 1., we get $\mu_1 = e_\infty$.
\end{proof}

\section{Bounds for the density $\rho(\beta,\mu)$}

In this section we prove Theorems~\ref{thm:rhomu} and~\ref{thm:cross-over}.

\begin{proof}[Proof of Theorem~\ref{thm:rhomu}]
	1.  Suppose $\mu > e_\infty + C\beta^{-1} \log \beta$ with $C>C_0$. 
      Write $\rho = \exp(-\beta \nu)$ and $\mu(\nu) = \inf_{k\in \N} (E_k - \nu)/k$. 
	For $\nu\leq\nu^*$, $\mu(\nu) = e_\infty$, and for $\nu>\nu^*$, $\mu(\nu)<e_\infty$, see Appendix~\ref{app:variational}.  
	Using Eq.~\eqref{eq:free-energy}, 
	\begin{align*}
		p(\beta,\mu) & = \sup_{0<\rho<\rho_\mathrm{cp}} (\mu \rho - f(\beta,\rho)) \\
				& \geq \sup_{0<\rho<\rho_0} (\mu \rho - f(\beta,\rho)) \\
				& \geq \sup_{0<\rho<\rho_0} \Bigl(\mu \rho - \mu(\nu) \rho 
						- C_0 \rho \beta^{-1} \log \beta \Bigr) \\
				& \geq \sup_{0< \rho< \rho_0} 
					\Bigl( ( \mu - e_\infty - C_0\beta^{-1}\log \beta) \rho \Bigr)\\
				& = \rho_0 \bigl( \mu - e_\infty - C_0\beta^{-1}\log \beta\bigr), 
	\end{align*}
	since $\mu>e_\infty + C_0 \beta^{-1} \log \beta$. On the other hand, let 
	$\rho = \rho(\beta,\mu)$ be any maximizer of $\rho \mu - f(\beta,\rho)$. 	
	If $\rho \geq \rho_0$, we are done. If $\exp(-\beta \nu^*) \leq \rho \leq \rho_0$, then 
	\begin{equation*}
		\rho (\mu - e_\infty + C_0 \beta^{-1} \log \beta)  \geq p(\beta,\mu) \geq 
			 \rho_0 \bigl( \mu - e_\infty - C_0\beta^{-1}\log \beta\bigr), 
	\end{equation*}	
	which gives 
	\begin{align*}
		\mu & \leq e_\infty + C_0 \frac{\rho_0 + \rho}{\rho_0 - \rho } \beta^{-1} \log \beta. 
	\end{align*}
	Since $\mu \geq e_\infty + C \beta^{-1} \log \beta$, we obtain 
	$C_0(\rho_0 + \rho) \geq C (\rho_0 - \rho)$ whence 
	$$ \rho \geq \frac{C - C_0 }{C + C_0}\, \rho_0. $$ 
	Thus we are left with the case $\rho<\exp(-\beta \nu^*)$, i.e., $\nu>\nu^*$.
	Noting $\mu(\nu) \geq e_\infty - \nu$ for all $\nu$, we get
	\begin{align*}
		 \rho \mu - \rho (e_\infty - \nu) + C_0 \rho \beta^{-1} \log \beta  
		\geq  p(\beta,\mu) 
		  \geq \rho_0 \bigl( \mu - e_\infty - C_0\beta^{-1}\log \beta\bigr). 
	\end{align*}
	Since $\rho \nu = \nu e^{-\beta \nu} \leq \beta^{-1}$, we obtain
	\begin{align*}
		\mu & \leq e_\infty + \frac{\beta^{-1}+ C_0(\rho_0+\rho) \beta^{-1} \log \beta}{\rho_0-\rho} \\
		 	& \leq e_\infty + \frac{\beta^{-1}+ C_0(\rho_0+e^{-\beta \nu^*}) \beta^{-1} \log \beta}{\rho_0-e^{-\beta \nu^*}} 
			= e_\infty + C_0 (1 +o(1)) \beta^{-1} \log \beta,
	\end{align*}
	which for sufficiently large $\beta$ is 
	in contradiction with the assumption on $\mu$. 

	2. For $\mu<e_\infty - C \beta^{-1} \log \beta$ with $C>1$ we use Theorem~\ref{thm:maylt}. Define 
	$R>0$ by
	\begin{equation} \label{eq:radius}
		R e^{-\beta e_\infty} \beta |||\bar v||| =1/e, 
	\end{equation}
	see Eq.~\eqref{eq:maycrit}. For $K \in \N$, using Eq.~\eqref{eq:remainder}, we have 
	\begin{align*}
		\rho(\beta,\mu) & \leq \sum_{k=1}^K k \exp\bigl( \beta (k \mu - E_k +o (1)) \bigr) 
			+  \left(\frac{z}{R}\right) ^K z \sum_{k=K+1}^\infty k |b_k(\beta)| R^{k-1} \\
			& \leq  e^{- \beta (\nu^* +o(1))} + \left(\frac{z}{R}\right) ^K 
				(e-1) z e^{-\beta e_\infty}\\
			& = e^{- \beta (\nu^* +o(1))}  \\
				& \qquad + 
				(e-1) \exp\Bigl( K\Bigl[ (- C+1)\beta^{-1}\log \beta +\beta^{-1} (1-\log ||| \bar v|||) \Bigr] \Bigr) 
	\end{align*}
	As $\beta \to \infty$, the second term is of order $\beta^{-K (C-1 +o(1))}$. Since 
	$K$ could be chosen arbitrarily large, this completes the proof of Theorem~\ref{thm:rhomu}.
\end{proof}

\begin{proof}[Proof of Theorem~\ref{thm:cross-over}]
	We proceed analogously to the proof of the second part of Theorem~\ref{thm:rhomu}. 
	Let $R = \exp(\beta (e_\infty +o(1)))$ be as in Eq.~\eqref{eq:radius}. 
	Fix $\mu <e_\infty$. 
	For $K \in \N$, using Eq.~\eqref{eq:remainder},
		\begin{equation} \label{eq:co-1}
			\Bigl|\rho(\beta,\mu) - \sum_{k=1}^K k \exp\bigl( \beta (k \mu - E_k +o (1)) \bigr)\Bigr| 
				\leq  \left(\frac{z}{R}\right) ^K 
					(e-1) z e^{-\beta e_\infty}.
		\end{equation}
	Since $E_k - k \mu = k (e_\infty - \mu+o(1))  \to \infty$ 
	as $k \to \infty$, there is a finite $k(\mu)$ minimizing $E_k - k \mu$. 
	Choosing $K$ large enough so that $K \geq k(\mu)$ and $K(e_\infty - \mu) > | \inf_k(E_k - k \mu) |$,
	Eq.~\eqref{eq:cross-density} follows from the inequality~\eqref{eq:co-1}. 
	
	If $(E_k - k \mu)_{k \in \N}$ has a unique minimizer $k(\mu) \in \N$, the previous argument actually yields 
	\begin{equation*}
		\rho(\beta, \mu) =  k(\mu) b_{k(\mu)}(\beta) e^{\beta k(\mu) \mu} (1+o(1)). 
	\end{equation*}
	An analogous argument gives 
	\begin{equation*}
		\beta p(\beta, \mu) =  b_{k(\mu)}(\beta) e^{\beta k(\mu) \mu} (1+o(1)), 
	\end{equation*}
	and Eq.~\eqref{eq:cross-density} follows.  
\end{proof}

\section{Asymptotics of $\rmay$, $\rho_\sat$, $\rho^\may$ and $\rvir$} 

Here we prove Theorems~\ref{thm:mayer} and~\ref{thm:virconv}.

\begin{proof}[Proof of Theorem~\ref{thm:mayer}]
	If the activity satisfies Eq.~\eqref{eq:maycrit}, then $z<\rmay_\Lambda(\beta)$, 
	see~\cite[Theorem 2.1]{pog-ue}, and the lower bounds follow just as in 
	 Eq.~\eqref{eq:musat-lb}.
	For the upper bound, we use a  result from~\cite{penrose}: for all $k \in \N$, 
	\begin{equation*}
		\rmay_\Lambda(\beta) \leq \Bigl( \frac{k \exp(- \beta e_\infty)}{(k-1) |b_{k,\Lambda}(\beta)|} \Bigr)^{1/(k-1)}.
	\end{equation*} 
	Here $b_{k,\Lambda}(\beta)$ are the coefficients of the finite volume pressure-density series. They converge to $b_k(\beta)$ as $|\Lambda|\to \infty$, whence 
	\begin{equation*}
		\limsup_{|\Lambda|\to \infty} \rmay_\Lambda(\beta) 
			\leq \Bigl( \frac{k \exp(- \beta e_\infty)}{(k-1) |b_{k}(\beta)|} \Bigr)^{1/(k-1)}.
	\end{equation*}
	We deduce from Theorem~\ref{thm:maylt} that for every $k\in \N$, 
	\begin{equation*}
		\limsup_{\beta \to \infty} \limsup_{|\Lambda|\to \infty}\beta^{-1}\log \rmay_\Lambda(\beta)
		\leq \frac{- e_\infty + E_k }{k-1}.
	\end{equation*}
	We conclude by letting $k\to \infty$ in the upper bound. 
\end{proof}
 
For the proof of Theorem~\ref{thm:virconv}, we
 start with the lower bound on the density of saturated gas $\rho_\sat(\beta)$ 
and the density $\rho^\may(\beta)$ 
delimiting the physical parameter region covered by the Mayer series. 

\begin{proof}[Proof of Eq.~\eqref{eq:vir-a}]
	We observe that $\rho_\sat(\beta) \geq \rho^\may(\beta) \geq \rho(\beta,\mu)$ 
	for all $\beta$ and all $\mu \leq \mu_\sat(\beta)$. By Eq.~\eqref{eq:musat-lb}  
	and Theorem~\ref{thm:cross-over}, it follows that 
	for every $\mu<e_\infty$, 
	\begin{equation*}
		\liminf_{\beta \to \infty} \beta^{-1} \log \rho_\sat(\beta) 
		 \geq \liminf_{\beta \to \infty} \beta^{-1} \log \rho^\may(\beta)
		 \geq  \sup_{k\in \N} (k \mu - E_k).
	\end{equation*}
	Noting that 
	\begin{equation*}
		\sup_{\mu<e_\infty} \sup_{k \in \N} (k \mu - E_k) 
		= \sup_{k \in \N} \sup_{\mu<e_\infty} (k \mu - E_k) = \sup_{k\in \N} (k e_\infty - E_k) = - \nu^*,
	\end{equation*}
	we deduce Eq.~\eqref{eq:vir-a}.
\end{proof}

\begin{proof}[Proof of Eq.~\eqref{eq:vir-b}] 
	Write the pressure-density  series  as $\beta p = \rho + \sum_{n\geq 2} c_n(\beta) \rho^n$. 
	We start from the contour integral, see~\cite{leb-pen} or~\cite[Chapter 4.3]{ruelle-book}, 
	\begin{equation*}
		c_n(\beta) = \frac{\beta^{-1}}{2\pi i} \oint_C \frac{\dd z}{ n z \rho(z)^{n-1}}.
	\end{equation*}
	Here the density $\rho(z) = \sum_{n=1}^\infty n b_n(\beta) z^n$ is extended to
	complex activities $z$, and we integrate on a circle $C$ of 
	radius $\exp(\beta \mu) < \exp(\beta \mu_1)$. For $\beta$ sufficiently large, 
	we know that $\exp(\beta \mu) < \rmay(\beta)$, and we are going to check 
	that $|\rho(z)|>0$ for $|z| = \exp(\beta \mu)$. To this aim we write 
	\begin{equation*}
		|\rho(z)| \geq |z|\, \Bigl( 1 - \sum_{k\geq 2} k |b_k(\beta)|\, |z|^{k-1} \Bigr). 
	\end{equation*}
	For every fixed $K \geq 2$, as $\beta \to \infty$, 
	\begin{align*}
		\sum_{k=2}^K k |b_k(\beta)| (e^{\beta \mu})^{k-1} 
		& = \sum_{k=2}^K k \exp\Bigl( \beta(k-1) \bigl[\mu -  \frac{E_k}{k-1}  + o(1) \bigr]\Bigr) \\
		& \leq \sum_{k=2}^K k \exp \Bigl( \beta (k-1) \bigl(\mu - \mu_1 + o(1)\bigr) \Bigr) \\
		& \leq \const(K,\mu) \exp \Bigl(\beta\bigl (\mu - \mu_1 +o(1)\bigr) \Bigr).    
	\end{align*}
	Furthermore, if $R = \exp( \beta (e_\infty +o(1)))$ is as in Eq.~\eqref{eq:radius}, 
	and $\beta$ sufficiently large so that $\exp( \beta \mu) / R \leq \exp(- \beta \eps)$
	with suitable $\eps>0$, 
	\begin{align*}
		\sum_{k = K+1}^\infty k |b_k(\beta)|\, |z|^{k-1} 
		& \leq \left(\frac{z}{R} \right)^{K}  (e-1) \exp(- \beta e_\infty) \\ 
		& \leq (e-1) \exp\bigl( - \beta (K \eps + e_\infty) \bigr). 
	\end{align*}
	We choose $K \in \N$ such that $K \eps + e_\infty > \mu_1 - \mu$ and combine 
	the previous estimates. We obtain that as $\beta \to \infty$, 
	$\sum_{k \geq 2} k |b_k(\beta)|\, |z|^{k-1}$ is of order at most $\exp( \beta (\mu-\mu_1))$ 
	and, in particular, goes to $0$, so that $\rho(z) \neq 0$. 

	We can plug the lower bound for $|\rho(z)|$ into the contour integral. This yields 
	\begin{equation*}
		|c_n| \leq \frac{\beta^{-1}}{n}\times  \frac{1}{ \bigl[ (1 + o(1)) \exp(\beta \mu)\bigr]^{n-1}}
	\end{equation*}
	whence $\rvir(\beta) \geq (1+o(1)) \exp(\beta \mu)$ and 
	\begin{equation*}
		\liminf_{\beta \to \infty} \beta^{-1} \log \rvir(\beta) \geq \mu.
	\end{equation*} 
	This is true for every $\mu <\mu_1 = - \nu_1$, and the inequality~\eqref{eq:vir-b} follows. 
\end{proof}

\begin{prop} \label{prop:rhozero}
	Under the assumptions of Theorem~\ref{thm:virconv}, suppose that in addition 
	$\mu_1 = E_2 <e_\infty$. 
	Then, for suitable $\beta_1, C_1, \eps>0$ and all $\beta \geq \beta_1$, 
	the equation 
	\begin{equation*}
		\frac{\dd \rho}{\dd z} (z) =1+ \sum_{k=2}^\infty k^2  b_k(\beta) z^{k-1} =0 
	\end{equation*}
	has a solution $z_0(\beta)$ at distance $\leq C_1 \exp( -\eps \beta)/ b_2(\beta)$ 
	of $- 1/(4 b_2(\beta))$, and $\rho(z_0(\beta)) \neq 0$. 
	If $\mu_1 = E_p/(p-1) < e_\infty$ for some $p \geq 3$, a similar statement holds 
	with $-1/4 b_2(\beta)$ replaced with one of the roots 
	of the equation $1 + p^2 b_p(\beta) z^p =0$.
\end{prop}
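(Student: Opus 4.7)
The plan is to apply Rouché's theorem. By Theorem~\ref{thm:maylt} and the hypothesis $E_k/(k-1) > E_2$ for all $k \geq 3$, one has $b_2(\beta) > 0$ for large $\beta$ and $b_2(\beta) = \exp(-\beta(E_2 + o(1)))$. Thus $z_2 := -1/(4 b_2(\beta))$ is the simple root of the degree-one truncation $g_0(z) := 1 + 4 b_2(\beta) z$ of
\begin{equation*}
  \phi(z) := \rho'(z) = 1 + \sum_{k \geq 2} k^2 b_k(\beta) z^{k-1}.
\end{equation*}
Writing $\phi = g_0 + R$ with $R(z) := \sum_{k \geq 3} k^2 b_k(\beta) z^{k-1}$, it suffices to show that for suitable $C_1, \eps > 0$ one has $|R(z)| < |g_0(z)| = 4|b_2(\beta)| r(\beta)$ on the circle $|z - z_2| = r(\beta) := C_1 e^{-\eps\beta}/|b_2(\beta)|$; Rouché then provides a unique zero $z_0(\beta)$ of $\phi$ inside this disk.

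Set $\delta_k := E_k - (k-1) E_2$ for $k \geq 3$. Each $\delta_k > 0$ by hypothesis, and the lower bound $E_k \geq k e_\infty$ gives $\delta_k \geq k(e_\infty - E_2) + E_2 \to \infty$, so $\delta := \inf_{k \geq 3} \delta_k > 0$. On the disk $|z| \leq 2|z_2|$, Theorem~\ref{thm:maylt} yields $k^2 |b_k(\beta)| |z|^{k-1} \leq (k^2 / 2^{k-1}) \exp(-\beta \delta_k(1+o(1)))$ for each finite $k$, so the partial sum $\sum_{k=3}^K$ is $O(e^{-\beta \delta/2})$ uniformly on the disk. For the tail, $|z_2|/R = O(\beta e^{-\beta(e_\infty - E_2)})$ with $R$ from~\eqref{eq:radius}, hence $2|z| \leq R$ for large $\beta$; using $k \leq 2^{k-1}$ and~\eqref{eq:remainder} gives
\begin{equation*}
  \sum_{k > K} k^2 |b_k(\beta)| |z|^{k-1} \leq \sum_{k > K} k |b_k(\beta)| (2|z|)^{k-1} \leq \Bigl(\tfrac{2|z|}{R}\Bigr)^{K} (e-1) e^{-\beta e_\infty}.
\end{equation*}
Picking $K$ large enough to push this tail below $e^{-\beta \delta/2}$, then $\eps < \delta/2$ and $C_1$ large, yields the Rouché inequality.

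To verify $\rho(z_0(\beta)) \neq 0$, use $\rho'(z_0) = 0$ and Taylor-expand around $z_0$: $\rho(z_2) = \rho(z_0) + \tfrac12 (z_2 - z_0)^2 \rho''(\zeta)$ for some $\zeta$ on the segment. A direct computation gives $\rho(z_2) = z_2(1 + 2 b_2(\beta) z_2) + \sum_{k \geq 3} k b_k(\beta) z_2^k = z_2/2 + O(e^{-\beta \delta} |z_2|)$, of magnitude $\sim e^{\beta E_2}$; the same tail argument shows $|\rho''(\zeta)| = O(|b_2(\beta)|)$ on the disk, so the correction is $O(r(\beta)^2 |b_2(\beta)|) = O(e^{-2\eps \beta} e^{\beta E_2})$, exponentially smaller than $|\rho(z_2)|$. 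The case $\mu_1 = E_p/(p-1)$ with unique minimizer $p \geq 3$ proceeds identically: replace $g_0$ by $1 + p^2 b_p(\beta) z^{p-1}$ (any one of its $p-1$ simple roots, of modulus $\sim e^{\beta \mu_1}$, serves as the center), and use the strict inequality $E_k/(k-1) > \mu_1$ for $k \neq p$ to bound the remainder. The main obstacle is tail control, since the disk radius $\sim e^{\beta \mu_1}$ shifts with $\beta$ relative to the Mayer radius $R$; the Pogosyan--Ueltschi bound~\eqref{eq:remainder} applied just below $R$ is essential here, and naive termwise estimates would not suffice.
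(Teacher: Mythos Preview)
Your argument is correct and complete, but it takes a different route from the paper. The paper rescales $\hat z := b_2(\beta) z$, introduces the normalized coefficients $a_k(\beta) := k^2 b_k(\beta)/b_2(\beta)^{k-1}$, and views the equation as $F(\vect{a},\hat z) = 1 + 4\hat z + \sum_{k\geq 3} a_k \hat z^{k-1} = 0$; it then applies a Banach-space implicit function theorem, perturbing around $(\vect{a},\hat z) = (\vect{0},-1/4)$, after showing $\sum_k k^s |a_k(\beta)| \to 0$ in a weighted $\ell^1$-norm. You instead apply Rouch\'e's theorem directly in the $z$-plane, comparing $\phi = g_0 + R$ on a shrinking circle about $z_2$.

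Both proofs rest on the same analytic input---the smallness of $|b_k(\beta)|/b_2(\beta)^{k-1}$ for $k\geq 3$, controlled termwise via Theorem~\ref{thm:maylt} and uniformly in the tail via the bound~\eqref{eq:remainder}. Your Rouch\'e argument is more elementary (no functional-analytic machinery) and gives the same quantitative conclusion; the implicit function theorem is slightly more systematic and would, for instance, yield continuous dependence of $z_0$ on auxiliary parameters without extra work. For the verification of $\rho(z_0(\beta))\neq 0$, the paper simply evaluates $\rho$ at $z_0$ in the rescaled variables, obtaining $\rho(z_0) = -\tfrac12 z_0(1+o(1))$; your Taylor-expansion argument around $z_0$ reaches the same conclusion and is equally clean. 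Your remark that a naive termwise bound would not control the tail (one really needs~\eqref{eq:remainder} at the edge of the Mayer disk) is well taken and matches the paper's use of the same device.
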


\begin{proof}[Proof of Eq.~\eqref{eq:vir-c}] 
	For sufficiently small $\rho$, the density-activity relation can be inverted: 
	there is a function $\zeta(\rho)$, analytic in a domain containing $0$,
	such that for small $z$, $\zeta(\rho(z))=z$, and the restriction of $\zeta$ 
	to some neighborhood of $0$ is  injective. 
	 The virial series is given by the composition 
	\begin{equation*}
		P(\rho) = \sum_{n=1}^\infty c_n \rho^n 
			= \sum_{n=1}^\infty b_n \Bigl(\zeta(\rho)\Bigr)^n.
	\end{equation*}
	For sufficiently small  $z$, we have $\zeta(\rho(z)) = z$ and 
	\begin{equation} \label{eq:Gr}
		P'(\rho(z)) = \frac{\rho(z)}{z \rho'(z)}. 
	\end{equation}
	The relation extends by analyticity to every domain $D$ 
	such that $\rho$ is analytic in $z \in D$ and $P$ is analytic in 
	$\rho(D)$. Now, from Prop.~\ref{prop:rhozero}, we know that 
	$\rho'(z_0) /\rho(z_0)= 0$ with $0<|z_0| <\rmay$. Eq.~\eqref{eq:Gr} 
	cannot be true at $z = z_0$. Let $D$ be an open disk centered at $0$ with radius 
	$> |z_0|$. The function $P$ cannot be analytic in all of $\rho(D)$, 
	hence there must be some $z \in D$ such that 
	\begin{equation} \label{eq:boundrho}
		\rvir \leq |\rho(z)| \leq \sum_{k=1}^\infty k |b_k(\beta) z^k|.
	\end{equation}
	Let $\delta>0$ small enough so that $\mu_1+\delta<e_\infty$. 
	For sufficiently large $\beta$, we may choose $|z| \leq \exp( \beta (\mu_1 + \delta))$ 
	in Eq.~\eqref{eq:boundrho}. The usual procedure shows that 
	\begin{equation*}
		\limsup_{\beta \to \infty} \beta^{-1} \log \rvir(\beta) 
			\leq \sup_{k\in\N}\bigl(k (\mu_1+\delta) - E_k\bigr).
	\end{equation*}
	Noting that $\sup_k (k\mu - E_k)$ is locally bounded and convex, hence continuous, 
	in $\mu < e_\infty$, we can let 
	 $\delta \searrow 0$, which yields  
	\begin{equation*}
		\limsup_{\beta \to \infty} \beta^{-1} \log \rvir(\beta) 
			\leq \sup_{k\in\N} (k \mu_1 - E_k) = - \nu_1. \qedhere
	\end{equation*} 
\end{proof}

\begin{proof}[Proof of Prop.~\ref{prop:rhozero}]
	The idea is to use an implicit function theorem, perturbing around $\exp(-\beta) = 0$. 
	We give the proof for $\mu_1 = E_2 <\infty$. The proof for $\mu_1 = E_p/(p-1)$, 
	$p \geq 3$, is similar. 
	For sufficiently large $\beta$, $b_2(\beta) >0$. 
	Let 
	\begin{equation*}
		a_k(\beta):= k^2 \frac{b_k(\beta)}{b_2(\beta)^{k-1}}, \quad 
		\hat z:= b_2(\beta) z. 
	\end{equation*}
	The equation to be solved becomes 
	\begin{equation*}
		1 + 4 \hat z + \sum_{k \geq 3} a_k(\beta) \hat z^{k-1} = 0.  
	\end{equation*}  
	By assumption, for suitable $\Delta >0$ and all $k\geq 3$, 
	$E_k \geq \mu_1 + (k-1) \Delta$, and $\mu_1 = E_2 < e_\infty$. 
	Therefore, for every fixed $k\geq 3$, 
	as $\beta \to \infty$,   
	\begin{equation*}
		|a_k(\beta)| \leq  k^2 \exp\bigl(- \beta(k-1) (\Delta +o(1))\bigr) \to 0.  
	\end{equation*}
	Moreover, for every $s>0$ and $K \in \N$,  
	\begin{align*}
		\sum_{k =K+1}^\infty k^{1+s} \left| \frac{b_k(\beta)}{b_2(\beta)^{k-1}} \right| 
			& \leq \Bigl( \sup_{k \geq K} (k+1)^s (b_2(\beta) R)^{-k} \Bigr) (e-1) e^{-\beta e_\infty}.  
	\end{align*}
	with $R =\exp( \beta (e_\infty +o(1)))$ as in Eq.~\eqref{eq:radius}.  Thus 
	\begin{equation*}
		b_2(\beta) R = \exp\bigl(\beta (e_\infty - E_2 +o(1))\bigr) \to \infty.
	\end{equation*}
	It follows that for every $s>0$, and suitable $\eps_s >0$, as $\beta \to \infty$, 
	\begin{equation*}
		\sum_{k=3}^\infty k^s |a_k(\beta)| = O\bigl(\exp(- \eps_s \beta)\bigr).
	\end{equation*}
	Now let $X$ be the Banach space of sequences $(a_k)_{k \geq 3}$ 
	with weighted norm $||\vect{a}||:= \sum_k k^s |a_k|$. Set 
	\begin{equation*}
		F(\vect{a},\hat z):= 1+ 4 \hat z + \sum_{k \geq 3} a_k \hat z^{k-1}.
	\end{equation*}
	The equation $F(\vect{0},\hat z) =0$ has the unique solution 
	$\hat z_0 = - 1/4$, and in a neighborhood of $(\vect{0},\hat z_0)$, 
	for suitable choice of $s$, 
	$F$ is continuously Fr{\'e}chet-differentiable. Moreover 
	$\partial_{\hat z} F(\vect{0},\hat z) = 4 \neq 0$.  As a consequence, we 
	can apply a Banach space implicit function theorem~\cite[Chap. 4]{zeidler}. It follows 
	in particular that as $||\vect{a}||\to 0$, 
	the equation $F(\vect{a},\hat z) =0$ has a solution $\hat z(\vect{a}) = \hat z_0 + O(||\vect{a}||)$. Applying this to $\vect{a}(\beta)= (a_k(\beta))_{k\in\N}$, we obtain the solution 
	$b_2(\beta) z_0(\beta) = - 1/4 + O(e^{-\beta \eps_s})$. 
	
	For the density, we observe that 
	\begin{equation*}
		\rho(z_0(\beta)) = z_0(\beta) \Bigl( 1+ 2 b_2(\beta) z_0(\beta) + O(e^{-\eps_s \beta}) \Bigr) = - \frac{1}{2} z_0(\beta) (1+ O(e^{-\eps_s \beta})). 
	\end{equation*}
	It follows that for sufficiently large $\beta$, $\rho(z_0(\beta)) \neq 0$. 
\end{proof}

\appendix

\section{Two auxiliary variational problems} \label{app:variational}

Throughout this section we assume that $v$ is a stable pair potential with attractive tail. 
Consider the following two variational problems 
\begin{alignat*}{2}
	\nu(\mu)&:= \inf_{k\in \N} (E_k - k \mu),&\quad \mu &\leq e_\infty,\\
	\mu(\nu)&:= \inf_{k \in \N} \frac{E_k - \nu}{k},& \quad \nu &>0. 
\end{alignat*}
The first variational problem appears in Theorem~\ref{thm:cross-over}, and minimizers 
$k(\mu)$ correspond to the favored size of molecules in the gas phase 
as $\beta \to \infty$ at fixed $\mu$.  
The second problem appears in Eq.~\eqref{eq:free-energy} and, as shown in~\cite{jkm},
minimizers $k(\nu)$ correspond to favored cluster or molecule sizes as $\beta \to \infty$ 
and $\rho \to 0$ along $\rho = \exp(-\beta \nu)$, at fixed $\nu$. 

 Recall that 
for potentials with an attractive tail, $\nu^*:= \inf_k (E_k - k e_\infty) >0$. 

\begin{lemma}[Concavity, monotonicity and equivalence] \label{lem:aux1}
	Let $v$ be a stable pair interaction with attractive tail. Then: 
	\begin{enumerate}
		\item The function $\mu \mapsto \nu(\mu)$ is strictly decreasing, piecewise affine 
				 and 
			concave in $\mu \in(- \infty,e_\infty]$. The function $\nu\mapsto \mu(\nu)$ 
			is decreasing, piecewise affine 
			 and concave in $\nu \in [0,\infty)$; it is strictly decreasing  
			in $\nu \in [\nu^*,\infty)$ and equals $\mu(\nu) = e_\infty$ 
			for $\nu \leq \nu^*$. 
 		\item  %The function $\nu(\mu)$ 
% 			is the inverse of $[\nu^*,\infty)\ni \nu \mapsto \mu(\nu)$: 
			For $\mu\leq e_\infty$ and $\nu\geq \nu^*$, 	
			$\nu = \nu(\mu)$ if and only if $\mu(\nu)= \nu$.			
	\end{enumerate}
\end{lemma}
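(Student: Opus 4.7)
The plan is to view both $\nu(\mu)$ and $\mu(\nu)$ as lower envelopes of one-parameter families of affine functions — a discrete Legendre transform of the sequence $(E_k)$ — and read off concavity, monotonicity, and piecewise affinity from that structure, then close part~(2) by Legendre duality. The key technical input in all cases is coercivity of the minimizing index.

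For $\nu(\mu) = \inf_{k\in\N}(E_k - k\mu)$, concavity is automatic from the envelope structure. For any $\mu_0 < e_\infty$ one has $E_k - k\mu \geq k(e_\infty - \mu_0) \to \infty$ uniformly on $[\mu_0, e_\infty]$, so on this interval only finitely many indices can attain or nearly attain the infimum; this yields piecewise affinity and existence of a minimizer at each $\mu$ in the interval. Strict decrease then follows by inserting a minimizer $k^*$ at $\mu_2$ into the definition at $\mu_1$: $\nu(\mu_1) \leq E_{k^*} - k^*\mu_1 = \nu(\mu_2) + k^*(\mu_2-\mu_1)$, symmetrically $\nu(\mu_2) \leq \nu(\mu_1) - k_1^*(\mu_2-\mu_1)$, and $k \geq 1$ forces strict inequality. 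The endpoint $\mu = e_\infty$ is handled by the same slope argument applied to an approximate minimizer.

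For $\mu(\nu) = \inf_{k\in\N} (E_k - \nu)/k$, concavity and local piecewise affinity are established in the same way, but now the slopes $-1/k$ accumulate at $0$ and the intercepts $E_k/k$ at $e_\infty$. The definition of $\nu^*$ gives $E_k \geq k e_\infty + \nu^*$, so $(E_k - \nu)/k \geq e_\infty + (\nu^* - \nu)/k$; for $\nu \leq \nu^*$ every affine piece sits above $e_\infty$, while the subsequence realizing $E_k/k \to e_\infty$ shows the infimum equals $e_\infty$ (though it need not be attained). For $\nu > \nu^*$, choosing any $k$ nearly realizing $\nu^*$ produces a piece strictly below $e_\infty$; coercivity then gives a finite minimizer $k(\nu)$ and strict monotonicity by the slope argument.

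Finally, for part~(2) — which I read as $\mu(\nu) = \mu$, ``$\mu(\nu) = \nu$'' in the statement being an apparent typo — the equivalence is Legendre duality. If $\nu = \nu(\mu)$ is attained at $k^*$, then $\nu = E_{k^*} - k^* \mu$ rearranges to $\mu = (E_{k^*}-\nu)/k^* \geq \mu(\nu)$; conversely, for every $k$, $E_k - k\mu \geq \nu(\mu) = \nu$ gives $(E_k-\nu)/k \geq \mu$, so $\mu(\nu) = \mu$. The reverse implication runs identically with the roles of the two variables swapped. The main obstacle I expect is exactly the attainment of the infimum at the boundary $\mu = e_\infty$, $\nu = \nu^*$, where one must argue by a limit along approximate minimizers, using the attractive-tail hypothesis $\nu^* > 0$ to control the approximation.
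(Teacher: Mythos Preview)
Your proposal is correct and follows essentially the same approach as the paper: both treat $\nu(\mu)$ and $\mu(\nu)$ as lower envelopes of affine families, use coercivity $E_k - k\mu \geq \nu^* + k(e_\infty - \mu)$ to secure finite minimizers for $\mu < e_\infty$, and prove part~(2) by the Legendre-type rearrangement $\nu \leq E_k - k\mu \Leftrightarrow \mu \leq (E_k - \nu)/k$. You are right that the statement should read $\mu(\nu) = \mu$. One minor simplification: the boundary case $\mu = e_\infty$, $\nu = \nu^*$ does not require approximate minimizers --- once part~(1) establishes $\mu(\nu^*) = e_\infty$ and $\nu(e_\infty) = \nu^*$, the equivalence at the endpoint is immediate, which is how the paper closes it.
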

The reciprocity of $\mu(\nu)$ and $\nu(\mu)$ 
is analogous to the equivalence of the grand-canonical and the constant pressure ensembles. 
Indeed, the pressure and the Gibbs energy (per particle) are both obtained as Legendre transforms of the free energy,
one with respect to the density, the other with respect to the volume per particle,
\begin{equation*}
	p(\beta,\mu) = \sup_{\rho} \bigl( \mu \rho - f(\beta,\rho)\bigr),\quad 
	g(\beta,p) = \inf_{v} \bigl( \tilde f(\beta,v) + p v\bigr),
\end{equation*}
with $\tilde f(\beta,v) = v f(\beta,v^{-1})$ the free energy per particle. Equivalence of ensembles here means that $p(\beta,\cdot)$ and $g(\beta,\cdot)$ are reciprocal: the Gibbs energy is the same as the chemical potential. 

Similarly, $\mu(\nu)$ looks like a Legendre transform of $k \mapsto E_k$ with respect to $k$,  
while $\nu(\mu)$ looks like a Legendre transform of $E_k/k$ with respect to $1/k$, which should be compared with  
the relations $v = 1/\rho$, $\tilde f(\beta,v) = f(\beta,\rho) /\rho$. 

\begin{proof}[Proof of Lemma~\ref{lem:aux1}]
	1. The statement for the function $\mu(\nu)$ was proven in~\cite{ckms,jkm}. 
	For $\nu(\mu)$, we note that it
	 is the infimum of a family of decreasing, affine functions and therefore 
	concave and decreasing. Moreover it is almost everywhere differentiable, with 
	derivative $- k(\mu)$, the minimizer of $E_k - k \mu$. In particular 
	$k(\mu) \geq 1$, hence $\nu(\mu)$ is strictly decreasing. 
	%But for every $\mu < e_\infty$, 

	2.  We prove ``$\Rightarrow$''. The proof of the converse is similar. Thus let 
	$\mu \leq e_\infty$ and $\nu = \inf_k(E_k - k\mu)$.  Clearly, $\nu \leq  \inf_k (E_k - k e_\infty) = \nu^*$, and for every $k \in \N$, 
	\begin{equation*}
		\nu \leq E_k - k\mu \ \Rightarrow \mu \leq \frac{E_k - \nu}{k}, 
	\end{equation*}
	whence $\mu \leq \mu(\nu)$. On the other hand, if $\mu<e_\infty$, then 
	$E_k - k \mu \geq \nu^* + k (e_\infty - \mu) \to \infty$ 
	as $k\to \infty$, so there must be a finite $k$ such that $\nu = E_k - k\mu$.
	It follows that $\mu = (E_k-\nu)/k \geq \mu(\nu)$, whence $\mu = \mu(\nu)$. 
	If $\mu = e_\infty$, then $\nu = \nu^*$ and the claim follows from the general 
	inequality $\mu(\nu) \leq e_\infty$. 
\end{proof}

\begin{lemma}[Comparison of thresholds] \label{lem:thresholds}
	Let 
	\begin{equation*}
		\mu_1:= \inf_{k \geq 2} \frac{E_k}{k-1}, \quad \nu_1 := - \mu_1. 
	\end{equation*}
	Then 
	\begin{itemize}
		\item either $\mu_1= e_\infty$ and $\nu^*=- e_\infty = \nu_1$,
		\item or $\mu_1 <e_\infty$ and $\nu ^*  < - e_\infty < \nu_1$ .
	\end{itemize}
\end{lemma}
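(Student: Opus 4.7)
\medskip

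The plan is to reduce everything to the algebraic identity
\[
  E_k - k e_\infty \;=\; (k-1)\Bigl(\frac{E_k}{k-1} - e_\infty\Bigr) + (- e_\infty),\qquad k\geq 2,
\]
which rewrites $E_k - k e_\infty + e_\infty = (k-1)\bigl( E_k/(k-1) - e_\infty\bigr)$. First I would record two preliminary inequalities that hold without any assumption. Taking $k=1$ in the definition of $\nu^*$, and using $E_1=0$, gives $\nu^* \le - e_\infty$. Using $e_\infty = \inf_k E_k/k \le E_k/k$ gives $E_k \ge k e_\infty$, hence $\nu^* \ge 0$. In particular $0 \le \nu^* \le -e_\infty$, and since $e_\infty \le 0$ this is a non-empty interval.

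Next I would use the identity to translate the infimum over $k\geq 2$ defining $\mu_1$ into the corresponding statement about $\nu^*$. For any $k\geq 2$, the sign of $E_k - k e_\infty - (-e_\infty)$ coincides with the sign of $E_k/(k-1) - e_\infty$. Therefore
\[
   \inf_{k\geq 2}\bigl( E_k - k e_\infty\bigr) + e_\infty = \inf_{k \ge 2} (k-1)\Bigl( \frac{E_k}{k-1} - e_\infty\Bigr),
\]
and whether this quantity is zero or strictly negative is controlled by whether $\mu_1 = e_\infty$ or $\mu_1 < e_\infty$. Combining with the $k=1$ contribution $-e_\infty$ then yields
\[
  \nu^* = \min\bigl(-e_\infty,\ \inf_{k\ge 2}(E_k - k e_\infty)\bigr).
\]

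From here the two cases follow immediately. If $\mu_1 = e_\infty$, then $E_k/(k-1) \ge e_\infty$ for all $k \ge 2$, hence $E_k - k e_\infty \ge - e_\infty$ for all such $k$, and combined with the $k=1$ value we obtain $\nu^* = -e_\infty$; moreover $\nu_1 = -\mu_1 = -e_\infty$, so $\nu^* = -e_\infty = \nu_1$. If $\mu_1 < e_\infty$, choose $k_0 \ge 2$ (or a minimizing sequence) with $E_{k_0}/(k_0-1) < e_\infty$. The identity gives $E_{k_0} - k_0 e_\infty < -e_\infty$, so $\nu^* < -e_\infty$; and $\nu_1 = -\mu_1 > -e_\infty$.

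I do not anticipate any real obstacle here: the lemma is essentially a bookkeeping exercise around the identity above, together with the separate inspection of $k=1$ (which cannot appear in the infimum defining $\mu_1$ but does appear in the one defining $\nu^*$). The only point to be careful about is that $\mu_1$ is an infimum over $k\ge 2$ that may not be attained; this is handled by taking minimizing sequences, and the strict inequality in the second case is preserved because $\mu_1 < e_\infty$ means some term $E_k/(k-1)$ is strictly below $e_\infty$, not merely in the limit.
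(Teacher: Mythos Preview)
Your argument is correct and follows essentially the same route as the paper's: both proofs hinge on the identity $E_k - k e_\infty + e_\infty = (k-1)\bigl(E_k/(k-1) - e_\infty\bigr)$, which you display explicitly and the paper uses implicitly in the line $E_p - p e_\infty = (p-1)(\mu_1 - e_\infty) - e_\infty$. Two minor remarks: your phrase ``zero or strictly negative'' should read ``nonnegative or strictly negative'' (the infimum can be strictly positive when $\mu_1 = e_\infty$), and you should note that $\mu_1 \le e_\infty$ (so the two cases are exhaustive), which follows from $E_k/(k-1) \to e_\infty$.
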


\begin{proof}
	Lemma~\ref{lem:aux1} implies the general bounds $\mu_1 \leq e_\infty$ and 
	$\nu_1 \geq \nu^*$. Moreover, by definition, $\nu^*  \leq E_1 - e_\infty = - e_\infty$ and 
	\begin{equation*}
		\nu_1 = \sup_k \frac{E_k}{1-k} \geq \lim_{k \to \infty} \frac{E_k}{1-k} = - e_\infty
	\end{equation*}
	so that $\nu^* \leq - e_\infty \leq \nu_1$. 
	If in addition $\mu_1 = e_\infty$, then $\nu_1 = - e_\infty$ and for all $k \in \N$, 
	$E_k \geq (k-1) e_\infty$ from which we get $\nu^* = \inf_k( E_k - ke_\infty) \geq - e_\infty$. Since in any case $\nu^* \leq e_\infty$, we get $\nu^* = e_\infty$. 

	If $\mu_1 <e_\infty$, then $\nu_1 >-e_\infty$ and there is a $p \in \N$ 
	such that $\mu_1 = E_p/(p-1) < e_\infty$. It follows that 
	\begin{equation*}
	 	\nu^* \leq E_p - pe_\infty = (p-1) (\mu_1 - e_\infty) -  e_\infty < - e_\infty. \qedhere
	\end{equation*}
\end{proof}

\begin{lemma}[``Phase'' diagram] \label{lem:phases}
	\begin{enumerate}
		\item 	For $\mu<\mu_1$, $E_k-k \mu$ has the unique minimizer $k(\mu) =1$. 
	Similarly, for $\nu>\nu_1$, $(E_k - \nu)/k$ has the unique minimizer $k(\nu) =1$. 
		\item For $\mu_1 < \mu < e_\infty$, every minimizer is finite and larger 
		or equal to $2$; similarly for $\nu^*< \nu <\nu_1$. 	
		\item For $\nu<\nu^*$, $(E_k-\nu)/k$ has no finite minimizer. 
	\end{enumerate}
\end{lemma}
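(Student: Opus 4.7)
The plan is to treat each of the three items by directly comparing the value at $k=1$ (which is $E_1-\mu=-\mu$ for the first problem and $(E_1-\nu)/1=-\nu$ for the second) with the values at $k\geq 2$, using the coercivity supplied by $E_k/k\to e_\infty$ and the equivalences already established in Lemma~\ref{lem:aux1}. Throughout, the core identities are the trivial rearrangements
\begin{equation*}
  E_k-k\mu \leq -\mu \iff \frac{E_k}{k-1}\leq \mu,\qquad \frac{E_k-\nu}{k}\leq -\nu \iff \frac{E_k}{k-1}\leq -\nu,
\end{equation*}
valid for $k\geq 2$, so that everything reduces to comparisons against $\mu_1=\inf_{k\geq 2}E_k/(k-1)$.

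For item 1, fix $\mu<\mu_1$. For every $k\geq 2$ the definition of $\mu_1$ gives $E_k/(k-1)\geq \mu_1>\mu$, hence $E_k-k\mu>-\mu$ strictly, so $k=1$ is the unique minimizer. The statement for $\nu>\nu_1=-\mu_1$ is the same inequality in disguise: $(E_k-\nu)/k>-\nu$ iff $E_k>-(k-1)\nu$, and $-\nu<\mu_1\leq E_k/(k-1)$ supplies it for each $k\geq 2$.

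For item 2, fix $\mu_1<\mu<e_\infty$. Coercivity $E_k-k\mu=k(E_k/k-\mu)\to+\infty$ shows the infimum is attained at a finite $k$. To force every minimizer to be $\geq 2$, I would use $\mu>\mu_1$ to pick some $k\geq 2$ with $E_k/(k-1)<\mu$, which gives $E_k-k\mu<-\mu$, so $k=1$ is beaten. The companion case $\nu^*<\nu<\nu_1$ runs in parallel, but the coercivity argument is slightly more delicate: $(E_k-\nu)/k\to e_\infty$, and one needs to know that the limit is not equal to the infimum. This is exactly the content of Lemma~\ref{lem:aux1}, which tells us $\mu(\nu)<e_\infty$ whenever $\nu>\nu^*$, so the infimum is attained at finite $k$. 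The lower bound $k\geq 2$ then follows from $-\nu>\mu_1$ (equivalent to $\nu<\nu_1$) by the same rearrangement as above.

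For item 3, suppose $\nu<\nu^*$. By Lemma~\ref{lem:aux1} we have $\mu(\nu)=e_\infty$. If some finite $k$ were a minimizer, we would have $(E_k-\nu)/k=e_\infty$, i.e.\ $E_k-ke_\infty=\nu$; but the definition $\nu^*=\inf_k(E_k-ke_\infty)$ gives $E_k-ke_\infty\geq \nu^*>\nu$, a contradiction. The only step that requires more than unwinding definitions is the existence of a finite minimizer in the $\nu$-half of item 2; this is the point where the strict separation $\mu(\nu)<e_\infty$ for $\nu>\nu^*$ from Lemma~\ref{lem:aux1} is genuinely needed, and I expect it to be the main obstacle in the sense that everything else is essentially algebraic.
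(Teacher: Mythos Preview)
Your proof is correct and follows essentially the same route as the paper's: both reduce items 1 and 2 to the rearrangement $E_k-k\mu\lessgtr -\mu \iff E_k/(k-1)\lessgtr \mu$ and compare against $\mu_1$, and both handle item 3 via the definition of $\nu^*$. The one place where you are more careful than the paper is the $\nu$-half of item 2: the paper simply says ``similar'', whereas you correctly observe that $(E_k-\nu)/k\to e_\infty$ is not coercive in the same way as $E_k-k\mu\to\infty$, and you invoke Lemma~\ref{lem:aux1} to get $\mu(\nu)<e_\infty$ for $\nu>\nu^*$, which is exactly what is needed to guarantee a finite minimizer.
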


\begin{proof}
	1. By definition, $\mu < \mu_1$ if and only if for all $k \geq 2$, 
	$(E_1 - \mu)/1= - \mu < E_k - k \mu$. Thus for $\mu <\mu_1$, $E_k - k \mu$ 
	has the unique minimizer $k(\mu) =1$. The statement on $(E_k -\nu)/k$ is proven 
	in an analogous way. 

	2. For $\mu <e_\infty$, $E_k - k \mu \geq k (e_\infty-\mu) \to \infty$ as $k\to \infty$, 
	thus $(E_k-k\mu)_{k\in \N}$ reaches its minimum at finite values of $k$. If $k=1$ 
	was a minimizer, we would have $- \mu \leq E_k - k \mu$ for all $k \geq 2$, whence 
	$\mu \leq \mu_1$. Therefore when $\mu >\mu_1$, every minimizer $k(\mu)$ is larger or equal to two.
	The proof for the statement on $(E_k - \nu)/k$ is similar. 

	3. By definition, if $\nu < \nu^*$, then $\nu < E_k - k e_\infty$ for all $k \in \N$, 
	thus $(E_k - \nu)/k > e_\infty$ for all $k$. It follows that 
	$\inf_k (E_k - \nu)/k = e_\infty$ and there is no finite minimizer. 
\end{proof}
% 
% \begin{proof}[Proof]
% The statements 1. and 3. for the $(E_k-\nu)/k$, except the formula for $\nu_1$, were proven in~\cite{ckms,jkm}. The statements for $E_k - k\mu$ follow with the help of the 
% reciprocity from Lemma~\ref{lem:aux1}. 
% 
% It remains to check the nature of the transition at $\mu_1$ if $\mu_1 <e_\infty$. 
% To this aim we note that $\nu(\mu) = - \mu$ if and only if 
% $E_k-k\mu\geq -\mu$ for all $k\geq 2$, i.e., if and only if 
% $E_k/(k-1) \geq \mu$ for all $k\geq 2$, which is precisely the condition $\mu_1 \geq \mu$. 
% \end{proof}

\section{Free energy at low temperature and low density} \label{app:free-energy}

Here we give a sketch of the proof of~\eqref{eq:free-energy}. 
The primary aim is to show that $\rho_0$ can be chosen indeed of the order of the preferred ground state density, $\rho_0 \approx 1/a^d$. For the sake of completeness, we also make a remark on how Eq.~\eqref{eq:free-energy} should be modified for potentials without attractive tail. 

\paragraph{Potentials with attractive tail} 
Let $Z_k^\cl(\beta)$ be the cluster partition function from~\eqref{eq:zk} above. 
Then~\cite[Lemma 3.1]{jkm}
\begin{equation*}
	Z_\Lambda(\beta,N) \leq \sum_{\sum_1^N {kN_k = N}}\ \prod_{k=1}^N \frac{ (|\Lambda| Z_k^\cl(\beta))^{N_k}} {N_k!}.
 \end{equation*}
The sum is over integers $N_1,\ldots,N_N\in\N_0$ such that $\sum_k k N_k = N$.
The integers describe a partition of the $N$ particles into clusters, i.e., groups of particles close in space. Using that for suitable $c>0$ and all $\beta>0$ and 
$k\in \N$, 
\begin{equation*} 
	 Z_k^\cl(\beta) \leq \exp( - \beta E_k) \exp( c k), 
\end{equation*}
\cite[Lemma 4.3]{jkm}
we deduce that $-\beta f(\beta,\rho)$ is upper 
bounded by the supremum of 
\begin{equation*}
	c \rho - \beta \Bigl( \sum_{k\in \N} \rho_k E_k + (\rho-\sum_{k\in \N} k\rho_k) e_\infty \Bigr) 
		+  \sum_{k\in \N} \rho_k (1 - \log \rho_k). 
\end{equation*}
over all $(\rho_k)_{k\in \N} \in [0,\infty)^\N$ such that $\sum_1^\infty k \rho_k \leq \rho$ 
(think $\rho_k = N_k/|\Lambda|$). Next, we observe that the mixing entropy 
can be bounded as $\sum_k \rho_k \log (\rho_k/\rho) \geq - 2\rho$, for all $\rho>0$ 
and all admissible $(\rho_k)$, see~\cite[Lemma  4.2]{jkm}. Therefore we obtain 
\begin{multline*}	
	- \beta f(\beta,\rho) \leq (c+3) \rho \\
	- \beta \inf\Bigl\lbrace (\rho-\sum_{k\in\N} 
			k \rho_k) e_\infty + \sum_{k\in\N} \rho_k (E_k + \beta ^{-1}\log \rho) 
			\,\Big|\, \sum_k k \rho_k \leq \rho \Bigr \rbrace,  
\end{multline*}
whence 
\begin{equation}  \label{eq:flb}
	f(\beta,\rho) \geq - (c+3) \beta^{-1} \rho + \rho \inf_{k\in \N} \frac{E_k + \beta^{-1} \log \rho}{k},
\end{equation}
for all $\beta>0$ and all $\rho>0$. 

It remains to obtain an upper bound for the free energy, or 
a lower bound for the partition function. 
Consider first the case $\nu^*>0$ and $\rho \geq \exp(- \beta \nu^*)$. In this case 
the infimum in Eq.~\eqref{eq:flb} equals $e_\infty$. 
 We lower bound the partition function by integrating only over a small neighborhood of the $N$-particle ground state, 
and deduce $-  f(\beta,\rho) \geq e_\infty - C \beta^{-1} \log \beta$ for suitable $C$ 
and sufficiently large $\beta$. Note that this is possible if $\rho$ is smaller than the density of the ground state, thus  $\rho < 1/a^d$ is sufficient.

Next, consider the case $\rho = \exp(- \beta \nu)< \exp( - \beta \nu^*)$. In this 
case $(E_k - \nu)/k$ has a finite minimizer $k= k(\nu) \in \N$. We lower bound the partition function for a cube $\Lambda = [0,L]^d$ and $N \in k \N$ particles as follows: we split the cube into $M$ small cubes (``cells'')
with side length of the order $ak^{1/d}$ and mutual distance $R$.  
 Here $R$ is of the order of the potential range, and $a k^{1/d}$ is large enough so that a $k$-particle ground state fits into the small cube, as in Assumption~\ref{ass:uniform}.  We can place 
approximately $M =|\Lambda| / (a k^{1/d}+ R)^d$ small cubes in that way. 
We consider configurations in which particles form clusters of size $k$ close to their ground state, 
such that each cluster fits completely into a small cube, and there is at most one cluster per 
cell. 

 We refer the reader to~\cite{jkm} for the details and content ourselves with the following remark: the procedure works provided the number $M$ of available cells 
is larger than $N/k$. This gives the condition 
\begin{equation*}
	\rho < ( a+ R k^{-1/d})^{-d}.
\end{equation*}
Hence if we choose $\beta$ large enough so that $\exp(- \beta \nu^*) \leq (a +R)^{-d}$,
the condition is certainly fulfilled for every $\rho \leq \exp(-\beta \nu^*)$. 

Remember that for $\rho \geq \exp(-\beta \nu^*)$, we are in the first case considered above 
and we only need $\rho \leq 1/a^d\approx$ ground state density. Therefore, in the end, all we need is the 
condition $\rho <\rho_0$ with $\rho_0$ of the order of $1/a^d$. 

\paragraph{Potentials without attractive tail} 
If $v$ has no attractive tail,  we might have $\nu^* =0$, and ground states are not necessarily connected. Set 
$E_1^\cl = E_1 = 0$ and  
\begin{equation*}
	E_k^\cl:= \inf \bigl\lbrace U(x_1,\ldots,x_k)\mid (x_1,\ldots,x_k) \in (\R^d)^k\ R\text{-connected} \bigr\rbrace \geq E_k.
\end{equation*}
The lower bound~\eqref{eq:flb} is still true, but can be improved by replacing $E_k$ by $E_k^\cl$. 
In fact, indices $k$ with $E_k^\cl >E_k$ can be dropped altogether: 
if $E_k^\cl>E_k$, then for suitable $r\geq 2$, $k_1+\cdots + k_r =k$, 
\begin{equation} \label{eq:deco}
	E_k^\cl >E_k=  E_{k_1}^\cl + \cdots + E_{k_r}^\cl 
\end{equation}
Suppose that for some $\nu >0$, 
$(E_{k_i}^\cl - \nu ) /k_i \geq (E_k^\cl - \nu)/k$ for all $i$. Then  
\begin{equation*}
	\sum_1^r E_{k_i}^\cl \geq \sum_1^r \bigl( \nu + k_i \frac{E_k^\cl - \nu}{k} \bigr) 
			= (r-1) \nu + E_k^\cl > E_k^\cl,
\end{equation*}
contradicting~\eqref{eq:deco}. 
Thus 
\begin{equation*}
	\inf_{k\in \N} \frac{E_k^\cl - \nu}{k} = \inf\bigl\lbrace \frac{E_k - \nu}{k} \mid k \in \N,\  E_k = E_k^\cl \bigr \rbrace
\end{equation*}
is the appropriate auxiliary variational problem to be substituted into Eq.~\eqref{eq:free-energy}. 
The density $\rho_0$ can be chosen of the order of $(a+R)^{-1/d}$. 

\paragraph{Non-negative potentials}
When $v \geq 0$, the situation becomes particularly simple: 
we have $\nu^*=0$ and for all $\nu>0$, 
\begin{equation*}
	\inf_{k\in \N} \frac{E_k - \nu}{k} = \inf_{k\in \N} \frac{E_k^\cl - \nu}{k} = - \nu
\end{equation*}
and $k(\nu) = 1$ is the unique minimizer. Eq.~\eqref{eq:free-energy} is replaced by the following: 
For sufficiently low temperature and density $\rho$ smaller than or of the order of $1/R^d$, 
\begin{equation*}
	\bigl| f(\beta,\rho) - \beta^{-1} \rho \log \rho \bigr| \leq C \rho \beta^{-1} \log \beta
\end{equation*}
and we recognize the free energy $\beta^{-1} \rho (\log \rho - 1)$ of an ideal gas.

\section{Cluster expansion in the canonical ensemble} \label{app:clexp}

The virial expansion~\eqref{eq:virial-fe} can be derived directly with the help 
of a cluster expansion in the canonical ensemble; this was recently done in~\cite{pulv-tsag}.
The aim of this appendix is to  complement Sect.~\ref{sec:vircoeff} and explain how Eq.~\eqref{eq:dnbn} (without the exact formula for $a(\vect{m})$) is obtained with the approach from~\cite{pulv-tsag}. 

%Eq.~\eqref{eq:dnbn} for the coefficients for the free energy-density series 
%can be derived directly, without any need to refer to doubly connected graphs. 
%This was recently shown in~\cite{pulv-tsag} with the help of a cluster expansion in the canonical ensemble. The aim of this section is to give some key ideas, without any proofs -- for those we refer the reader to~\cite{pulv-tsag}. 

The starting point is an expression of the canonical partition function as a sum over set partitions $\{X_1,\ldots,X_r\}$, $r \in \N$, of the particle label set $\{1,\ldots,N\}$: 
\begin{equation*}
	Z_\Lambda(\beta,N) = \frac{|\Lambda|^N}{N!} \sum_{\{X_1,\ldots,X_r\}} \zeta_\Lambda(X_1) \cdots \zeta_\Lambda(X_r). 
\end{equation*}
Monomers ($|X|=1$) have activity $1$, sets with higher cardinality have activity 
\begin{equation*}
	|X|= k \geq 2:\ \zeta_\Lambda(X) = \frac{1}{|\Lambda|^k} 
		\sum_{\gamma \in 	\mathcal{G}_\mathrm{c}(k)} \int_{\Lambda^k} \prod_{(ij) \in \gamma} f_{ij}(\vect{x}) \dd x_1 \cdots \dd x_k,  
\end{equation*}
with $\mathcal{G}_\mathrm{c}(k)$ the set of connected graphs with vertices $1,\ldots,k$. 
Note that as $|\Lambda|\to \infty$, for every fixed $k$ and $\beta$, the activity is related to 
the Mayer coefficients as follows: 
\begin{equation*}
	|X| =k:\quad \zeta_\Lambda(X) \sim \frac{k!}{|\Lambda|^{k-1}}\,  b_k(\beta) 
		=: \frac{B_k(\beta)}{|\Lambda|^{k-1}}
\end{equation*}
The formalism of cluster expansions for polymer partition functions gives 
\begin{equation*}
	\log Z_\Lambda(\beta,N) = \log \frac{|\Lambda|^N}{N!} + 
		\sum_{r \geq 1} \frac{1}{r!} \sum_{\stackrel{X = (X_1,\ldots,X_r)}{\conn,\ X_i \in \Gamma_N}} 
				n(X)\, \zeta_\Lambda(X_1) \cdots \zeta_\Lambda(X_r). 
\end{equation*} 
Here $\Gamma_N$ is the collection of subsets of $\{1,\ldots,N\}$ of cardinality at least $2$, and connectedness 
and $n(X)$ are defined as follows: 
 With $X = (X_1,\ldots,X_r) \in \Gamma_N^r$ we associate the graph $G(X)$ with vertices $1,\ldots,r$ and edges 
$\{i,j\}$, $i \neq j$, $X_i \cap X_j \neq \emptyset$. The polymer $X$ is called connected if the graph of overlaps $G(X)$ is, and $n(X) \in \N_0$ is the index of $G(X)$, i.e.,
 $n(X) = n_+(X) - n_-(X)$ with $n_\pm(X)$ the number of connected subgraphs 
of $G(X)$ with an even (odd) number of edges. 

The $N$-dependence in the summation index is slightly inconvenient. We remove it by 
exploiting the invariance with respect to particle relabeling,
\begin{multline*}
	\log Z_\Lambda(\beta,N) = \log \frac{|\Lambda|^N}{N!} + 
		\sum_{n=2}^N \binom{N}{n}
		\sum_{r = 1}^\infty  \frac{1}{r!} \sum_{\stackrel{X = (X_1,\ldots,X_r)}{\conn,\ X_i \in \Gamma_n}}
	n(X) \\
	\times  \zeta_\Lambda(X_1) \cdots \zeta_\Lambda(X_r) \ 
		\mathbf{1}\bigl(\cup_1^r X_i = \{1,\ldots,n\}\bigr).
\end{multline*}
In the thermodynamic limit $N,|\Lambda|\to \infty$, for each cluster $X=(X_1,\ldots,X_r)$ in the sum, 
\begin{multline*}
	\frac{1}{|\Lambda|} \binom{N}{n}  \zeta_\Lambda(X_1) \cdots \zeta_\Lambda(X_r) 
	 \sim \frac{1}{n!} \frac{N^n}{|\Lambda| ^{ 1+ \sum_1^r(k_i-1)}} \\
			\times  B_{k_1}(\beta)\cdots B_{k_r}(\beta), \quad 	k_i = |X_i|.  
\end{multline*}
This goes to zero unless $1+ \sum_1^{r}(k_i - 1) = n$ (note that ``$\geq$'' for every cluster $X$).  
When this condition is satisfied, the components are necessarily distinct, $X_i \neq X_j$, 
and  the overlap graph $G(X)$ is necessarily a 
Husimi graph, i.e., a graph whose doubly connected components are complete graphs. 
Using the fact that the index of the complete graph on $v$ vertices is 
$(-1)^{v-1} (v-1)!$, one finds that 
\begin{equation*}
	n(X) = (-1)^{r-1} \prod_{i=1}^j (v_i-1)! %=: (-1)^{r-1} m(X)
\end{equation*} 
with $j$ the number of doubly connected components of $G(X)$ and $v_1,\ldots,v_j$ 
their respective sizes; thus $v_1+ \cdots + v_j = r$. 

Assuming we can exchange summation and thermodynamic limits, we obtain 
\begin{equation} \label{eq:fbrho}
\begin{aligned}
	- \beta f(\beta,\rho)&=- \rho (\log \rho -1) + \sum_{n=2}^\infty \frac{\rho^n}{n!} B_n(\beta) \\
		& \quad+ \sum_{n=2}^\infty \frac{\rho^n}{n!} \sum_{r\geq 2}   \ \ 
		\sideset{}{^{(n)}} \sum_{X= \{X_1,\ldots,X_r\}} \ 
			n(X) \prod_{i=1}^r B_{|X_i|}(\beta). 
\end{aligned}
\end{equation}
The sum $\sum^{(n)}$ is over collections of subsets $\{X_1,\ldots,X_r\}$  
with $(X_1,\ldots,X_r)$ connected, $|X_i|\geq 2$ for all $i$, and   
such that $\cup_1^r X_i = \{1,\ldots,n\}$ and $n = 1 + \sum_1^r (|X_i|- 1)$. 
Setting 
\begin{equation*}
	m_j:= \bigl | \{ i\mid 1 \leq i \leq r,\ |X_i|=j \} \bigr|, 
\end{equation*}
we have 
\begin{equation} \label{eq:m}
	n-1 = \sum_{i=1}^r (|X_i|-1) = \sum_{j=2}^n (j-1) m_j, \ r = \sum_{j=2}^n m_j,
\end{equation} 
and we obtain the expansion~\eqref{eq:dnbn} with the information $a(\vect{m}) >0$. 
Additional combinatorial steps would be needed to obtain the formula for the $a(\vect{m})$'s, 
but the information that they are strictly positive is all that is needed for 
proofs of Propositions~\ref{prop:virsin} and~\ref{prop:vircon}.

\paragraph{Acknowledgments}
  The author gratefully acknowledges very useful discussion with D. Ueltschi, D. Tsagkarogiannis, E. Presutti, E. Pulvirenti, 
  B. Metzger and W. K{\"o}nig,  and also thanks E. Presutti for hospitality during a visit at the 
  University of Rome ``Tor 
  Vergata.''
  This work was supported by the DFG Forschergruppe 718 ``Analysis and stochastics in complex physical systems.''

%%%%%%%%%%%%%%%%%%%%%%%%%%%%
 \providecommand{\bysame}{\leavevmode\hbox to3em{\hrulefill}\thinspace}
\providecommand{\MR}{\relax\ifhmode\unskip\space\fi MR }
% \MRhref is called by the amsart/book/proc definition of \MR.
\providecommand{\MRhref}[2]{%
  \href{http://www.ams.org/mathscinet-getitem?mr=#1}{#2}
}
\providecommand{\href}[2]{#2}

% \bibliographystyle{amsplain}
% \bibliography{mayer}

\begin{thebibliography}{10}

\bibitem{yfs}
Y.~Au Yeung, G.~Friesecke, and B.~Schmidt, \emph{Minimizing atomic
  configurations of short range pair potentials in two dimensions:
  crystallization in the {W}ulff shape}, Calc. Var. Partial Differential Equations \textbf{44} (2012), 81--100.

\bibitem{bf}
D.~Brydges and P.~Federbush, \emph{A new form of the {M}ayer expansion in
  classical statistical mechanics}, J. Math. Phys. \textbf{19} (1978),
  2064--2067.

\bibitem{bm}
D.~Brydges and Ph.~A. Martin, \emph{Coulomb systems at low density: a review},
  J. Stat. Phys. \textbf{96} (1999), 1163--1330.

\bibitem{cck}
J.~T. Chayes, L.~Chayes, and R.~Koteck{\'y}, \emph{The analysis of the
  {W}idom-{R}owlinson model by stochastic geometric methods}, Comm. Math. Phys.
  \textbf{172} (1995), 551--569.

\bibitem{ckms}
A.~Collevecchio, W.~K{\"o}nig, P.~M{\"o}rters, and N.~Sidorova, \emph{Phase
  transitions for dilute particle systems with {L}ennard-{J}ones potential},
  Comm. Math. Phys. \textbf{299} (2010), 603--630.

\bibitem{cly}
J.~G. Conlon, E.~H. Lieb, and H.-T. Yau, \emph{The {C}oulomb gas at low
  temperature and low density}, Comm. Math. Phys. \textbf{125} (1989),
  153--180.

\bibitem{fefferman}
C.~L. Fefferman, \emph{The atomic and molecular nature of matter}, Rev. Mat.
  Iberoamericana \textbf{1} (1985), 1--44.

\bibitem{hill1}
T.~L. Hill, \emph{Statistical mechanics: {P}rinciples and selected
  applications}, The McGraw-Hill Series in Advanced Chemistry, McGraw-Hill Book
  Co., Inc., New York, 1956.

\bibitem{hill2}
\bysame, \emph{An introduction to statistical thermodynamics}, Addison-Wesley
  Series in Chemistry, Addison-Wesley Publishing Co., Inc., Reading,
  Mass.-London, 1960.

\bibitem{jkm}
S.~Jansen, W.~K{\"o}nig, and B.~Metzger, \emph{Large deviations for cluster
  size distributions in a continuous classical many-body system},
  arXiv:1107.3670v1 [math.PR].

\bibitem{leb-pen}
J.~L. Lebowitz and O.~Penrose, \emph{Convergence of virial expansions}, J.
  Math. Phys. \textbf{5} (1964), 841--847.

\bibitem{lenard}
A.~Lenard, \emph{Exact statistical mechanics of a one-dimensional system with
  {C}oulomb forces}, J. Math. Phys. \textbf{2} (1961), 682--693.

\bibitem{mayerbook} J. E. Mayer and M. G. Mayer, \emph{Statistical Mechanics}, 
	John Wiley and Sons, New York, 1940.

\bibitem{penrose}
O.~Penrose, \emph{Convergence of fugacity expansions for fluids and lattice
  gases}, J. Math. Phys. \textbf{4} (1963), 1312--1320.

\bibitem{pog-ue}
S.~Poghosyan and D.~Ueltschi, \emph{Abstract cluster expansion with
  applications to statistical mechanical systems}, J. Math. Phys. \textbf{50}
  (2009), 053509, 17.

\bibitem{pulv-tsag}
E.~Pulvirenti and D.~Tsagkarogiannis, \emph{Cluster expansion in the canonical
  ensemble}, arXiv:1105.1022v4 [math-ph].

\bibitem{radin}
C.~Radin, \emph{The ground state for soft disks}, J. Stat. Phys. \textbf{26}
  (1981), 365--373.

\bibitem{ruelle-book}
D.~Ruelle, \emph{Statistical mechanics: {R}igorous results}, W. A. Benjamin,
  Inc., New York-Amsterdam, 1969.

\bibitem{ruelle-wr}
\bysame, \emph{Existence of a phase transition in a continuous classical
  system}, Phys. Rev. Lett. \textbf{27} (1971), 1040--1041.

\bibitem{samaj-wr}
L.~{\v S}amaj, \emph{Widom-{R}owlinson model (continuum and lattice)},
  arXiv:0709.0617v1 [cond-mat.stat-mech].

\bibitem{theil}
F.~Theil, \emph{A proof of crystallization in two dimensions}, Comm. Math.
  Phys. \textbf{262} (2006), 209--236.

\bibitem{wr}
B.~Widom and J.~S. Rowlinson, \emph{New model for the study of liquid-vapor
  phase transitions}, J. Chem. Phys. \textbf{52} (1970), 1670--1684.

\bibitem{zeidler} E.~Zeidler, \emph{Applied functional analysis.  
Main principles and their applications}, Springer-Verlag, New York, 1995.

\end{thebibliography}

\end{document}